\numberwithin{equation}{section}
\def\bR{\mathbb{R}}
\def\bE{\mathbb{E}}
\def\bP{\mathbb{P}}
\def\bN{\mathbb{N}}
\def\cD{\mathcal{D}}
\def\cV{\mathcal{V}}
\def\cF{\mathcal{F}}
\def\cN{\mathcal{N}}
\def\cE{\mathcal{E}}
\def\tbf{\textbf }
\def\wt{\widetilde}
\def\wh{\widehat}
\def\be{\begin{equation}}
\def\ee{\end{equation}}
\def\Gl{\Gamma_{\text{loop} }}
\def\Gsc{ \Gamma_{\text{sc}} } 
\def\Gs{\Gamma_{\text{s} }}
\def\Gp{\Gamma_{\text{p} }}
 \def\km{k_{N}}
\def\Wlog{w.l.o.g.\ }  
\def\rhs{r.h.s.\ }
\def\eg{e.g.\ }
\def\st{s.t.\ }
\newtheorem{theorem}{Theorem}[section]  
\newtheorem{cor}[theorem]{Corollary}
\newtheorem{prop}[theorem]{Proposition}
\newtheorem{lemma}[theorem]{Lemma}
\begin{document}

\title{The Two Point Function of the SK Model without External Field at High Temperature}

\author{Christian Brennecke\thanks{Institute for Applied Mathematics, University of Bonn, Endenicher Allee 60, 53115 Bonn, Germany} 
\and  Adrien Schertzer\footnotemark[1] 
\and Changji Xu\thanks{Center of Mathematical Sciences and Applications, Harvard University, Cambridge MA 02138,USA}
\and Horng-Tzer Yau\thanks{Department of Mathematics, Harvard University, One Oxford Street, Cambridge MA 02138, USA}
}
\date{October 3, 2023}

\maketitle

\begin{abstract}
We show that the two point correlation matrix $ \tbf M= (\langle \sigma_i \sigma_j\rangle)_{1\leq i,j\leq N} $ of the Sherrington-Kirkpatrick model with zero external field satisfies
		\[ \lim_{N\to\infty} \| \tbf M - ( 1+\beta^2 - \beta \textbf{G})^{-1}   \|_{\text{op}}  =0 \]
in probability, in the full high temperature regime $\beta < 1$. Here, $\tbf{G}$ denotes the GOE interaction matrix of the model. 
\end{abstract}

\section{Introduction and Main Result}\label{sec:intro}

Consider $N$ spins $ \sigma_i$, $i\in \{1,\dots,N\}$, with values in $\{-1,1\}$ whose interactions are described by the Sherrington-Kirkpatrick \cite{SK} Hamiltonian $H_N:\{-1,1\}^N\to\mathbb{R}$ 
		\begin{equation} \label{eq:HN}
		H_N(\sigma) = \beta \sum_{1\leq i < j\leq N} g_{ij} \sigma_{i}\sigma_j = \frac{\beta}{2} (\sigma, \tbf{G}\sigma ).
		\end{equation}
The symmetric interaction $ \textbf{G} = (g_{ij})_{1\leq i,j\leq N}$ is a GOE matrix, i.e. its entries are (up to symmetry) i.i.d. centered Gaussians of variance $N^{-1}$ for $i\neq j$ (we set $g_{ii} := 0$ for simplicity), and $\beta \geq 0$ denotes the inverse temperature. We assume the $\{g_{ij}\}$ to be realized in some probability space $(\Omega, \cF, \bP)$ and denote the expectation with respect to them by $\mathbb{E}(\cdot)$. We denote the $L^p(\Omega, \cF, \bP)$ norms by $ \| \cdot\|_{L^p(\Omega)} = (\bE\, |\cdot|^p)^{1/p}$, for $p\geq 1$.

In this paper, we are interested in analyzing the behavior of the two point correlation matrix of the model at high temperature. The correlation matrix is given by 
		\be \label{eq:mij}  \tbf M   = \big(\langle \sigma_i \sigma_j\rangle\big)_{1\leq i,j\leq N} \in \bR^{N\times N}, \ee 
where $\langle \cdot \rangle$ denotes from now on the Gibbs expectation which is defined so that
		\be\label{eq:defGibbs}  \langle f \rangle = \frac{1}{Z_N}\frac{1}{2^N} \sum_{\sigma \in \{-1,1\}^N}  f(\sigma) \,e^{H_N(\sigma)} \hspace{0.5cm} \text{with} \hspace{0.5cm} Z_N =   \frac{1}{2^N}\sum_{\sigma \in \{-1,1\}^N}  e^{H_N(\sigma)} \ee
for every observable $ f: \{-1,1\}^N \to \mathbb{R} $.

As discussed recently in \cite{ABSY} (see \cite[Theorem 1.1, Corollary 1.4 \& Remark 1.5]{ABSY} and the references therein), standard mean field heuristics suggests that, at sufficiently high temperature, the correlations $ \langle \sigma_i \sigma_j\rangle, i\neq j$, satisfy the self-consistent equations\footnote{Notice that $  \langle \sigma_i \rangle= 0$ for all $i\in\{1,\dots,N\}$ by the global spin-flip symmetry of the model. }
		\be\label{eq:sceq} 
		\begin{split}
		\langle \sigma_i \sigma_j\rangle &\approx \beta g_{ij} + \sum_{k:k\neq j} \beta g_{ik}\langle \sigma_k \sigma_j\rangle^{(i)}\\
		&\approx  \sum_{k} \beta g_{ik}\langle \sigma_k \sigma_j\rangle - \beta^2 \langle \sigma_i \sigma_j\rangle, 
		\end{split}\ee
where $ \langle \cdot \rangle^{(i)}$ denotes the Gibbs measure after particle $i$ has been removed from the system. The right hand side in the second line of \eqref{eq:sceq} is analogous to the well-known Thouless-Anderson-Palmer equations \cite{TAP} for the magnetizations at high temperature and non-zero external field. It suggests, in a suitable sense to be made precise, that
		\be \label{eq:resid} \tbf M \approx \frac{1}{ 1+\beta^2 - \beta \textbf{G} }. \ee
Since the spectrum of a GOE matrix $\tbf G$ is contained with high probability in $ [-2;2]$, the simple mean field heuristics naturally suggests the (well-known) phase transition at $\beta = 1$, without referring to replica arguments or the powerful Parisi theory \cite{Par1,Par2}. This is our main motivation and giving a rigorous proof of \eqref{eq:resid} for all $\beta <1$ is our main result. In the sequel, we denote by $\|\cdot \|$ the standard Euclidean norm in $\bR^N$ while $ \| \tbf{A}\|_{\text{op}} $  and $\|\tbf{A}\|_{\text{F}}$, for $\tbf{A}= (a_{ij})_{1\leq i,j\leq N}\in\bR^{N\times N}$, refer to the operator and Frobenius norms, respectively. They are defined by
		\[ \|\tbf{A}\|_{\text{op}} := \sup_{\substack{ v\in \bR^N:\|v\| = 1}} \| \tbf A  v  \| \hspace{0.5cm} \text{ and }\hspace{0.5cm} \|\tbf{A}\|_{\text{F}} := \bigg( \sum_{1\leq i,j\leq N} a_{ij}^2\bigg)^{1/2}.  \]
\begin{theorem}\label{thm:main}
Assume that $\beta < 1$ and denote by $\tbf{P} = (p_{ij})_{1\leq i,j\leq N}\in \bR^{N\times N}$ the matrix whose diagonal entries are equal to one and whose off-diagonal entries are given by
		\[ p_{ij} = \sum_{\substack{ \gamma:  \text{ self-avoiding} \\\text{ path from } i \text{ to } j }} \;\prod_{ e \in \gamma  } \beta g_{e} \]
for $i\neq j$, where $ g_e := g_{i_1i_2}$ for edges $e = \{i_1,i_2\}$. 
Then it holds true that
		\[ \lim_{N\to\infty} \| \tbf{M} - \tbf{P}\,\|_{\emph{F}} =0 \hspace{0.8cm} \text{ and }\hspace{0.8cm} \lim_{N\to \infty} \| \tbf P - ( 1+\beta^2 - \beta \textbf{G})^{-1}   \|_{\emph{op}}  =0 \]
in the sense of probability. In particular, we have that 
		\be\label{eq:resID} \lim_{N\to \infty} \| \tbf M - ( 1+\beta^2 - \beta \textbf{G})^{-1}   \|_{\emph{op}}  =0.  \ee
\end{theorem}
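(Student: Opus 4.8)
The plan is to establish the two displayed convergences separately; \eqref{eq:resID} then follows from $\|\mathbf A\|_{\text{op}}\le\|\mathbf A\|_{\text{F}}$ together with the triangle inequality $\|\mathbf M-(1+\beta^2-\beta\mathbf G)^{-1}\|_{\text{op}}\le\|\mathbf M-\mathbf P\|_{\text{F}}+\|\mathbf P-(1+\beta^2-\beta\mathbf G)^{-1}\|_{\text{op}}$. Throughout, the organizing object is the resolvent $R:=(1+\beta^2-\beta\mathbf G)^{-1}$, and the role of the hypothesis $\beta<1$ is precisely that it makes $R$ bounded: since the GOE edge obeys $\|\mathbf G\|_{\text{op}}\le 2+o(1)$ with probability $1-o(1)$ and $2<\beta+\beta^{-1}=(1+\beta^2)/\beta$ for $\beta<1$, one has $(1+\beta^2)\mathbf I-\beta\mathbf G\succeq\big((1-\beta)^2-o(1)\big)\mathbf I$, hence $\|R\|_{\text{op}}\le(1-\beta)^{-2}+o(1)$, and the Neumann series $R=(1+\beta^2)^{-1}\sum_{\ell\ge0}\big(\tfrac{\beta}{1+\beta^2}\mathbf G\big)^{\ell}$ converges in operator norm.

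For the second limit I would compare $\mathbf P$ and $R$ entirely at the level of the matrix $\mathbf G$. From $R-\mathbf P=R\big(\mathbf I-((1+\beta^2)\mathbf I-\beta\mathbf G)\mathbf P\big)$ and $\|R\|_{\text{op}}=O_\beta(1)$ it suffices to show $\big\|((1+\beta^2)\mathbf I-\beta\mathbf G)\mathbf P-\mathbf I\big\|_{\text{op}}\to0$. Writing out the $(i,j)$ entry and decomposing self-avoiding walks from $i$ by their first step, the term $\beta(\mathbf G\mathbf P)_{ij}$ splits into the walks that never return to $i$ — which reassemble into $\mathbf P_{ij}$ — plus the walks that do return; among the latter, the immediate backtracks $i\to k\to i\to\cdots\to j$ contribute $\beta^2\big(\sum_k g_{ik}^2\big)=\beta^2(1+o(1))$ times a self-avoiding walk from $i$ to $j$ in the reduced system, which cancels the $\beta^2\mathbf P_{ij}$ produced by the $(1+\beta^2)\mathbf P_{ij}$ term. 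What remains — longer returns to $i$, the reduced-versus-full discrepancy in the backtrack term, and the diagonal fluctuations $\sum_k g_{ik}^2-1$ — are sums of products of the i.i.d.\ Gaussians $g_e$ over combinatorially constrained families of walks; these are $o(1)$ entrywise, and a graph-counting/high-moment estimate upgrades this to $o(1)$ in operator norm. (Alternatively, one reorganizes the Neumann series for $R$ by collapsing backtracking excursions and checks that the non-self-avoiding contributions, after summing the geometric factors $(1+\beta^2)^{-1-\ell}$, recover the $\beta^{\ell}$-weighted self-avoiding count up to a negligible operator-norm error.)

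For the first limit I would run the cavity method. Deleting one spin gives a TAP-type identity $((1+\beta^2)\mathbf I-\beta\mathbf G)\mathbf M=\mathbf I+\mathbf E_{\mathbf M}$; iterating it — equivalently, peeling off vertices one at a time — automatically produces self-avoiding walks, since a removed vertex cannot be revisited, so after $L$ stages $\mathbf M_{ij}$ equals the truncation $\sum_{\gamma\,\text{SAW}:\,i\to j,\,|\gamma|\le L}\prod_{e\in\gamma}\beta g_e$ plus a length-$L$ remainder (length-$L$ self-avoiding walks tested against a deleted-system correlation of modulus $\le1$) plus an accumulation of the one-spin cavity errors. The second-moment bound $\mathbb E\big\|\sum_{\gamma\,\text{SAW}:\,|\gamma|=\ell}\prod_{e\in\gamma}\beta g_e\big\|_{\text{F}}^2\lesssim\beta^{2\ell}N$, which holds because a self-avoiding walk traverses each edge exactly once so only the diagonal of the Gaussian pairing survives, kills both the remainder and the self-avoiding tail once $L=L_N$ is chosen of order $\log N$. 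The delicate step is the accumulated cavity error: because the cavity field $h_i=\sum_k g_{ik}\sigma_k$ is of order one, $\langle\cosh(\beta h_i)\rangle$ is a genuine constant $\neq1$ and the naive cavity equation is off by a constant relative factor, so the one-spin remainder is a priori of the same size $N^{-1/2}$ as $\langle\sigma_i\sigma_j\rangle$ itself. One therefore works with the Onsager-corrected (TAP) form, in which the $-\beta^2\langle\sigma_i\sigma_j\rangle$ term precisely removes that leading piece, pushes the cavity expansion one further order, and then shows that the residual left over matches, up to $o(1)$ in Frobenius norm, the residual $\mathbf E_{\mathbf P}$ of $\mathbf P$ in the same equation; since $\mathbf M-\mathbf P=R(\mathbf E_{\mathbf M}-\mathbf E_{\mathbf P})$, this matching — rather than the smallness of $\mathbf E_{\mathbf M}$ or $\mathbf E_{\mathbf P}$ individually, which are only $o(1)$ in operator norm — is what delivers $\|\mathbf M-\mathbf P\|_{\text{F}}\to0$. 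Here I would invoke standard high-temperature a priori inputs valid for $\beta<1$: $|\langle\sigma_i\sigma_j\rangle|\le1$, concentration of the $\langle\sigma_i\sigma_j\rangle$, and $\mathbb E\langle R_{12}^2\rangle=O(1/N)$ for the replica overlap $R_{12}$, which yields $\|\mathbf M\|_{\text{F}}=O(N^{1/2})$.

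The main obstacle is $\beta$-uniformity up to the true threshold. As $\beta\uparrow1$ the resolvent norm $\|R\|_{\text{op}}\sim(1-\beta)^{-2}$ diverges, the self-avoiding-walk and Neumann series converge only at the rate $\tfrac{2\beta}{1+\beta^2}\to1$, and the cavity iteration must be carried out for $L_N\sim\log N$ stages, so the graph-counting and high-moment bounds must be free of spurious powers of $\beta^{-1}$ or $\log N$, and the cavity error — after the Onsager cancellation and the extra order — must remain summable against $L_N$ while still matching $\mathbf E_{\mathbf P}$ term by term. Carrying out this sharp, $\beta$-uniform bookkeeping of the cavity expansion, together with the operator-norm (not merely Frobenius) control of the self-avoiding-walk residual in the second limit, is the technical heart of the argument.
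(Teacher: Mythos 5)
Your overall architecture is the right one, and your treatment of the second limit starts exactly as the paper does: set $\tbf Q:=\tbf P\,(1+\beta^2-\beta\tbf G)-\mathrm{id}$, use boundedness of the resolvent for $\beta<1$, and identify the two cancellation mechanisms (extension of a self-avoiding path by one edge, and the backtrack term $\beta^2\sum_k g_{kj}^2\approx\beta^2$ against $\beta^2 p_{ij}$). However, there is a genuine gap at the step where you claim the leftover terms are ``$o(1)$ entrywise, and a graph-counting/high-moment estimate upgrades this to $o(1)$ in operator norm.'' The dominant leftover is the contribution of paths from $i$ to $k$ that pass through the vertex $j$ without using the edge $\{j,k\}$; after resummation this is a sum over pairs (self-avoiding path from $i$ to $j$) $\times$ (loop through $j$), whose entries are typically of size $N^{-1}$. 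Its Frobenius norm is therefore of order \emph{one}, and no entrywise moment bound, however high, can force its operator norm to vanish. The paper resolves this by a structural factorization: this error equals $(\tbf P-\mathrm{id})\,\tbf Q^{(2)}$ up to genuinely small corrections, where $\tbf Q^{(2)}$ is the diagonal matrix of loop sums through a single vertex, whose operator norm is a maximum of $N$ scalars each of size $N^{-1/2}$ and hence small. Without some such factorization (or a trace-moment computation for the full correlated matrix, which you do not indicate how to perform), this step fails.

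The first limit is where your proposal departs most from the paper, and where the argument is not yet a proof. You propose iterating the cavity/TAP identity $((1+\beta^2)-\beta\tbf G)\tbf M=\mathrm{id}+\tbf E_{\tbf M}$ for $L_N\sim\log N$ steps and then ``showing that the residual matches, up to $o(1)$ in Frobenius norm, the residual $\tbf E_{\tbf P}$.'' But that matching \emph{is} the theorem: since $\tbf M-\tbf P=R(\tbf E_{\tbf M}-\tbf E_{\tbf P})$, proving $\|\tbf E_{\tbf M}-\tbf E_{\tbf P}\|_{\text F}\to0$ is equivalent to what you want, and the known TAP-type inputs (entrywise residuals of size $O(N^{-1})$, as in the work you would be invoking) only yield $O(1)$ Frobenius norm for each residual separately, with no mechanism for cancellation between them. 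Moreover, controlling the accumulated one-spin cavity errors over $\log N$ iterations uniformly up to $\beta=1$ is an open problem in its own right. The paper avoids the cavity method entirely: it differentiates the Aizenman--Lebowitz--Ruelle high-temperature expansion of $\log Z_N$ to get an \emph{exact} identity $\langle\sigma_i\sigma_j\rangle=p_{ij}+\sum_k(\text{errors})$, where the errors are explicit sums over closed graphs, and then bounds each in $L^2(\Omega)$ at scale $N^{-3/2+\epsilon}$ so that the full Frobenius norm is $o(1)$. The hardest of these (the contribution of a loop through $\{i,j\}$ sharing at least two vertices with another closed graph) requires a dedicated analysis of non-orthogonal products of graph weights, via a decomposition of the multigraph into its multiplicity-one and multiplicity-two parts; nothing in your sketch plays this role. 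To make your proof work you would either need to supply the missing cancellation argument for $\tbf E_{\tbf M}-\tbf E_{\tbf P}$, or replace the cavity iteration by an exact representation of $\tbf M$ of the kind the paper extracts from the cluster expansion.
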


\pagebreak
\noindent \textbf{Remarks:} 
\begin{itemize}

\item[1)] As a corollary of Theorem \ref{thm:main}, it follows that the norm of $ \tbf{M}$ is given by 
		\[  \lim_{N\to \infty} \| \tbf M    \|_{\text{op}} =  \lim_{N\to \infty} \|   ( 1+\beta^2 - \beta \textbf{G})^{-1}   \|_{\text{op}}  = (1-\beta)^{-2}, \]
in the sense of probability. This follows, for instance, from \cite[Theorem C]{BY}.
 
\item[2)] Observe that the matrix $\tbf P$ in Theorem \ref{thm:main}, that approximates $ \tbf M$ to leading order in terms of the Frobenius norm, emerges naturally by iterating the identity in the first line of \eqref{eq:sceq} (and using that $ \langle \sigma_i \sigma_i\rangle =1$ for the diagonal entries of $\tbf M$). 
 
\item[3)] The boundedness of $\bE \|\tbf{M}\|_{\text{op}}$ has been conjectured in \cite[Conjecture 11.5.1]{Tal2}, based on the simple observation that $m_{ij}\approx \beta g_{ij}$ to leading order in $\beta$ (for $i\neq j$). While writing up our manuscript, we noticed that this has been verified quite recently in an independent work \cite{AG}, combining the TAP heuristics \eqref{eq:resid} with well established higher-order asymptotics for overlap moments \cite{Tal2,BMCRT}. The main result of \cite{AG} shows that $\bE \| \tbf M\,( 1+\beta^2 - \beta \textbf{G})- \text{id}_{\bR^N}\|_{\text{F}}=O(1)$, implying $ \bE \| \tbf{M}\|_{\text{op}} =O(1)$. In particular, the entries of $\tbf M\,( 1+\beta^2 - \beta \textbf{G})- \text{id}_{\bR^N}$ are typically of size $O(N^{-1})$. This information is, however, not enough to conclude the norm convergence \eqref{eq:resID} in Theorem \ref{thm:main}. Inspired by \cite{ALR}, our proof of \eqref{eq:resID} is instead based on a microcanonical analysis that determines the entries of $\tbf{M}$ explicitly in terms of $\tbf{G}$. This enables us not only to determine the typical size of the entries of $\tbf M\,( 1+\beta^2 - \beta \textbf{G})- \text{id}_{\bR^N}$, but also to conclude \eqref{eq:resID}. In particular, our arguments are independent of \cite{Tal2,BMCRT,AG}.    

\item[4)] The recent paper \cite{BB} establishes a logarithmic Sobolev inequality for the Gibbs measure induced by $H_N$ under the stronger high temperature condition $\beta < 1/4$. This implies in particular the boundedness of the norm of $\tbf{M}$. Whether the results of \cite{BB} or, possibly simpler, a spectral gap inequality can be proved for all $\beta <1$ remains an interesting open question. Theorem \ref{thm:main} may be viewed as an initial step in this direction as it implies a spectral gap inequality on the space of linear combinations of the magnetizations $\sigma_i$.  

\item[5)] As mentioned earlier, we assume for simplicity that $g_{ii}=0$. Note that Theorem \ref{thm:main} remains valid if $ \tbf G$ in \eqref{eq:HN} was replaced by a standard GOE matrix $\wt{ \tbf G}$ which is such that $\wt g_{ij} = g_{ij}$ for $i\neq j$ and which has i.i.d. Gaussian diagonal entries $\wt g_{ii}$ of variance $1/N$. In this case, the diagonal term in the interaction $ \frac\beta2 \sum_{i,j=1}^N g_{ij} \sigma_i\sigma_j$ is a constant independent of $\sigma \in \{-1,1\}^N$ that does not affect $\tbf M$ and  one has 
		\[\lim_{N\to \infty} \|   ( 1+\beta^2 - \beta \textbf{G})^{-1} - ( 1+\beta^2 - \beta \wt{\textbf{G}}) ^{-1}  \|_{\text{op}} =0 \]
in the sense of probability. 

\item[6)] Theorem \ref{thm:main} is trivial if $\beta =0$. We therefore assume from now on that $\beta > 0$. 

\end{itemize}

Let us briefly motivate the strategy of our proof. As shown in \cite{ALR}, which studies the fluctuations $ \Phi = \log Z_N - \frac14 \beta^2 N$ of the log partition function around its leading order contribution, one may think of $ \log Z_N$ for $\beta<1$ heuristically as 
		\be\label{eq:heuristic} \begin{split}
		\log Z_N &=   \sum_{1\leq i < j \leq N} \log\cosh(\beta g_{ij})  + \log \frac1{2^N}\sum_{\sigma\in \{-1,1\}^N} \prod_{1\leq i< j\leq N} \big (1+ \tanh(\beta g_{ij} ) \sigma_i\sigma_j  \big)\\
		& \approx   \sum_{1\leq i < j \leq N} \log\cosh(\beta g_{ij}) +  \sum_{\gamma \;\text{simple loop}}  \log (1 + w(\gamma) ), 
		 \end{split}\ee
where by simple loops, we mean simple, connected graphs $\gamma$ with vertices in $\{1, \dots, N\}$ each having degree two (see section \ref{sec:notation} for the details) and with corresponding weights
		\be \label{eq:gweight} w(\gamma) = \prod_{\{ i,j\}\in \gamma}  \tanh(\beta g_{ij}).  \ee
As a consequence of \eqref{eq:heuristic} we expect that up to negligible errors, we have for $i\neq j$ that
		\be \label{eq:heuristics2} 
		\begin{split}
		\langle \sigma_i\sigma_j\rangle = \beta^{-1} \partial_{g_{ij}} \log Z_N &\approx \tanh(\beta g_{ij}) +  \sum_{\{i,j\}\in\gamma \;\text{simple loop}}  \frac{w(\gamma)}{\tanh(\beta g_{ij})} \frac1{1+w(\gamma)}\\
		& \approx  \beta g_{ij}  +  \sum_{\{i,j\}\in\gamma \;\text{simple loop}} \; \prod_{\substack { e\in \gamma:e\neq \{i,j\}  }} \beta g_e\\
		& = \sum_{\substack{ \gamma:  \text{ self-avoiding} \\\text{ path from } i \text{ to } j }} \;\prod_{ e \in \gamma  } \beta g_{e},
		\end{split}
		\ee
where we used that removing the edge $\{i,j\}$ from a cycle yields a unique self-avoiding path from vertex $i$ to vertex $j$ (note indeed that there is a one-to-one correspondence between cycles containing $ \{i,j\}$ and self-avoiding paths from $i$ to $j$ of length greater or equal to two). In other words, we expect that the correlation between spins $\sigma_i$ and $\sigma_j$ can be written as the sum over weights of self-avoiding paths from $i$ to $j$. 

On the other hand, by standard mean field arguments, we also expect that 
		\be \label{eq:heur1}  \big(\tbf M( 1+\beta^2 - \beta \textbf{G})-\text{id}_{\bR^N} \big)_{ij} \approx 0.    \ee
That this is consistent with \eqref{eq:heuristics2} corresponds, on a heuristic level, to the equivalence of the two approximations in \eqref{eq:sceq}, and it is indeed easily checked that e.g.\
		\be  \label{eq:heur2} \bE   \| \tbf P ( 1+\beta^2 - \beta \tbf G) - \text{id}_{\bR^N}  \|_{\text F}^2 = O(1). \ee
For completeness, we outline the quick proof of \eqref{eq:heur2} in Appendix \ref{appx}. By Markov, such a bound implies \eqref{eq:heur1}, in the sense that for $i\neq j$ we have with high probability that
		\[ \big(\tbf P( 1+\beta^2 - \beta \textbf{G})-\text{id}_{\bR^N} \big)_{ij} = O(N^{-1}).   \]

Let us remark that proving the approximate identity \eqref{eq:heur1} based on Frobenius norm bounds as in \eqref{eq:heur2} is part of the main strategy of \cite{AG}. More precisely, \cite{AG} verifies \eqref{eq:heur2} directly with $\tbf P$ replaced by $\tbf M$, based on previously established overlap asymptotics. The quantity $O(1)$ is furthermore shown to satisfy $O(1)=C_\beta + O(N^{-1/2})$ for some explicit value $C_\beta>0$ and the uniform Frobenius norm bound is used to deduce a uniform operator norm bound on $\tbf M$, in the sense that $ \bE \|\tbf M\|_{\text{op}} = O(1)$. 

While such a strategy implies the operator norm boundedness of $ \tbf M$, the information obtained from Frobenius norm bounds as in \eqref{eq:heur2} is not enough to deduce the operator norm convergence  
		\be \label{eq:heur4}  \lim_{N\to\infty} \| \tbf M -  ( 1+\beta^2 - \beta \textbf{G})^{-1}\|_{\text{op}} = 0. \ee
To approach this problem, the heuristic remarks from above suggest to analyze the matrix $\tbf P( 1+\beta^2 - \beta \textbf{G}) - \text{id}_{\bR^{N }} $ in more detail, anticipating that $\tbf M \approx \tbf P$. Following this idea, we derive an exact identity of the form
		\[  \tbf P( 1+\beta^2 - \beta \textbf{G}) - \text{id}_{\bR^{N }}    = \tbf P  \, \tbf{Q}'   + \tbf{Q}'' \]
for two errors $ \tbf{Q}', \tbf{Q}''$ which turn out to have vanishing operator norm in the limit $N\to\infty$ if $\beta<1$. Using the boundedness of the resolvent $( 1+\beta^2 - \beta \textbf{G})^{-1} $ for $\beta<1$ and rewriting 
		\[ \tbf P  \, \tbf{Q}' = \big(\tbf P( 1+\beta^2 - \beta \textbf{G}) - \text{id}_{\bR^{N }}\big)  ( 1+\beta^2 - \beta \textbf{G})^{-1}\, \tbf{Q}' +  ( 1+\beta^2 - \beta \textbf{G})^{-1} \tbf{Q}', \]
this implies the norm convergence \eqref{eq:heur4} with $\tbf M $ replaced by $\tbf P$. To conclude \eqref{eq:heur4}, our main task is therefore reduced to justifying the approximation $\tbf M \approx \tbf P$ in a sufficiently strong sense. Based on an exact graphical representation of $\tbf M$, that follows from the identity $\langle \sigma_i\sigma_j\rangle = \beta^{-1} \partial_{g_{ij}} \log Z_N$  and the graphical representation of $Z_N$ from \cite{ALR}, we do this in the Hilbert-Schmidt sense $ \| \tbf M - \tbf P\|_{\text{F}} \to 0$ as $N\to \infty$. From the technical point of view, this is our main contribution, thereby extending the high temperature analysis of \cite{ALR} and providing a direct explanation of the emergence of the resolvent of $\tbf{G}$.

The paper is structured as follows. In Section \ref{sec:notation} we set up the notation, we collect several preliminary results and we explain in detail the reduction step from $\tbf M$ to $\tbf P$. Using this information, we conclude Theorem \ref{thm:main} in Section \ref{sec:proofmain} by comparing the matrix $\tbf{P}$ with the resolvent $ (1+\beta^2- \beta \tbf{G})^{-1}$.

\section{Setup and Reduction Step}\label{sec:notation}

In this section, we set up the notation following \cite{ALR}, we collect several preliminary results and we determine the main contribution $\tbf{P}$ to the two point correlation matrix $\tbf{M}$.

We start with some notation. In the sequel, $\langle \cdot\rangle_0$ denotes the expectation with regards to the law of $N$ i.i.d. Bernoulli variables $\sigma_i$, for $i\in \{1,\dots, N\}$. Notice that $\langle \cdot \rangle_0$ is obtained from $\langle \cdot\rangle$, defined in \eqref{eq:defGibbs}, by setting $\beta =0$ in \eqref{eq:HN}. 

We denote by $\Gs=\Gs[N]$ the set of simple graphs with vertices in $[N]:=\{ 1,\dots, N \}$: a graph $\gamma = (\cV_\gamma, \cE_\gamma) \in \Gs=\Gs[N]$ is called simple for subsets $ \cV_\gamma\subset [N] $ (the vertex set of $\gamma$) and $ \cE_\gamma\subset \big\{ \{i,j\}:   i,j \in \cV_{\gamma},\, i\neq j  \big\}$ (the set of edges of $\gamma$) if $ \gamma$ contains no isolated vertices (for every $i\in \cV_\gamma$ there exists some $e\in\cE_\gamma$ with $i\in e$) and if the multiplicity $n_{ij}(\gamma)\in \{0,1\}$ is at most one, for each edge $\{i,j\}$. We set $n_{ii}(\gamma):= 0$ for all $i\in \{1,\dots, N\}$ (self-loops are not allowed). $ \Gsc = \Gsc([N])\subset\Gs $ denotes the set of simple, closed graphs with vertices in $ [N] $: a graph $\gamma \in \Gsc$ is called a simple, closed graph if it is simple and if the degree
		\[  n_i(\gamma) =  \sum_{ j=1}^N n_{ij}(\gamma)\in 2\mathbb{N}_0 \]
of each vertex $i\in \cV_\gamma$ is even. By convention, we include $\emptyset \in \Gsc$ with weight $w(\emptyset) := 1$.

From \cite{ALR}, we recall the graphical representation of the log partition function 
		\be\begin{split} \label{eq:ZNgraph}  
		\log Z_N 
		& = \sum_{1\leq u<v\leq N} \log\cosh(\beta g_{uv}) + \log \bigg\langle  \prod_{1\leq u<v\leq N} (1+\tanh(\beta g_{uv}) \sigma_u\sigma_v) \bigg\rangle_0\\
		& = \sum_{1\leq u<v\leq N} \log\cosh(\beta g_{uv}) + \log \sum_{\gamma\in \Gsc } w(\gamma)\\
		& =: \sum_{1\leq u<v\leq N} \log\cosh(\beta g_{uv}) + \log \wh Z_N,
		\end{split}\ee
where $ w(\gamma)$ is defined as in \eqref{eq:gweight} (for every $ \gamma\in\Gamma_s$). Analogously, one obtains a graphical representation of the two point correlation functions
		\be \begin{split} \label{eq:sigmaij}  
		\langle \sigma_i\sigma_j \rangle& =\beta^{-1} \partial_{g_{ij}} \log Z_N =  \tanh(\beta g_{ij}) + (1-\tanh^2(\beta g_{ij}) )\; \wh Z_N^{-1} \!\!\sum_{\substack{ \gamma\in \Gsc:\\\{i,j\}\in \gamma} } \!\frac{ w(\gamma) }{\tanh(\beta g_{ij})}
		\end{split}\ee
for $i\neq j$ (recall that $\langle \sigma_i\sigma_i \rangle=1$).

Besides $\Gsc$, the set $\Gl \subset \Gsc$ of simple loops (or, equivalently, cycles) will be of particular importance. It is defined as the set 
		\[ \Gl = \big\{\gamma \in \Gsc: n_i(\gamma) =2 \;\forall \; i\in \cV_\gamma \text{ and } \gamma \text{ is connected} \big\}.\]
Recall that a graph is connected if for every pair $i,j \in \cV_\gamma$ of vertices, $\gamma$ contains a path from $i$ to $j$. That is, there exist vertices $v_1, \dots, v_k\in \cV_\gamma$ so that 
		\[  \big( \{i, v_1 \},  \{v_{1}, v_2 \},\dots,  \{v_{k-1}, v_k\}, \{v_k, j \}\big) \in \cE_{\gamma}^k. \]
We denote by $\Gp \subset \Gs$ the set of simple, self-avoiding paths, that is, the set of connected $\gamma\in \Gs$ in which exactly two vertices $ i,j\in [N]$, $i\neq j$, have degree $n_i(\gamma) = n_j(\gamma)=1$ (corresponding to the end points of the path) and the remaining vertices have degree two. The set of self-avoiding paths with endpoints $i\neq j\in[N] $ is denoted by $ \Gp^{ij} = \Gp^{ji} $. As already mentioned in the introduction, note that every path in $ \Gp$ with at least two edges can be identified uniquely with a cycle $\gamma\in\Gl$ by connecting its end points and, conversely, removing an edge from a given cycle in $ \Gl$ defines a unique path $ \Gp$ with at least two edges and whose end points are the vertices of the edge that has been removed.

An important result that is frequently used below is Veblen's theorem \cite[Theorem 2.7]{BM} which states that every $ \gamma \in \Gsc$ is equal to an edge-disjoint union of a finite number of cycles $ \gamma_1, \dots, \gamma_k \in \Gl$, 
which we write as 
		\be \label{eq:gsccycle} \gamma = \gamma_1 \circ \gamma_2\circ \ldots \circ \gamma_k. \ee

If $e = \{i,j\}\in \cE_\gamma$, we set $g_e := g_{ij}$, and from now on, by slight abuse of notation, we frequently write $e\in \gamma$ for edges $e\in \cE_\gamma$. Similarly, we write $|\gamma| :=|\cE_\gamma|$ to denote the number of edges in $\gamma$ and by $ \gamma\cap \gamma'=\emptyset$, we abbreviate that two graphs $\gamma, \gamma'$ are edge-disjoint (i.e. $\cE_\gamma\cap \cE_{\gamma'}=\emptyset $). Finally, constants independent of $N$ are typically denoted by $C, C', c, c', etc.$ and they may vary from line to line in our estimates.

Let us now prepare the analysis of the two point functions by collecting several preliminary results. We first recall a crucial observation of \cite{ALR} that tells us that large graphs have exponentially decaying $L^2(\Omega)$ norm, if $\beta < 1$. In the sequel, we frequently use this result with mildly diverging large graph cutoff $  (\log N)^{1+\epsilon}\to\infty  $ as $N\to\infty$ for suitable $\epsilon>0$. 
\begin{lemma}(\cite[Lemma 3.3]{ALR}) \label{lm:lrg}
Let $\beta< 1$, then there exists a constant $C=C_\beta>0$, which is independent of $N$, such that 
		\[ \bE \bigg( \sum_{\gamma\in \Gamma_{\emph{sc}}: |\gamma|\geq k} w(\gamma)  \bigg)^2 \leq  C \exp\big(  -  k\log \beta^{-1}\big)    \]
for every $k\geq 0$. Similarly, for every $i,j\in[N]$ with $i\neq j$, it holds true that
		\[ \bE \bigg( \sum_{ \substack{ \gamma\in \Gamma_{\emph{sc}}: \\ \{i,j\}\in\gamma, |\gamma|\geq k }} \frac{ w(\gamma)}{\tanh(\beta g_{ij})}  \bigg)^2 \leq  C \exp\big(  -  k\log \beta^{-1}\big).    \]
\end{lemma}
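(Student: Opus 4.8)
The plan is to expand the square, use independence together with the oddness of $\tanh$ to collapse the double sum to a diagonal one over $\Gsc$, and then bound that diagonal sum by a crude entropy estimate built on Veblen's theorem.

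First I would write $\bE\big(\sum_{\gamma\in\Gsc:|\gamma|\ge k}w(\gamma)\big)^2=\sum_{\gamma,\gamma'}\bE[w(\gamma)w(\gamma')]$, the sum over $\gamma,\gamma'\in\Gsc$ with at least $k$ edges. Since the edge variables $g_e$ are independent, $\bE[w(\gamma)w(\gamma')]=\prod_e\bE[\tanh(\beta g_e)^{n_e(\gamma)+n_e(\gamma')}]$; as each multiplicity lies in $\{0,1\}$ and $\bE[\tanh(\beta g_e)]=0$ by symmetry of $g_e$, the product vanishes unless $\cE_\gamma=\cE_{\gamma'}$, i.e. unless $\gamma=\gamma'$. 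Hence the left-hand side equals $\sum_{\gamma\in\Gsc:|\gamma|\ge k}t^{|\gamma|}$, where $t:=\bE[\tanh^2(\beta g_{12})]\le\beta^2/N$ by $|\tanh x|\le|x|$; the identical computation turns the second expression into $\sum_{\gamma\in\Gsc:\{i,j\}\in\gamma,\,|\gamma|\ge k}t^{|\gamma|-1}$. Since $\beta<1$, on $\{|\gamma|\ge k\}$ we have $\beta^{-|\gamma|}\ge\beta^{-k}$, so $\mathbf 1[|\gamma|\ge k]\le\beta^k\beta^{-|\gamma|}$, and combined with $t\le\beta^2/N$ this reduces the first estimate to the claim $\sum_{\gamma\in\Gsc}(\beta/N)^{|\gamma|}=O_\beta(1)$ and the second to $\sum_{\gamma\in\Gsc:\{i,j\}\in\gamma}(\beta/N)^{|\gamma|}\le C_\beta/N^2$.

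For the entropy bound, Veblen's theorem writes every $\gamma\in\Gsc$ as an edge-disjoint union of finitely many cycles; since edge-disjoint cycles are automatically distinct, this yields an edge-count-preserving surjection from finite sets of pairwise edge-disjoint cycles onto $\Gsc$, whence $\sum_{\gamma\in\Gsc}x^{|\gamma|}\le\sum_{r\ge0}\frac1{r!}\big(\sum_{\gamma'\in\Gl}x^{|\gamma'|}\big)^r=\exp\big(\sum_{\gamma'\in\Gl}x^{|\gamma'|}\big)$ for $x\ge0$. Using $\#\{\gamma'\in\Gl:|\gamma'|=\ell\}\le N^\ell/(2\ell)$ with $x=\beta/N$ bounds the exponent by $\sum_{\ell\ge3}\beta^\ell/(2\ell)\le-\tfrac12\log(1-\beta)$, so $\sum_{\gamma\in\Gsc}(\beta/N)^{|\gamma|}\le(1-\beta)^{-1/2}$ and the first estimate follows. (Alternatively, $\sum_{\gamma\in\Gsc}x^{|\gamma|}=\langle\prod_{u<v}(1+x\sigma_u\sigma_v)\rangle_0\le e^{-xN/2}\langle e^{\frac x2(\sum_i\sigma_i)^2}\rangle_0=e^{-xN/2}\,\bE_Z\cosh^N(\sqrt x\,Z)\le(1-xN)^{-1/2}$ for $xN<1$, by Hubbard--Stratonovich, avoiding Veblen.) For the edge-fixed sum, Veblen peels off the unique cycle through $\{i,j\}$, giving $\sum_{\gamma\in\Gsc:\{i,j\}\in\gamma}x^{|\gamma|}\le\big(\sum_{\gamma_0\in\Gl:\{i,j\}\in\gamma_0}x^{|\gamma_0|}\big)\exp\big(\sum_{\gamma'\in\Gl}x^{|\gamma'|}\big)$, and since a length-$\ell$ cycle through $\{i,j\}$ is a self-avoiding $i$--$j$ path with $\ell-1$ edges — at most $N^{\ell-2}$ of them — the first factor is $\le N^{-2}\sum_{\ell\ge3}\beta^\ell\le\beta^3/(N^2(1-\beta))$; tracking the factor $t^{-1}\le N/\beta^2$ lost by dividing by $\tanh(\beta g_{ij})$ then yields the second estimate (in fact with an additional $1/N$).

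The only substantive step is the entropy bound of the previous paragraph: one must know that the number of simple closed graphs with $m$ edges grows no faster than $(CN)^m$ in a manner summable against the weight $(\beta/N)^m$, and this is exactly what Veblen's decomposition — or equivalently the Hubbard--Stratonovich identity for $\langle\prod_{u<v}(1+x\sigma_u\sigma_v)\rangle_0$ — provides precisely when $\beta<1$; everything else is elementary bookkeeping, with the borderline $\beta\uparrow1$ visible in the divergence of $-\log(1-\beta)$.
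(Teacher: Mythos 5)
Your proof is correct and rests on the same two pillars as the paper's: the $L^2(\Omega)$ orthogonality of distinct graph weights, which collapses the square to a diagonal sum, and Veblen's decomposition combined with the cycle count $\#\{\gamma'\in\Gl:|\gamma'|=\ell\}\leq N^{\ell}/(2\ell)$, which makes that diagonal sum summable. Where you genuinely diverge is in how the constraint $|\gamma|\geq k$ is converted into the factor $\beta^{k}$. The paper runs a two-case argument: for small $k$ the decay is absorbed into the constant, and for large $k$ they tilt by a $k$-dependent parameter $e^{\epsilon}$ with $\epsilon=\log\beta^{-2}(1-1/\sqrt{2k})$ chosen just below $\log\beta^{-2}$ so that the tilted product $\prod_{\gamma\in\Gl}(1+e^{\epsilon|\gamma|}\bE w^{2}(\gamma))$ stays controlled, yielding $\exp(8\sqrt{k}-k\log\beta^{-2})\leq\beta^{k}$. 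You instead tilt uniformly by exactly $\beta^{-1}$ via $\mathbf{1}_{\{|\gamma|\geq k\}}\leq\beta^{k}\beta^{-|\gamma|}$, which replaces the weight $(\beta^{2}/N)^{|\gamma|}$ by $(\beta/N)^{|\gamma|}$ --- still summable for $\beta<1$, with the clean constant $(1-\beta)^{-1/2}$ --- and dispenses with the case analysis entirely. Since the lemma only asserts decay at rate $\beta^{k}$ rather than the nearly-$\beta^{2k}$ rate the paper's tilting could extract, your version is simpler and loses nothing. Your parenthetical Hubbard--Stratonovich bound $\sum_{\gamma\in\Gsc}x^{|\gamma|}\leq(1-xN)^{-1/2}$ is also valid and is a genuinely Veblen-free alternative for the entropy estimate.

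Two small points. First, ``Veblen peels off the \emph{unique} cycle through $\{i,j\}$'' overstates things --- the decomposition need not be unique --- but all you use is the existence of \emph{some} decomposition $\gamma=\gamma_{0}\circ\tau$ with $\{i,j\}\in\gamma_{0}\in\Gl$ and $\tau\in\Gsc$, which gives the overcounting inequality you want. Second, the inequality ``$t^{-1}\leq N/\beta^{2}$'' has the wrong direction as written: from $t\leq\beta^{2}/N$ one only gets $t^{-1}\geq N/\beta^{2}$. What you need is the matching lower bound $t\geq c\,\beta^{2}/N$ (true since $\tanh^{2}(x)=x^{2}(1+O(x^{2}))$), giving $t^{-1}\leq C N/\beta^{2}$; alternatively, avoid $t^{-1}$ altogether by bounding the diagonal sum $\sum t^{|\gamma|-1}\leq\sum(\beta^{2}/N)^{|\gamma|-1}$ directly and tilting as before. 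Either fix is one line, so this is cosmetic rather than a gap.
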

\begin{proof} For the reader's convenience, we recall the quick argument together with some useful facts about the weights $w(\gamma)$. We consider two cases. 

First, if $ k \leq \max\big( \frac12 (1-\beta^2)^{-2},  (16)^2/(\log \beta^{-2})^2 \big)$, we bound
		\[\begin{split}
		\bE \bigg( \sum_{\gamma\in \Gsc: |\gamma|\geq k} w(\gamma)  \bigg)^2  = \sum_{\gamma\in \Gsc: |\gamma|\geq k} \bE w^2(\gamma)  \leq \prod_{\gamma \in \Gl} \big(1+ \bE w^2(\gamma)\big) &\leq \exp\bigg( \sum_{ j=3 }^\infty \frac{\beta^{2j}}{2j}  \bigg)\\&\leq \frac{1}{\sqrt{1-\beta^2}}.
		\end{split}\]
In the first step, we used that the weights $ w(\gamma_1)$ and $w(\gamma_2)$ of two different graphs $\gamma_1\neq \gamma_2$ are orthogonal in $L^2(\Omega)$ and in the second step we used the observation from \eqref{eq:gsccycle}. Finally, in the third step we used the estimate (see \cite[Eq. (3.12)]{ALR} for the details)
		\[ \sum_{\gamma \in \Gl: |\gamma|=j}  \bE w^2(\gamma)\leq \frac{\beta^{2j}}{2j}. \]
For $ k \leq \max\big( \frac12 (1-\beta^2)^{-2},  (16)^2/(\log \beta^{-2})^2 \big) $, we thus obtain trivially that
		\[ \bE \bigg( \sum_{\gamma\in \Gsc: |\gamma|\geq k} w(\gamma)  \bigg)^2\leq  C' \exp\big( - k\log \beta^{-1}\big),  \]
where $C' =  (1-\beta^2)^{-1/2}\max\big( e^{ (1-\beta^2)^{-2} \log \beta^{-1/2}},  e^{ (16)^2 / (4 \log \beta^{-1}})  \big). $
		
On the other hand, assume $ k >  \max\big( \frac12 (1-\beta^2)^{-2},  (16)^2/(\log \beta^{-2})^2 \big)$ \st in particular
		\[ \epsilon  = \log \beta^{-2} (1-1/\sqrt{2k} ) > 0.  \]
In this case, we bound
		\[\begin{split}
		\bE \bigg( \sum_{\gamma\in \Gsc: |\gamma|\geq k} w(\gamma)  \bigg)^2 &  \leq \sum_{l\geq 1} \sum_{\substack{  \{ \gamma_1, \ldots, \gamma_l\}: \gamma_i\in \Gl,\\ \gamma_i\neq \gamma_j, |\gamma_1| +\ldots +|\gamma_l|\geq k } } \;\;\prod_{j=1}^l \bE w^2(\gamma_j)  \\
		& \leq  \sum_{l\geq 1} \sum_{\substack{  \{ \gamma_1, \ldots, \gamma_l\}: \gamma_i\in \Gl,\\ \gamma_i\neq \gamma_j, |\gamma_1| +\ldots +|\gamma_l|\geq k } } e^{\epsilon( |\gamma_1| +\ldots +|\gamma_l| - k)}\prod_{j=1}^l  \bE w^2(\gamma_j)  \\
		& \leq e^{-\epsilon k} \sum_{l\geq 1} \sum_{\substack{  \{ \gamma_1, \ldots, \gamma_l\}: \\\gamma_i\in \Gl, \gamma_i\neq \gamma_j  } } \prod_{j=1}^l e^{\epsilon |\gamma_j|} \bE w^2(\gamma_j)  \\
		& \leq e^{-\epsilon  k }\prod_{\gamma \in \Gl} \big(1+ e^{\epsilon |\gamma| }\bE w^2(\gamma)\big) 
		\end{split}\]
and thus obtain that
		\[\begin{split}
		\bE \bigg( \sum_{\gamma\in \Gsc: |\gamma|\geq k} w(\gamma)  \bigg)^2   &\leq \beta^{2k }\exp\bigg(- k\log (e^\epsilon\beta^2) + \sum_{ j=3 }^\infty \frac{e^{\epsilon j}\beta^{2j}}{2j}  \bigg)\leq \exp\big(8\sqrt{k} - k\log \beta^{-2} \big)  .
		\end{split}\]
	
Since $ k >  \max\big( \frac12 (1-\beta^2)^{-2},  (16)^2/(\log \beta^{-2})^2 \big)$, we have that $ k\log \beta^{-1} \geq 8\sqrt{k} $ and thus 
		\[\bE \bigg( \sum_{\gamma\in \Gsc: |\gamma|\geq k} w(\gamma)  \bigg)^2   \leq \exp\big(- k\log \beta^{-1} \big). \]
Finally, setting $C  = \max(1, C')$ with $C'>0$ from the first step concludes the first bound of the lemma. For the second bound, we use the rough estimate
		\[	\sum_{\substack{ \gamma \in \Gl: \\ |\gamma|=l, \{i,j\}\in \gamma }}  \bE \,\frac{ w^2(\gamma) }{\tanh^2(\beta g_{ij})}\leq N^{-1} \beta^{2l}\leq  \beta^{2l}\]
and proceed similarly as in the first step.
\end{proof}

As indicated in the introduction, the proof of Theorem \ref{thm:main} relies crucially on the approximation \eqref{eq:heuristics2}. To make this more precise, we need some preparation. We define the map $ \Phi: \cD_{\Phi} \to \Gsc$ by
		\be\label{eq:defPhi} \begin{split}
		\cD_{\Phi} &:=\big \{(\gamma,\tau)\in \Gl\times \Gsc:\gamma\cap\tau=\emptyset \big\}, \\
		 \Phi( \gamma,\tau) &:= \gamma \circ \tau \in \Gsc, \forall \,(\gamma,\tau)\in\cD_{\Phi}. 
		\end{split}\ee
Note that $\Phi( \gamma,\tau)\in \Gsc  $ due to $\gamma\cap\tau=\emptyset$. Furthermore, for $i_1\neq i_2$, we set
		\be\label{eq:defSij}S_{i_1i_2} := \big\{ (\gamma,\tau)\in \Gl\times \Gsc: \{i_1,i_2\} \in \gamma, |\cV_{\gamma}\cap\cV_{\tau}|\leq 1 \big\} \subset \Gl\times\Gsc.\ee
Observe that $ \gamma\cap \tau=\emptyset$ for $(\gamma,\tau)\in S_{i_1,i_2}$, due to the vertex set constraint $|\cV_{\gamma}\cap\cV_{\tau}|\leq 1$. 
\begin{lemma}\label{lm:Phi}
Let $\Phi$ be defined as in \eqref{eq:defPhi} and let $S_{i_1i_2} $ be defined as in \eqref{eq:defSij}, for $i_1\neq i_2$. Then $\Phi_{i_1i_2}:=\Phi_{|S_{i_1i_2}}:S_{i_1i_2}\to \Gamma_{\emph{sc}}$ is injective. 
\end{lemma}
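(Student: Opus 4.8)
The plan is to recover both $\gamma$ and $\tau$ from the single graph $\sigma = \Phi_{i_1i_2}(\gamma,\tau) = \gamma\circ\tau\in\Gsc$, using the extra structure built into $S_{i_1i_2}$: namely that $\gamma$ is the \emph{unique} cycle through the marked edge $\{i_1,i_2\}$ and that $\gamma,\tau$ share at most one vertex. First I would note that, since $\gamma\cap\tau=\emptyset$ (guaranteed by $|\cV_\gamma\cap\cV_\tau|\le 1$), the edge sets satisfy $\cE_\sigma = \cE_\gamma\sqcup\cE_\tau$, so it suffices to identify $\cE_\gamma$ as a subset of $\cE_\sigma$; then $\tau$ is the graph on the remaining edges (with isolated vertices discarded) and $\gamma$ the graph on $\cE_\gamma$, which pins down the preimage uniquely.

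The key step is to show that $\cE_\gamma$ is exactly the edge set of the connected component of $\{i_1,i_2\}$ in $\sigma$. Since $\gamma$ is a connected subgraph of $\sigma$ containing $\{i_1,i_2\}$, its edges all lie in that component. For the reverse inclusion I would argue that the component $C$ of $\{i_1,i_2\}$ in $\sigma$ contains no edges of $\tau$: by Veblen's theorem \eqref{eq:gsccycle} write $\tau = \tau_1\circ\cdots\circ\tau_m$ as an edge-disjoint union of cycles; if some $\tau_\ell$ had an edge in $C$, then since $\tau_\ell$ is connected and vertex-disjoint considerations apply, $\tau_\ell$ would have to meet $\cV_\gamma$ in a vertex, and in fact connectivity of $C$ through the $\sigma$-edges would force a second shared vertex between $\cV_\gamma$ and $\cV_\tau$ — contradicting $|\cV_\gamma\cap\cV_\tau|\le1$. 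More carefully: $C$ is built from the edges of $\gamma$ together with edges of those $\tau_\ell$ that attach to $\cV_\gamma$; each such $\tau_\ell$, being a cycle, attaches to the rest of $C$ only through vertices of $\cV_\gamma$, and if it attached at $\ge 2$ such vertices we immediately violate $|\cV_\gamma\cap\cV_\tau|\le1$, while if it attached at exactly one vertex $v$ then $v$ has degree $\ge 4$ in $\sigma$ (two from $\gamma$, two from $\tau_\ell$) — which is allowed in $\Gsc$, so this needs the sharper observation that such a $\tau_\ell$ would make $v\in\cV_\gamma\cap\cV_\tau$, again forcing any \emph{further} component-edge of $\tau$ to create a second shared vertex; iterating, the whole of $\tau$ restricted to $C$ is confined to a single vertex of $\cV_\gamma$, hence contributes no edges whose removal is needed, i.e. $\cE_\gamma$ is precisely the set of $C$-edges lying on cycles through $\{i_1,i_2\}$. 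I would then use that $\gamma$ is a cycle (every vertex of degree $2$ within $\gamma$) and is the unique element of $\Gl$ through $\{i_1,i_2\}$ contained in $C$, to conclude $\cE_\gamma = \cE_C\setminus(\text{edges of }\tau\text{-components meeting }\cV_\gamma)$ is reconstructible from $\sigma$ alone.

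Having reconstructed $\cE_\gamma$, the map $\sigma\mapsto(\gamma,\tau)$ with $\gamma$ the graph on $\cE_\gamma$ and $\tau$ the graph on $\cE_\sigma\setminus\cE_\gamma$ is a well-defined left inverse of $\Phi_{i_1i_2}$ on its image, which gives injectivity. The main obstacle I anticipate is exactly the second paragraph: carefully handling the case where a cycle $\tau_\ell$ of $\tau$ touches $\cV_\gamma$ in a single vertex (allowed by $|\cV_\gamma\cap\cV_\tau|\le1$ only if $\ell$ is unique and $\tau$ touches $\gamma$ nowhere else), and verifying that even then the component $C$ of $\{i_1,i_2\}$ does not "leak" into $\tau$ in a way that makes $\cE_\gamma$ ambiguous — equivalently, showing that removing from $C$ all maximal $2$-regular pieces not through $\{i_1,i_2\}$ leaves exactly $\gamma$. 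This is where the precise definition of $S_{i_1i_2}$, and in particular the bound $|\cV_\gamma\cap\cV_\tau|\le1$ rather than just $\gamma\cap\tau=\emptyset$, is essential.
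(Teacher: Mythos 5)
Your overall strategy---recover $(\gamma,\tau)$ from the single graph $\sigma=\gamma\circ\tau$ by identifying $\cE_\gamma$ inside $\cE_\sigma$ and letting $\tau$ be the complementary edges---is the right one and is the same reconstruction idea as in the paper. However, the step you call ``key'' rests on a false claim, and the repair you gesture at is asserted rather than proved. The claim that $\cE_\gamma$ equals the edge set of the connected component $C$ of $\{i_1,i_2\}$ in $\sigma$ (equivalently, that $C$ contains no edges of $\tau$) fails precisely in the case $|\cV_\gamma\cap\cV_\tau|=1$: if $v$ is the shared vertex, every edge of the connected component of $\tau$ through $v$ lies in $C$, so $C$ strictly contains $\gamma$. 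Your ``more careful'' discussion acknowledges this case but then concludes that such a piece of $\tau$ ``contributes no edges whose removal is needed,'' which is not correct---those edges are in $C$ and must be excised to recover $\gamma$. The characterization you eventually fall back on, namely that $\gamma$ is the unique cycle of $\sigma$ containing the edge $\{i_1,i_2\}$, is the statement that actually makes the lemma true, but it is exactly the point where the hypothesis $|\cV_\gamma\cap\cV_\tau|\leq 1$ must be used, and your proposal never supplies that argument.

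The missing argument is short: suppose $c\in\Gl$ satisfies $\{i_1,i_2\}\in c\subset \gamma\circ\tau$. If $c$ uses only edges of $\gamma$, then $c=\gamma$, since a proper edge-subset of a cycle is a union of paths and contains no cycle. If $c$ uses at least one edge of $\tau$, then traversing $c$ one alternates between maximal runs of $\gamma$-edges and maximal runs of $\tau$-edges; since both types occur, there are at least two transition vertices, each of which lies in $\cV_\gamma\cap\cV_\tau$, and they are distinct because a cycle visits no vertex twice. This contradicts $|\cV_\gamma\cap\cV_\tau|\leq 1$. Hence $\gamma$ is determined by $\sigma$ and the marked edge, $\tau$ is the graph on $\cE_\sigma\setminus\cE_\gamma$, and injectivity follows. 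For comparison, the paper reconstructs $\gamma$ by a different invariant: it notes that $\gamma$ contains at most one vertex of $\sigma$ of degree greater than two (the possible shared vertex), and that deleting this vertex separates $\gamma$ from the rest of the component of $\{i_1,i_2\}$; this yields the slightly stronger conclusion that $\gamma$ is the unique cycle through $\{i_1,i_2\}$ appearing in \emph{any} edge-disjoint decomposition of $\sigma$, not only those coming from $S_{i_1i_2}$.
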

\begin{proof} 
We prove the more general property that if $ \gamma = \Phi (\gamma_1,  \gamma_2) =\gamma_1\circ\gamma_2$ for $(\gamma_1,\gamma_2)\in S_{i_1i_2}$ and if there exist $ \gamma_1'\in \Gl$ with $\{i_1,i_2\}\in \gamma_1'$ as well as $\gamma_2' \in \Gsc$ with $\gamma_1'\cap \gamma_2' =\emptyset$ so that $\gamma$ can also be represented as 
		\[ \gamma =\gamma_1\circ\gamma_2= \gamma_1'\circ\gamma_2',\] 
then it already follows that $\gamma_1' = \gamma_1$ and, as a consequence, that $\gamma_2' = \gamma_2$ (because $ \gamma_1\cap \gamma_2=\emptyset$ and $ \gamma_1'\cap \gamma_2'=\emptyset$). To show that $\gamma_1' = \gamma_1$, we infer from $\gamma = \Phi_{i_1i_2} (\gamma_1,  \gamma_2)$ that $\gamma_1\in \Gl$ contains at most one vertex of $\gamma$ that has degree greater than two. This happens if and only if $ \gamma_1$ and $\gamma_2$ share exactly one vertex, $ |\cV_{\gamma_1}\cap\cV_{\gamma_2}| = 1$. Elementary arguments imply that removing this unique vertex disconnects $\gamma$ into two connected components which are unique (if $ |\cV_{\gamma_1}\cap\cV_{\gamma_2}| = 0$, on the other hand, the claim is trivial). 
\end{proof}

Lemma \ref{lm:Phi} shows that $ \Phi_{ij}:S_{ij}\to \Phi(S_{ij})$ is a bijection and we write 
		\be\label{eq:defR1}\begin{split}
		&\sum_{\{i,j\}\in\gamma\in \Gsc} \frac{w(\gamma)}{\tanh(\beta g_{ij})} \\
		&=    \sum_{ \substack{  \gamma =  \gamma_1\circ\gamma_2\in  \Phi(S_{ij})  :\\ \{i,j\}\in \gamma_1\in \Gl, \gamma_2\in \Gsc,\\ \gamma_1\cap\gamma_2=\emptyset, |\cV_{\gamma_1}\cap\cV_{\gamma_2}|\leq 1 }} \frac{w(\gamma)}{\tanh(\beta g_{ij})} + \sum_{ \substack{ \{i,j\}\in \gamma \in  \Gsc\setminus \Phi(S_{ij})   }} \frac{w(\gamma)}{\tanh(\beta g_{ij})}=: \Lambda_{ij} +R_{ij}^{(1)}.
		\end{split}\ee
In the following, we analyze $\Lambda_{ij}$ and $R_{ij}^{(1)}$ separately. We find that the second term on the \rhs in \eqref{eq:defR1} is typically of order $O(N^{-3/2})$ while $\Lambda_{ij}$ is close to the \rhs in \eqref{eq:heuristics2}. To see this latter fact, recalling that $ \Phi_{ij}: S_{ij}\to \Phi(S_{ij})$ is a bijection and that $ S_{ij}\subset \Gl\times \Gsc$, we can split the summation in $\Lambda_{ij}$ as
		\[\begin{split}
		\Lambda_{ij} &= \sum_{ \substack{  \gamma' =  \gamma\circ\tau\in  \Phi(S_{ij})  :\\ \gamma\in \Gl:\{i,j\}\in\gamma , \tau\in \Gsc,\\ \gamma \cap\tau=\emptyset, |\cV_{\gamma}\cap\cV_{\tau}|\leq 1 }} \frac{w(\gamma )}{\tanh(\beta g_{ij})}  \,w(\tau)  \\
		& = \sum_{ \substack{   \gamma\in \Gl:\{i,j\}\in\gamma   }} \frac{w(\gamma)}{\tanh(\beta g_{ij})}  \bigg( \sum_{ \substack{   \tau\in  \Gsc  : \gamma\circ\tau\in \Phi(S_{ij}) }}  \!\!\! w(\tau)  \bigg).
		\end{split}\]
Now, for fixed cycle $ \gamma \in \Gl	$ with $\{i,j\}\in \gamma$, we know that $\gamma\circ\tau\in \Phi(S_{ij})$ or, equivalently, $ (\gamma, \tau)\in  S_{ij}$, if and only if  $|\cV_{\gamma}\cap\cV_{\tau}|\leq 1$ (implying $ \gamma\cap\tau = \emptyset$) so that 
		\[ \begin{split}
		\sum_{ \substack{   \tau\in  \Gsc  : \gamma\circ\tau\in \Phi(S_{ij}) }}  \!\!\! w(\tau) =  \sum_{ \substack{   \tau\in  \Gsc  : (\gamma, \tau)\in  S_{ij} }}  \!\!\! w(\tau)  &= \sum_{\substack{  \tau\in \Gsc   }}   w(\tau)- \!\!\!\sum_{ \substack{   \tau\in \Gsc :  \gamma \cap \tau  =\emptyset, \\ |\cV_{\gamma}\cap\cV_{\tau}|\geq 2}}  \!\!\! w(\tau)  -  \!\!\!\sum_{    \tau\in \Gsc :  \gamma \cap \tau  \not=\emptyset }  \!\!\! w(\tau) \\
		&=  \wh Z_N- \!\!\!\sum_{ \substack{   \tau\in \Gsc : |\cV_{\gamma}\cap\cV_{\tau}|\geq 2}}  \!\!\! w(\tau).
		\end{split}\]
Here, we used in the second step the definition \eqref{eq:ZNgraph} of $\wh Z_N$ and the fact that two graphs that share an edge also share at least two vertices. 

To connect \eqref{eq:sigmaij} with the \rhs of \eqref{eq:heuristics2}, we now split 
		\[\begin{split}
		\Lambda_{ij} & = \wh Z_N\sum_{ \substack{   \gamma\in \Gl:\{i,j\}\in\gamma   }}  \frac{w(\gamma)}{\tanh(\beta g_{ij}) }  	- R_{ij}^{(2)} -R_{ij}^{(3)}-R_{ij}^{(4)} + R_{ij}^{(5)},	
		\end{split}\]
where the errors $ R_{ij}^{(2)}, R_{ij}^{(3)}, R_{ij}^{(4)} , R_{ij}^{(5)}$ are given by
 		\be\label{eq:defR2}\begin{split}
		R_{ij}^{(2)}& := \sum_{ \substack{   \gamma\in \Gl:\{i,j\}\in\gamma   }}  \frac{w(\gamma) \mathbf1 _{\{|\cV_\gamma| < \km^2 \}}}{\tanh(\beta g_{ij}) }   \sum_{ \substack{   \tau\in \Gsc : |\cV_{\gamma}\cap\cV_{\tau}|\geq 2}}  \!\!\! w(\tau) \mathbf1 _{\{  |\cV_\tau| < \km^4 \}} \,,\\
		R_{ij}^{(3)}& := \sum_{ \substack{   \gamma\in \Gl:\{i,j\}\in\gamma   }}  \frac{w(\gamma)  \mathbf1 _{\{|\cV_\gamma| < \km^2 \}}}{\tanh(\beta g_{ij}) }   \sum_{ \substack{   \tau\in \Gsc : |\cV_{\gamma}\cap\cV_{\tau}|\geq 2}}  \!\!\! w(\tau) \mathbf1 _{\{ |\cV_\tau| \geq \km^4\}}\,, \\
		R_{ij}^{(4)}& := \sum_{ \substack{   \gamma\in \Gl:\{i,j\}\in\gamma   }}  \frac{w(\gamma) \mathbf1 _{\{|\cV_\gamma| \geq \km^2 \}}}{\tanh(\beta g_{ij}) }\sum_{ \substack{   \tau\in  \Gsc }}   w(\tau)\,, \\
		R_{ij}^{(5)}& :=\sum_{ \substack{   \gamma\in \Gl:\{i,j\}\in\gamma   }}  \frac{w(\gamma) \mathbf1 _{\{|\cV_\gamma| \geq \km^2 \}}}{\tanh(\beta g_{ij}) } \sum_{ \substack{   \tau\in  \Gsc  : \gamma\circ\tau\in \Phi(S_{ij}) }}  \!\!\! w(\tau)
		\end{split}\ee
for fixed $k_N $, specified below. Defining in addition the two errors
		\be\label{eq:defR67} 
		\begin{split} 
		R_{ij}^{(6)}&:= \big( \beta g_{ij} - \tanh(\beta g_{ij})\big) + \sum_{ \substack{   \gamma\in \Gl:\{i,j\}\in\gamma   }}  \bigg(  \prod_{\substack { e\in \gamma:e\neq \{i,j\}  }} \beta g_e -\frac{w(\gamma)}{\tanh(\beta g_{ij}) }\bigg) , \\
		R_{ij}^{(7)}&:= \tanh^2(\beta g_{ij})  \sum_{ \substack{  \gamma\in \Gl: \{i,j\}\in \gamma  }}  \frac{w(\gamma)}{\tanh(\beta g_{ij}) }
		\end{split}\ee 
and denoting the matrix corresponding to the \rhs of \eqref{eq:heuristics2} by $\tbf{P} = (p_{ij}) \in\bR^{N\times N}$ \st
		\be\label{eq:defP} 
		\begin{split} 
		p_{ij}:=& \begin{cases}  \beta g_{ij} +   \sum_{ \substack{   \gamma\in \Gl:\{i,j\}\in\gamma    }} \prod_{\substack { e\in \gamma:e\neq \{i,j\}  }} \beta g_e =   \sum_{ \substack{   \nu\in  \Gp^{ij}   }} \prod_{\substack { e\in \nu  }} \beta g_e&: i\neq j \\ 1 &: i=j,  \end{cases}
		\end{split}
		\ee				
we conclude from the identities \eqref{eq:sigmaij},  \eqref{eq:defR1}, \eqref{eq:defR2} and \eqref{eq:defR67} that
		\be \label{eq:coruseR1to7}\begin{split}
		\langle \sigma_i \sigma_j\rangle & = p_{ij}+  \big(1-\tanh^2(\beta g_{ij}) \big) \wh Z_N^{-1} \big(  R_{ij}^{(1)} - R_{ij}^{(2)}-R_{ij}^{(3)}-R_{ij}^{(4)}+R_{ij}^{(5)})  -  R_{ij}^{(6)} - R_{ij}^{(7)} 
		\end{split}\ee
for $i\neq j$ (recall that for the diagonal elements, we trivially have that $ \langle \sigma_j^2\rangle = 1 = p_{jj}$). 
	
In the rest of this section, our goal is to show that the errors $ R_{ij}^{(k)}$ are negligible contributions to the Frobenius norm of $\tbf{M}$. For most of the bounds, it is sufficient to apply Lemma \ref{lm:lrg} to graphs which have a mildly growing size of order $ O ((\log N)^{1+\epsilon}) $ for a suitable, positive $\epsilon>0$. This yields faster than polynomial (in $N$) decay for the $L^2(\Omega)$ norm of large graphs while small graphs can be controlled directly using combinatorial arguments. For definiteness, we denote by $k_N $ from now on the large graph threshold 
		\be \label{eq:defkN} k_N:=  (\log N)^{3/2}  .  \ee

Let us start the analysis by showing that the terms $ R_{ij}^{(k)}$ for $k\in \{1,3,4,5,6,7\}$ are small. To this end, we need some preparation. For $k\in \bN, l\in \bN_0$, we define
		\be \label{eq:defAkl} A_{k,l}:=\{\gamma \in \Gsc: |\cV_\gamma| = k, |\cE_\gamma| = k+l\},\quad A_{l}:=\{\gamma \in \Gsc: |\cE_\gamma|-|\cV_\gamma| = l\}, \ee
so that $A_{k,l}=\emptyset$ if $l> \frac12k(k-1)$ (the complete graph of $k$ vertices has $\frac12 k(k-1)$ edges). 

Moreover, for subsets $S\subset \Gs$, let us denote in the sequel by $ w(S)$ the sum of weights
		\[ w(S):=  \sum_{\gamma \in S} w(\gamma).\]
\begin{lemma}\label{lm:graphnumber}
For every $k\in \bN, l \in\bN_0$, the number of unlabeled simple closed graphs with $k+l$ edges and $k$ vertices is bounded by $C (k+l)^{2l}e^{C \sqrt{k+l}}$, for some universal constant $C>0$ independent of $k$ and $l$. As a consequence, for $A_{k,l}$ defined as in \eqref{eq:defAkl}, we have  
		\be\label{eq:bndAkl} |A_{k,l}|\leq C (k+l)^{2l}e^{C \sqrt{k+l}} N^k. \ee
\end{lemma}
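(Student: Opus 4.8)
The plan is to prove Lemma~\ref{lm:graphnumber} in two steps: first a purely combinatorial count of \emph{unlabeled} simple closed graphs on $k$ vertices with $k+l$ edges, then a multiplication by the number of ways to choose and label the vertex set, which gives the factor $N^k$ (up to the harmless constant already absorbed in $C$). For the labeling step, once the isomorphism type of a graph on $k$ vertices is fixed, there are at most $\binom{N}{k}k! \leq N^k$ ways to embed it into $[N]$, so \eqref{eq:bndAkl} follows immediately from the first assertion. Hence the entire content is the bound $C(k+l)^{2l}e^{C\sqrt{k+l}}$ on the number of isomorphism classes.

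To count unlabeled simple closed graphs with $k$ vertices and $m := k+l$ edges, I would exploit Veblen's theorem \eqref{eq:gsccycle}: every $\gamma\in\Gsc$ decomposes into edge-disjoint cycles, equivalently $\gamma$ is a disjoint-edge union of its cycle space. A cleaner route, which I expect the authors take, is to bound the number of \emph{connected} such graphs and then handle disconnected ones by a generating-function / product argument. For a connected simple graph on $k$ vertices with $m = k+l$ edges (so $l = m-k \geq -1$; since $\gamma$ closed and has no isolated vertices, in fact every vertex has degree $\geq 2$, forcing $m\geq k$, i.e.\ $l\geq 0$), one standard estimate is: pick a spanning tree ($k^{k-2}$ choices by Cayley, or more crudely $(2k)^{k}$-type bounds) — but since we want $e^{C\sqrt{k+l}}$ and \emph{not} $k^{O(k)}$, Cayley is far too lossy. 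The right tool is that a simple closed graph with minimum degree $\geq 2$ on $k$ vertices has at least $k$ edges, and once we fix the ``excess'' $l$, the graph is obtained from a disjoint union of cycles (a $2$-regular graph, of which there are at most $e^{C\sqrt k}$ unlabeled ones — this is the partition-function asymptotics, the number of ways to write $k$ as an unordered sum of integers $\geq 3$, which is $e^{O(\sqrt k)}$) by adding $l$ extra edges and redistributing. Adding $l$ edges to a graph on $k$ vertices costs at most $\binom{\binom{k}{2}}{l} \leq k^{2l}/l! \leq (k+l)^{2l}$ choices for \emph{which} pairs of vertices to connect; combined with the $e^{C\sqrt{k}}\leq e^{C\sqrt{k+l}}$ base count this yields the claimed $C(k+l)^{2l}e^{C\sqrt{k+l}}$. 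One must check the degree-parity constraint is automatically consistent or only cuts the count down, which it does, so upper bounds are unaffected.

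More carefully, I would argue as follows. Write $\gamma\in A_{k,l}$ (unlabeled). Since $\gamma$ is closed with no isolated vertices, every vertex has even degree $\geq 2$; summing degrees gives $2(k+l) = \sum_i n_i(\gamma) \geq 2k$, consistent. Now greedily extract from $\gamma$ a spanning structure: the subgraph is a disjoint union of cycles only in the $2$-regular case $l=0$; in general write $\gamma = \gamma_0 \cup F$ where $\gamma_0$ is a maximal $2$-regular (spanning on its support) subgraph and $F$ is the set of remaining $l' \leq l$ edges — actually it is cleaner to say: the number of $2$-regular unlabeled graphs on a vertex set of size $\le k$ is at most the number of partitions of integers $\le k$ into parts $\ge 3$, which is $\le e^{C\sqrt k}$ by the Hardy--Ramanujan estimate $p(n) = e^{O(\sqrt n)}$. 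Then the full graph $\gamma$ is determined by (i) such a $2$-regular ``core'' covering some subset of vertices, (ii) the placement of the remaining vertices and the at most $l + (\text{something} \le 2l)$ extra edges, which is at most $\binom{\binom{k}{2}}{\le 2l} \le (k+l)^{2l}$ (adjusting the constant). The technical care is in making ``core plus corrections'' precise enough that each $\gamma$ is counted at least once while the total count stays $\le C(k+l)^{2l}e^{C\sqrt{k+l}}$.

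The main obstacle, and the place where the argument needs genuine care rather than routine estimation, is getting the \emph{subexponential} factor $e^{C\sqrt{k+l}}$ rather than a $k^{\Theta(k)}$ or $2^{\Theta(k)}$ factor: a naive "choose a spanning tree, then add $l$ edges" bound gives $k^{k}$-type growth, which is useless here since $k$ can be as large as $k_N = (\log N)^{3/2}$ and would destroy the faster-than-polynomial decay gained from Lemma~\ref{lm:lrg}. The resolution is that closed graphs have \emph{no tree part} — minimum degree $2$ forces the graph to be (a supergraph of) a disjoint union of cycles, and $2$-regular graphs are rare (only $e^{O(\sqrt k)}$ unlabeled ones, via integer partitions into parts $\ge 3$). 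So the key lemma to invoke or prove is precisely this partition-theoretic bound $p(n)\le e^{C\sqrt n}$, after which the excess-$l$ corrections contribute only the polynomial-in-$(k+l)$ factor $(k+l)^{2l}$. I would state the partition bound explicitly and verify that the reduction "simple closed graph $\leadsto$ $2$-regular core $+$ $O(l)$ edge corrections" is surjective onto $A_{k,l}$ with bounded multiplicity, which is the only nontrivial bookkeeping.
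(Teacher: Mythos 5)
Your overall architecture is right --- reduce to counting unlabeled graphs, multiply by $N^k$, and get the subexponential factor from the Hardy--Ramanujan partition bound $p(n)\le e^{C\sqrt n}$ --- and your diagnosis of why a spanning-tree count would be fatal is also correct. But the specific decomposition you propose, a maximal $2$-regular core plus at most $O(l)$ leftover edges, does not work: the number of leftover edges is \emph{not} $O(l)$. Take $l=1$ and let $\gamma$ be two cycles, of lengths $r$ and $s$ with $r+s=k+1$, sharing exactly one vertex. The only simple cycles in $\gamma$ are the two constituent cycles (any cycle through the shared vertex must use both of that vertex's edges within a single constituent), and they are not vertex-disjoint, so any $2$-regular subgraph is one of the two constituents; the correction $F$ then consists of all $s$ (or $r$) edges of the other cycle, which can be of order $k$. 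Your count $\binom{\binom{k}{2}}{\le 2l}$ therefore misses such graphs, and enlarging the edge budget to cover them brings back exactly the $k^{\Theta(k)}$ growth you were trying to avoid. More generally, $m$ cycles of length $r$ all passing through one common vertex give $l=m-1$ but force $r(m-1)$ leftover edges, so no constant multiple of $l$ can serve as the budget.

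The paper runs the reduction in the opposite direction. First it counts unlabeled simple closed graphs with $k+l$ edges \emph{and $k+l$ vertices}: there every vertex has even degree $\ge 2$ and the degrees sum to $2(k+l)$, so all degrees equal $2$ and the graph is a vertex-disjoint union of cycles, whose isomorphism type is the multiset of cycle lengths --- at most $p(k+l)\le Ce^{C\sqrt{k+l}}$ choices. Every graph in $A_{k,l}$ is then obtained from such a graph by merging $l$ pairs of vertices (reducing the vertex count from $k+l$ to $k$ while keeping all $k+l$ edges), and each merge costs at most $(k+l)^2$ choices, giving the factor $(k+l)^{2l}$. In the two-cycle example this is a single merge; in the flower example it is $m-1$ merges. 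The point is that the excess $l$ counts vertex \emph{identifications}, not extra \emph{edges}, and identifications are always exactly $l$ in number. If you replace your ``core plus extra edges'' step by this ``disjoint cycles plus $l$ vertex-mergings'' step, the rest of your argument (the partition bound and the $N^k$ labeling factor) goes through as written.
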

\begin{proof}  The bound \eqref{eq:bndAkl} follows from the first claim by noticing that any graph with $k$ vertices can be assigned labels in $[N]$ in less than $N^k$ ways. So, let us focus on the claim that the number of unlabeled simple closed graphs with $k+l$ edges and $k$ vertices is bounded by $C (k+l)^{2l}e^{C \sqrt{k+l}}$ for some universal constant $C>0$. To this end, note first that the number of unlabeled simple closed graphs with $k+l$ edges and $k+l$ vertices is bounded from above by the number of solutions $(x_1, x_2, \ldots, x_{k+l})\in \bN_0^{k+l} $ of the equation
		\be\label{eq:numg} x_1 + 2 x_2 +...+(k+l) x_{k+l} = k+l,\ee
which corresponds to the number of unlabeled simple closed graphs with $k+l$ edges and $k+l$ vertices allowing in addition for self-loops. Here, $x_i \in \bN_0$ denotes the number of loops of size $i\in\bN$ contained in the graph. The upper bound via \eqref{eq:numg} simply follows from decomposing a given graph into an edge disjoint union of cycles (recall \eqref{eq:gsccycle}) and to note that the cycles in the decomposition are in fact vertex disjoint if both the number of edges and vertices equal $k+l$ (that is, the cycles characterize the unlabeled graph uniquely). Now, the number of solutions to \eqref{eq:numg} corresponds to the number of partitions of $k+l$, which is bounded by $C e^{C \sqrt{k+l}}$ (see \eg \cite[Def. 1.2 \& Eq. (5.1.2)]{An}) for some universal constant $C>0$. To conclude the claim, notice that every unlabeled simple closed graph with $k+l$ edges and $k$ vertices can be obtained from a simple closed graph of $k+l$ edges and $k+l$ vertices by merging $l$ (not necessarily vertex-disjoint) pairs of vertices which can be done in less than $ (k+l)^{2l}$ ways.
\end{proof}

\begin{lemma} \label{lm:Akl}
Let $\beta <1$, let $m, n\in \bN, l\in \bN_0$ and let $k_N$ be defined as in \eqref{eq:defkN}. Then, there exists $C=C_\beta>0$ such that for every $\epsilon>0$, we have that
	\be\label{eq:Alm}	 \begin{split}
	 \bE[w(A_{l})^2] &\leq C \max \big( N^{-l(1-\epsilon)}, \beta^{k_N^n+l} \big) \,,\\
	\bE[w(\{\gamma \in A_{l}: i_1,i_2,\ldots,i_m \in \cV_\gamma\})^2] &\leq C \max \big(  N^{-l(1-\epsilon) -m}, \beta^{k_N^n+l} \big)
	\end{split}\ee 
for all $l \leq k_N^n $, $N$ large enough and $i_1,i_2,\ldots,i_m\in [N]$. Similarly, we have that 
		\be \label{eq:almij}
		\begin{split}
		&\bE \bigg[ \frac{w(\{\gamma \in A_{l}:  \{i,j\} \in \gamma \})^2} { \tanh^2(\beta g_{ij})} \bigg]  \leq CN^{-1} \max \big(  N^{-l(1-\epsilon)}, \beta^{ k_N^n/2+l} \big)
		\end{split}\ee
for all $i, j\in [N]$, $i\neq j$ and $N$ large enough. 

Finally, if the number of vertices is bounded by $ k_N^n$, we have the improved bounds
		\be \label{eq:almij2} 
		\begin{split}
		\bE[w(\{\gamma \in A_{l}: |\cV_\gamma|\leq k_N^n \})^2]] &\leq C  N^{-l(1-\epsilon)}  \,,\\
		\bE[w(\{\gamma \in A_{l}: |\cV_{\gamma}|\leq k_N^n;  i_1,i_2,\ldots,i_m \in \cV_\gamma\})^2] &\leq C  N^{-l(1-\epsilon) -m} 
		\end{split} \ee
for every $\epsilon>0$ and $N$ large enough.	
\end{lemma}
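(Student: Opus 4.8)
Everything reduces to three facts: (i) the weights $\{w(\gamma)\}_{\gamma\in\Gsc}$ are orthogonal in $L^2(\Omega)$, so that $\bE\,w(S)^2=\sum_{\gamma\in S}\bE\,w(\gamma)^2$ for any $S\subset\Gsc$; (ii) the per-graph bound $\bE\,w(\gamma)^2=\prod_{e\in\gamma}\bE\tanh^2(\beta g_e)\le(\beta^2/N)^{|\gamma|}$, which follows from $|\tanh x|\le|x|$ and $\bE\,g_e^2=N^{-1}$, together with its edge-deflated variant $\bE[w(\gamma)^2/\tanh^2(\beta g_{ij})]\le(\beta^2/N)^{|\gamma|-1}$ for $\{i,j\}\in\gamma$ (orthogonality still holds after dividing out $\tanh(\beta g_{ij})$, since two distinct cycles through $\{i,j\}$ still differ in some other edge); and (iii) the counting bound $|A_{k,l}|\le C(k+l)^{2l}e^{C\sqrt{k+l}}N^k$ from Lemma \ref{lm:graphnumber}. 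Given one of the sets $S\subset A_l$ appearing in \eqref{eq:Alm}--\eqref{eq:almij}, the plan is to split it according to $|\cV_\gamma|<\km^n$ versus $|\cV_\gamma|\ge\km^n$: the first piece will produce the $N^{-l(1-\epsilon)}$ term, the second the $\beta^{\km^n+l}$ term, and the improved estimates \eqref{eq:almij2} are precisely the situation in which the second piece is vacuous.

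For the large-vertex piece, observe that $\gamma\in A_l$ with $|\cV_\gamma|\ge\km^n$ has $|\gamma|=|\cV_\gamma|+l\ge\km^n+l$, so $\{\gamma\in A_l:|\cV_\gamma|\ge\km^n\}\subset\{\gamma\in\Gsc:|\gamma|\ge\km^n+l\}$ and the first bound of Lemma \ref{lm:lrg} gives at once $\bE[w(\{\gamma\in A_l:|\cV_\gamma|\ge\km^n\})^2]\le C\exp(-(\km^n+l)\log\beta^{-1})=C\beta^{\km^n+l}$. Applying the second bound of Lemma \ref{lm:lrg} in the same way handles the edge-constrained quantity in \eqref{eq:almij}, the weaker exponent $\km^n/2+l$ there being an artefact of the rougher per-loop input used to prove that second bound.

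For the small-vertex piece I sum the counting bound over $k<\km^n$, writing $m=k+l$:
\[ \bE\big[w(\{\gamma\in A_l:|\cV_\gamma|<\km^n\})^2\big]\le\sum_{k<\km^n}|A_{k,l}|\,\Big(\tfrac{\beta^2}{N}\Big)^{k+l}\le C\,N^{-l}\sum_{m\ge0}m^{2l}e^{C\sqrt m}\,\beta^{2m}. \]
It is essential here to retain the factor $\beta^{2m}$: one checks that $e^{C\sqrt m}\beta^{2m}$ is bounded by a $\beta$-dependent constant for every $m$ and by $\beta^m$ once $m$ exceeds a $\beta$-dependent constant, so $\sum_{m\ge0}m^{2l}e^{C\sqrt m}\beta^{2m}\le C_\beta\big(C_\beta(l+1)\big)^{2l}\le\exp(O(l\log l))$. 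Since $l\le\km^n=(\log N)^{3n/2}$ is poly-logarithmic, $\log l\le C_n\log\log N$, hence $\big(C_\beta(l+1)\big)^{2l}\le N^{\epsilon l}$ for $N$ large, and this piece is $\le C\,N^{-l(1-\epsilon)}$. Adding the two pieces gives $\bE[w(A_l)^2]\le C\big(N^{-l(1-\epsilon)}+\beta^{\km^n+l}\big)$, i.e.\ the first line of \eqref{eq:Alm}. The constrained variants follow from the same computation with a sharper count: imposing $i_1,\dots,i_m\in\cV_\gamma$ leaves at most $N^{k-m}$ placements for the remaining vertices (up to a harmless $k^m$ factor from choosing which vertices are marked, absorbed by the $\beta^{2k}$-summation), which produces the extra $N^{-m}$; imposing $\{i,j\}\in\gamma$ and dividing by $\tanh(\beta g_{ij})$ replaces $(\beta^2/N)^{|\gamma|}$ by $(\beta^2/N)^{|\gamma|-1}$ and $N^k$ by $N^{k-2}$, producing the overall $N^{-1}$ prefactor in \eqref{eq:almij}; and under the standing restriction $|\cV_\gamma|\le\km^n$ the large-vertex piece is absent, which yields \eqref{eq:almij2} without the maximum.

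The only genuine difficulty is the small-vertex estimate. A priori the combinatorial factor $(k+l)^{2l}$ from Lemma \ref{lm:graphnumber} is super-exponential in $l$, and the partition-number factor $e^{C\sqrt{k+l}}$ is already super-polynomial in $N$ once $k$ or $l$ reaches the scale $\km^n$ with $n\ge2$; both are defeated only by fully exploiting the Gaussian smallness $(\beta^2/N)^{|\gamma|}$ — its $N^{-|\gamma|}$ part delivers the target decay $N^{-l(1-\epsilon)}$ exactly because $l$ is poly-logarithmic (so $(k+l)^{2l}\ll N^{\epsilon l}$), while its $\beta^{2|\gamma|}$ part simultaneously ensures geometric convergence of the sum over the number of vertices and swallows the $e^{C\sqrt{k+l}}$ term. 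Everything else is routine bookkeeping of the combinatorial prefactors.
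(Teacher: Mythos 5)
Your proposal is correct and follows essentially the same route as the paper: orthogonality of the weights reduces everything to counting, the small-graph part is controlled by Lemma \ref{lm:graphnumber} together with $\bE w(\gamma)^2\le(\beta^2/N)^{|\gamma|}$ and the observation that $(k+l)^{Cl}\le N^{\epsilon l}$ because $l$ is poly-logarithmic, and the tail is handled by Lemma \ref{lm:lrg} (your split on $|\cV_\gamma|\ge k_N^n$ coincides with the paper's split on $|\gamma|>k_N^n+l$ since $|\gamma|=|\cV_\gamma|+l$ on $A_l$). The only cosmetic difference is that you count the vertex-constrained graphs directly ($k^m\cdot N^{k-m}$ placements) where the paper uses an exchangeability ratio, and for \eqref{eq:almij} one should just note explicitly that $\beta^{k_N^n+l}\le N^{-1}\beta^{k_N^n/2+l}$ because $N\beta^{k_N^n/2}\le C$ for large $N$ — which is the actual origin of the halved exponent.
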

\begin{proof}
From the $L^2(\Omega)$ orthogonality of different graphs in $\Gsc$, we get the upper bound
		\[ \bE[w( A_{l})^2] \leq \sum_{k=0}^{k_N^n} \bE[w( A_{k,l})^2] +\bE \bigg( \sum_{\gamma\in \Gsc: |\gamma| > k_N^n+l} w(\gamma)  \bigg)^2. \] 
Combining Lemma \ref{lm:graphnumber} with the fact that
		\[ \bE[w(\gamma)^2] \leq (\beta^2/N)^{k+l}\] 
for every $\gamma \in A_{k,l}$, we get
	\[ \bE[w( A_{k,l})^2] \leq C (k+l)^{2l}e^{C \sqrt{k+l}} \, N^k \,  (\beta^2/N)^{k+l}  \leq  C (k+l)^{C l}e^{C \sqrt{k+l}- (k+l)\log \beta^{-2}}  N^{-l}.\]
Using this bound and Lemma \ref{lm:lrg}, we get for every $\epsilon>0$ 
		\[\begin{split} 
		\bE[w( A_{l})^2] &\leq C (k_N^n+l)^{C l} N^{-l}\sum_{k=0}^{k_N^n}e^{C \sqrt{k+l}- (k+l)\log \beta^{-2}} + C \beta^{k_N^n+l} \\
		&\leq C (k_N^n+l)^{C l} N^{-l} + C \beta^{k_N^n+l} \\
		& \leq  N^{-l(1-\epsilon)} + C \beta^{k_N^n+l} .
		 \end{split}\]
Here, we used that $ l \leq k_N^n $ so that $  \log (k_N^n +l ) \ll \epsilon \log N  $ for every $\epsilon>0$ and large $N\in \bN$.

The second bound in \eqref{eq:Alm} follows in the same way if we use that for $k\geq m$ 
		\begin{align*}
		\frac{\bE[w(\{\gamma \in A_{k,l}: i_1,\ldots,i_m \in \cV_\gamma\})^2]}{\bE[w(A_{k,l})^2]} &= \frac{|  \{  V \subset [N] : |V| =k, \{i_1,\ldots,i_m\}\subset V  \}|}{|  \{  V \subset [N] : |V| =k  \}|} \\
		&= \frac{(N-m)(N-m-1) \ldots   (N-k+1)}{N  (N-1) \ldots (N-k+1)}. 
		\end{align*}
This follows from $\bE[w(\gamma)^2]=\bE[w(\gamma')^2] $ for all $\gamma,\gamma' \in A_{k,l}$ and the $L^2(\Omega)$ orthogonality of different graphs in $\Gsc$. The bound \eqref{eq:almij} follows with the same argument used to prove the second bound in \eqref{eq:Alm} (for $m=2$), noting that we lose a decay factor of $1/N$ in the rate (compared to the second bound in \eqref{eq:Alm}), because we divide $w(\gamma)$ by $ \tanh(g_{ij})$ and using that $N e^{- \frac12 k_N^n \log \beta^{-1}}\leq C $ for $N$ large enough. And, finally, for \eqref{eq:almij2}, we proceed as above, but we do not need a large graph cutoff. That is, we find
		\begin{align*}
		 \bE[w(\{\gamma \in A_{l}: | \cV_\gamma|\leq k_N^n \})^2] & \leq C (k_N^n+l)^{C l} N^{-l}\sum_{k=0}^{k_N^n}e^{C \sqrt{k+l}- (k+l)\log \beta^{-2}}\leq CN^{-l(1-\epsilon)}, 
		\end{align*}
uniformly in $l\geq 0$, using that $A_{k,l} =\emptyset $ whenever $ l > \frac12 k_N^{n} (k_N^n-1)$ and $ k_N^{n} \ll N $. The second bound in \eqref{eq:almij2} follows like the second bound in \eqref{eq:Alm}.
\end{proof}

\begin{lemma}\label{lm:R134567}
Let $\beta <1$ and let $R_{ij}^{(k)}  $, for $k\in \{1,3,4,5,6,7\}$ be defined as in \eqref{eq:defR1}, \eqref{eq:defR2} and \eqref{eq:defR67}, respectively. Then, we have for every $\epsilon>0$ that
		\[\max_{i,j \in [N], i\neq j}\big \|R_{ij}^{(k)} \big\|_{L^2(\Omega) }  + \max_{i,j \in [N], i\neq j}\big \| \wh Z_N^{-1} R_{ij}^{(4)} \big\|_{L^2(\Omega) } \leq N^{-3/2+\epsilon} \]
for all $ k\in \{1,5,6,7\}$ and sufficiently large $N$ . Moreover, we have that 
		\[ \lim_{N\to\infty}\bP\Big( \max_{i,j \in [N], i\neq j}   \big|R_{ij}^{(3)}\big|  > N^{-\log N}\Big) =0. \]
\end{lemma}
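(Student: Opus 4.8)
The common mechanism behind the bounds for $k\in\{1,4,5,6,7\}$ is the $L^2(\Omega)$‑orthogonality of graph weights: for distinct $\gamma,\gamma'\in\Gsc$ both containing the edge $\{i,j\}$, the product $w(\gamma)w(\gamma')/\tanh^2(\beta g_{ij})$ carries some factor $\tanh(\beta g_e)$ with $e\in\cE_\gamma\triangle\cE_{\gamma'}$ to an odd power while the remaining factors are independent of $g_e$, so its expectation vanishes; the analogue holds with $\tanh(\beta g_e)$ replaced by $\beta g_e$. Combined with Lemma \ref{lm:lrg} and Lemma \ref{lm:Akl} this disposes of $k\in\{1,4,5\}$, and together with a couple of elementary moment estimates also of $k\in\{6,7\}$. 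The term $R^{(3)}$ is a sum of \emph{products} of weights and does not fit this scheme; it is the main difficulty and will be handled by a union bound.

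\textbf{The term $R^{(1)}$.}
The key combinatorial claim is that every $\gamma\in\Gsc$ with $\{i,j\}\in\gamma$ and $\gamma\notin\Phi(S_{ij})$ lies in $A_l$ for some $l\ge 2$ (with $A_l$ as in \eqref{eq:defAkl}). To prove it, let $\gamma_0$ be the connected component of $\gamma$ containing $\{i,j\}$ and write $\gamma=\gamma_0\circ\gamma_{\mathrm{rest}}$. Since a graph in $\Gsc$ has no bridge (removing a bridge would leave a component with exactly one odd‑degree vertex, contradicting the handshake lemma), $\gamma_0\setminus\{i,j\}$ is connected with $i,j$ as its only odd‑degree vertices; picking a simple $i$–$j$ path $\nu$ inside it, which necessarily has at least two edges, and setting $\gamma_1:=\nu\cup\{i,j\}\in\Gl$, $\gamma_2:=(\gamma_0\setminus\{i,j\}\setminus\nu)\circ\gamma_{\mathrm{rest}}\in\Gsc$, one gets $\gamma=\gamma_1\circ\gamma_2$ with $\gamma_1\cap\gamma_2=\emptyset$ and $\{i,j\}\in\gamma_1$. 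Because $\gamma\notin\Phi(S_{ij})$ this forces $|\cV_{\gamma_1}\cap\cV_{\gamma_2}|\ge 2$ (otherwise $(\gamma_1,\gamma_2)\in S_{ij}$ and $\gamma=\Phi(\gamma_1,\gamma_2)\in\Phi(S_{ij})$), and then the identity $|\cE_\gamma|-|\cV_\gamma|=(|\cE_{\gamma_1}|-|\cV_{\gamma_1}|)+(|\cE_{\gamma_2}|-|\cV_{\gamma_2}|)+|\cV_{\gamma_1}\cap\cV_{\gamma_2}|$ together with $|\cE_{\gamma_1}|=|\cV_{\gamma_1}|$ and $|\cE_{\gamma_2}|\ge|\cV_{\gamma_2}|$ (all degrees in $\Gsc$ are even, hence $\ge 2$) yields $|\cE_\gamma|-|\cV_\gamma|\ge 2$. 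Hence, by orthogonality and \eqref{eq:almij},
\[ \big\|R_{ij}^{(1)}\big\|_{L^2(\Omega)}^2=\sum_{\gamma\in\Gsc\setminus\Phi(S_{ij}),\ \{i,j\}\in\gamma}\bE\Big[\frac{w(\gamma)^2}{\tanh^2(\beta g_{ij})}\Big]\le\sum_{l\ge2}\bE\Big[\frac{w(\{\gamma\in A_l:\{i,j\}\in\gamma\})^2}{\tanh^2(\beta g_{ij})}\Big]\le\sum_{l\ge2}CN^{-1}\max\!\big(N^{-l(1-\epsilon)},\beta^{k_N/2+l}\big), \]
which is $\le N^{-3+\epsilon}$ for $N$ large, so $\|R_{ij}^{(1)}\|_{L^2(\Omega)}\le N^{-3/2+\epsilon}$ (after shrinking $\epsilon$ at the outset).

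\textbf{The terms $R^{(4)},R^{(5)},R^{(6)},R^{(7)}$.}
Since a cycle $\gamma$ has $|\cV_\gamma|=|\cE_\gamma|=|\gamma|$, the term $\wh Z_N^{-1}R_{ij}^{(4)}=\sum_{\gamma\in\Gl:\,\{i,j\}\in\gamma,\,|\gamma|\ge k_N^2}w(\gamma)/\tanh(\beta g_{ij})$ is a subsum of $\sum_{\gamma\in\Gsc:\,\{i,j\}\in\gamma,\,|\gamma|\ge k_N^2}w(\gamma)/\tanh(\beta g_{ij})$; and using $w(\gamma)w(\tau)=w(\gamma\circ\tau)$ for $(\gamma,\tau)\in S_{ij}$ together with the bijectivity of $\Phi_{ij}$ on $S_{ij}$, $R_{ij}^{(5)}$ equals $\sum w(\gamma')/\tanh(\beta g_{ij})$ over those $\gamma'\in\Phi(S_{ij})$ whose $\Gl$‑part has at least $k_N^2$ edges, which is again a subsum of the same object (as $|\gamma'|\ge k_N^2$ there). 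By orthogonality each subsum has $L^2(\Omega)$‑norm squared at most $\bE\big[(\sum_{\gamma\in\Gsc:\,\{i,j\}\in\gamma,\,|\gamma|\ge k_N^2}w(\gamma)/\tanh(\beta g_{ij}))^2\big]\le C\exp(-k_N^2\log\beta^{-1})$ by Lemma \ref{lm:lrg}, which is super‑polynomially small in $N$ since $k_N^2=(\log N)^3$. For $R^{(6)}$ and $R^{(7)}$ I would compute the $L^2(\Omega)$‑norm directly: using $|x-\tanh x|\le|x|^3/3$, $|\tanh x|\le|x|$, the vanishing of the cross terms (between distinct cycles, and between $\beta g_{ij}-\tanh(\beta g_{ij})$ and the cycle terms), a telescoping of $\prod_e\beta g_e-\prod_e\tanh(\beta g_e)$ over the edges of each cycle, independence, and the bound $N^{\ell-2}$ on the number of length‑$\ell$ cycles through $\{i,j\}$, one obtains $\|R_{ij}^{(6)}\|_{L^2(\Omega)}^2\le CN^{-3}\big(1+\sum_{\ell\ge3}(\ell-1)^2\beta^{2\ell-4}\big)$ and $\|R_{ij}^{(7)}\|_{L^2(\Omega)}^2\le\bE[\tanh^4(\beta g_{ij})]\sum_{\gamma\in\Gl:\{i,j\}\in\gamma}(\beta^2/N)^{|\gamma|-1}\le CN^{-2}\cdot CN^{-1}$, both $O(N^{-3})$. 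All these bounds are uniform in $(i,j)$, so the maximum over the $N^2$ pairs costs nothing, and the claimed estimate for $k\in\{1,5,6,7\}$ and for $\wh Z_N^{-1}R^{(4)}$ follows.

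\textbf{The term $R^{(3)}$ (main obstacle).}
Here $R_{ij}^{(3)}=\sum_{\gamma\in\Gl:\{i,j\}\in\gamma,\,|\gamma|<k_N^2}\frac{w(\gamma)}{\tanh(\beta g_{ij})}\,W_\gamma$ with $W_\gamma:=\sum_{\tau\in\Gsc:\,|\cV_\gamma\cap\cV_\tau|\ge2,\,|\cV_\tau|\ge k_N^4}w(\tau)$, so orthogonality no longer helps. The plan is to exploit that $W_\gamma$ depends on $\gamma$ only through its vertex set $V=\cV_\gamma$ — write $W_\gamma=W_V$ — and to bound
\[ \max_{i\ne j}\big|R_{ij}^{(3)}\big|\ \le\ \Big(\max_{i\ne j}\sum_{\gamma\in\Gl:\,\{i,j\}\in\gamma,\,|\gamma|<k_N^2}\ \prod_{e\in\gamma\setminus\{i,j\}}|\tanh(\beta g_e)|\Big)\cdot\max_{V\subset[N],\,|V|<k_N^2}|W_V|. \]
For the first factor, using $|\tanh|\le|\cdot|$, $\bE|\beta g_e|=\beta\sqrt{2/(\pi N)}$ (note $\beta<1<\sqrt{\pi/2}$), the cycle count $\le N^{\ell-2}$, Markov's inequality and a union bound over the $N^2$ pairs, one shows it is $\le\exp((\log N)^{9/2})$ with probability tending to $1$. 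For the second factor, for each fixed $V$ with $|V|<k_N^2$ the $L^2(\Omega)$‑orthogonality of graph weights and Lemma \ref{lm:lrg} (applied with threshold $k_N^4$, using $|\cE_\tau|\ge|\cV_\tau|\ge k_N^4$) give $\bE[W_V^2]\le C\exp(-k_N^4\log\beta^{-1})$; since there are at most $\exp(2(\log N)^4)$ such sets $V$, Markov's inequality and a union bound give $\max_V|W_V|\le\exp(-(\log N)^5)$ with probability tending to $1$. On the intersection of these two high‑probability events one obtains $\max_{i\ne j}|R_{ij}^{(3)}|\le\exp((\log N)^{9/2}-(\log N)^5)\ll N^{-\log N}$, which is exactly the last assertion. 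The point that makes this work is the choice $k_N=(\log N)^{3/2}$: the cutoff $|\cV_\tau|\ge k_N^4$ produces the super‑polynomial gain $\exp(-(\log N)^6\log\beta^{-1})$ in $\bE[W_V^2]$, which comfortably beats both the quasi‑polynomial number $\exp(2(\log N)^4)$ of vertex sets $V$ one unions over and the quasi‑polynomial size of the weight prefactor.
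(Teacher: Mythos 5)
Your proposal is correct and follows essentially the same route as the paper's proof: the same orthogonality and graph-counting bounds via Lemmas \ref{lm:lrg} and \ref{lm:Akl} for $R^{(1)},R^{(4)},R^{(5)}$ (including the key observation that $\gamma\in\Gsc\setminus\Phi(S_{ij})$ with $\{i,j\}\in\gamma$ forces $|\cE_\gamma|-|\cV_\gamma|\geq 2$), the same Taylor/moment estimates for $R^{(6)},R^{(7)}$, and the same union-bound mechanism for $R^{(3)}$ (you union over vertex sets $V$ while the paper unions over cycles $\gamma$, but the counts and the super-polynomial gain from the cutoff $|\cV_\tau|\geq k_N^4$ are the same). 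The one point to tidy up is that in the $R^{(1)}$ estimate the sum over $l$ should be truncated at $l\leq k_N^2$, with the larger-$l$ contribution absorbed into the Lemma \ref{lm:lrg} tail (as such graphs have more than $k_N^2$ edges), since \eqref{eq:almij} is only established for $l$ at most polylogarithmic in $N$.
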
	 
\begin{proof} 
By symmetry, it is enough to prove the bounds for fixed $i,j\in [N]$ with $i\neq j$. 

We consider first $R^{(1)}_{ij}$, defined in \eqref{eq:defR1}. Recalling the definitions in \eqref{eq:defAkl}, we use the $L^2(\Omega)$ orthogonality of different graphs and bound
		\be\label{eq:R11}\begin{split} 
		\bE \big(R_{ij}^{(1)}\big)^2  \leq \sum_{k=3}^{k_N} \sum_{l=2}^{\frac12 k_N(k_N-1)} \sum_{\gamma \in A_{k,l}: \{i,j\}\in \gamma } \bE\, \frac{w(\gamma)^2}{\tanh^2(\beta g_{ij})} + Ce^{-k_N \log \beta^{-1}}, 
		\end{split}\ee
where we applied Lemma \ref{lm:lrg}	 for the large graph contributions. Moreover, we used the fact that for $\gamma \in \Gsc\setminus \Phi(S_{ij})$ with $\{i,j\}\in\gamma$, we find a decomposition $ \gamma = \gamma'\circ \tau$ such that $\{i,j\}\in \gamma'\in \Gl, \tau \in \Gsc, \gamma'\cap\tau=\emptyset$ and $|\cV_{\gamma'}\cap \cV_{\tau}|\geq 2$. In particular, we have that $ \cE_{\gamma} = \frac12 \sum_{i=1}^N n_i(\gamma) \geq \cV_{\gamma}+ 2$, because at least two vertices have degree greater or equal to four, justifying that the sum over $l$ on the \rhs in \eqref{eq:R11} starts at $l=2$. Bounding the first term on the \rhs in \eqref{eq:R11} from above by 
		\[\begin{split}
		\sum_{k=3}^{k_N} \sum_{l=2}^{\frac12 k_N(k_N-1)} \sum_{\gamma \in A_{k,l}: \{i,j\}\in \gamma } \bE\, \frac{w(\gamma)^2}{\tanh^2(\beta g_{ij})} &\leq \sum_{l=2}^{  k_N^2} \sum_{\gamma \in A_{l}: \{i,j\}\in \gamma } \bE\, \frac{w(\gamma)^2}{\tanh^2(\beta g_{ij})}\\
		& = \sum_{l=2}^{  k_N^2} \,  \bE\, \bigg[  \frac{w (\{ \gamma\in A_l: \{i,j\}\in \gamma \})^2}{\tanh^2(\beta g_{ij})} \bigg],
		\end{split}\]
we apply Lemma \ref{lm:Akl} and conclude that
		\[ \begin{split}
		\bE \big(R_{ij}^{(1)}\big)^2 & \leq   CN^{-1}   \sum_{l=2}^{k_N^2}  \max \big(  N^{-l(1-\epsilon)}, \beta^{ k_N^2/2+l} \big)+ Ce^{-k_N \log \beta^{-1}} \\
		&\leq C N^{-3+ 2\epsilon }  + Ce^{-k_N \log \beta^{-1}} \leq CN^{-3+2\epsilon}
		\end{split}\]
for every $\epsilon >0$ and $N\in \bN$ large enough. 

Next, consider the term $ R^{(3)}_{ij}$. Applying Lemma \ref{lm:lrg} and Markov's inequality for fixed $\gamma \in \Gl$ with $\{i,j\}\in \gamma$, we find for large $N$ that
		\begin{equation*}
		\bP	\bigg( \Big| \sum_{ \substack{   \tau\in \Gsc : |\cV_{\gamma}\cap\cV_{\tau}|\geq 2}}  \!\!\! w(\tau) \mathbf1 _{\{|\cV_\tau| \geq \km^4\}}  \Big| > N^{-\km^3} \bigg) \leq N^{-\km^3}\,.
		\end{equation*}
Since the number of those $\gamma \in \Gl$ with $|\cV_\gamma| < \km^2$ is bounded by $N^{\km^2}$ and since 
		\[\begin{split} \lim_{N\to\infty} \bP \big( \max_{\gamma\in \Gl} | w(\gamma)| \geq N^{-3/2+\epsilon} \big) &=0, \\
		  \lim_{N\to\infty} \bP \big( \min_{i,j\in [N]:i\neq j} | g_{ij}| \leq N^{-3} \big) &=0 \end{split}\]
for each $\epsilon>0$ (see \cite[Eq. (3.5)]{ALR} for the first statement; the second follows from standard bounds on Gaussian integrals), a simple union bound implies for large $N$ that
		\begin{align*}
		\max_{i,j\in [N]:i\neq j}|R_{ij}^{(3)}| &= \max_{i,j\in [N]:i\neq j}\Big|\sum_{ \substack{   \gamma\in \Gl:\{i,j\}\in\gamma   }}  \frac{w(\gamma)\mathbf1 _{\{  |\cV_\gamma| < \km^2 \}}}{\tanh(\beta g_{ij}) }   \sum_{ \substack{   \tau\in \Gsc : |\cV_{\gamma}\cap\cV_{\tau}|\geq 2}}  \!\!\! w(\tau) \mathbf1 _{\{|\cV_\tau| \geq \km^4\} }\Big| \\
		&\leq N^{-\km^3}\max_{i,j\in [N]:i\neq j}\sum_{ \substack{   \gamma\in \Gl:\{i,j\}\in\gamma   }}\bigg|  \frac{w(\gamma)\mathbf1 _{\{  |\cV_\gamma| < \km^2 \}}}{\tanh(\beta g_{ij}) }    \bigg|\!\leq  N^{2 +k_N^2-\km^3} \leq \!N^{-\log N}
		\end{align*}
with probability tending to one in the limit $N\to \infty$.

The bound on $\wh Z_N^{-1} R_{ij}^{(4)} $ is obtained from Lemma \ref{lm:lrg}, observing that
		\[\wh Z_N^{-1}R_{ij}^{(4)} = \sum_{ \substack{   \gamma\in \Gl:\\ \{i,j\}\in\gamma,  |\cV_\gamma| \geq \km^2  }}  \frac{w(\gamma)  }{\tanh(\beta g_{ij}) },  \]
so that for every $\epsilon>0$ and large enough $N\in \bN$
		\[\begin{split} 
		\big \| \wh Z_N^{-1} R_{ij}^{(4)} \big\|_{L^2(\Omega) }^2 \leq \sum_{ n\geq k_N^2} \sum_{ \substack{   \gamma\in \Gl:\\\{i,j\}\in\gamma, |\gamma|=n   }}   \bE \,\frac{w^2(\gamma) }{\tanh^2(\beta g_{ij}) } &\leq CN^{-1} \sum_{n\geq k_N^2} (\beta^2)^{n}   \leq CN^{-3 +\epsilon}.
		  \end{split}\]	

The large graph bounds from Lemma \ref{lm:lrg} imply furthermore that
		\[\begin{split}
		\big \|R_{ij}^{(5)} \big\|_{L^2(\Omega) }^2 & = \bE \bigg( \sum_{ \substack{   \gamma\in \Gl:\{i,j\}\in\gamma   }}  \frac{w(\gamma) \mathbf1 _{\{|\cV_\gamma| \geq \km^2 \}}}{\tanh(\beta g_{ij}) } \sum_{ \substack{   \tau\in  \Gsc  : \gamma\circ\tau\in \Phi(S_{ij}) }}  \!\!\! w(\tau)\bigg)^2\\
		&\leq \bE \bigg( \sum_{ \substack{   \gamma'\in \Gsc: \\ \{i,j\}\in\gamma', |\gamma'| \geq \km^2  }}  \frac{w(\gamma') }{\tanh(\beta g_{ij}) } \bigg)^2\\
		&\leq Ce^{-k_N^2 \log \beta^{-1}}\leq N^{-3+\epsilon}, 
		\end{split}\]
recalling that $\gamma' = \gamma\circ\tau \in \Gsc$ if $  \gamma\in\Gl$ and $\tau\in\Gsc$ are such that $\gamma \circ\tau \in \Phi(S_{ij})$ and using the $L^2(\Omega)$ orthogonality of different graphs in $\Gsc$.  

Finally, the $L^2(\Omega)$ estimates on $ R^{(6)}_{ij}$ and $ R^{(7)}_{ij}$ are straightforward. Indeed, combining a simple second moment computation with the Taylor expansion 
		\[  \tanh(\beta g_{uv}) -\beta g_{uv} = - \int_0^1 ds \, \tanh^2( s \beta g_{uv})  \beta g_{uv} = O(g_{uv}^3), \]
we obtain on the one hand that 
		\[\begin{split}
		\big \|R_{ij}^{(6)} \big\|_{L^2(\Omega) }&\leq \big\|  \beta g_{ij} - \tanh(\beta g_{ij})\big\|_{L^2(\Omega) }  \\
		&\hspace{0.5cm}+ \sum_{k=3}^{N} \bigg\| \sum_{ \substack{   \gamma\in \Gl: \\ |\gamma|=n, \{i,j\}\in\gamma   }}  \bigg(  \prod_{\substack { e\in \gamma:e\neq \{i,j\}  }} \beta g_e -\prod_{\substack { e\in \gamma:e\neq \{i,j\}  }} \tanh( \beta g_e) \bigg)  \bigg\|_{L^2(\Omega) }\\
		&\leq C N^{-3/2} + CN^{-3/2} \sum_{n\geq 3} \beta^{n}\leq CN^{-3/2}. 
		\end{split}  \]		
On the other hand, proceeding similarly as in the proof of Lemma \ref{lm:lrg} and using that $ \bE \tanh^4(\beta g_{ij})\leq C \,\bE \,g_{ij}^4\,\leq CN^{-2} $, we readily find that
		\[\begin{split}
		 \big \|R_{ij}^{(7)} \big\|_{L^2(\Omega) }^2 & =   \bE \bigg( \tanh^2(\beta g_{ij})  \sum_{ \substack{  \gamma\in \Gl: \{i,j\}\in \gamma  }}  \frac{w(\gamma)}{\tanh(\beta g_{ij}) } \bigg)^2\\
		 & \leq C N^{-2} \, \bE \bigg(   \sum_{ \substack{  \gamma\in \Gl: \{i,j\}\in \gamma  }} \frac{w(\gamma)}{\tanh(\beta g_{ij}) } \bigg)^2 \leq CN^{-3}.\qedhere
		 \end{split}\]	 
\end{proof}

 
Let us now switch to the analysis of $R_{ij}^{(2)}$. We recall from \eqref{eq:defR2} that it is defined by 
		\[R_{ij}^{(2)}= \sum_{ \substack{   \gamma\in \Gl:\{i,j\}\in\gamma   }}  \frac{w(\gamma)}{\tanh(\beta g_{ij}) }   \mathbf1 _{\{|\cV_\gamma| < \km^2 \}}  \sum_{ \substack{   \tau\in \Gsc : |\cV_{\gamma}\cap\cV_{\tau}|\geq 2}}  \!\!\! w(\tau) \mathbf1 _{\{  |\cV_\tau| < \km^4 \}}. \]
In order to simplify the following discussion, which is based on certain graphical considerations, let us define the related error
		\be\label{eq:defR2t}\widetilde R_{ij}^{(2)} := \sum_{ \substack{   \gamma\in \Gl, \tau \in \Gsc: \\ \{i,j\}\in\gamma, |\cV_{\gamma}\cap\cV_{\tau}|\geq 2   }}   w(\gamma)   w(\tau) \mathbf1_{\{ |\cV_\gamma|  < \km^2,    |\cV_\tau| < \km^4 \} }. \ee
To relate $ \widetilde R_{ij}^{(2)}$ to $R_{ij}^{(2)}$, denote by $\bE_{ij}(\cdot)$ the expectation with regards to $g_{ij}$. Then 
	\[\begin{split}
	\bE_{ij} (R_{ij}^{(2)})^2 
	& = \bE_{ij}\tanh^2(\beta g_{ij}) \bigg(\sum_{ \substack{   \gamma\in \Gl, \tau \in \Gsc:\\ \{i,j\}\in\gamma , \{i,j\}\in \tau,\\ |\cV_{\gamma}\cap\cV_{\tau}|\geq 2 }}  \frac{w(\gamma)}{\tanh(\beta g_{ij}) } w(\tau\setminus \{i,j\} ) \mathbf1 _{\{ |\cV_\gamma|<k_N^2,  |\cV_\tau| < \km^4 \}}\bigg)^2\\
	&\hspace{0.5cm} +  \bigg(\sum_{ \substack{   \gamma\in \Gl, \tau \in \Gsc:\\ \{i,j\}\in\gamma , \{i,j\}\not \in \tau,\\ |\cV_{\gamma}\cap\cV_{\tau}|\geq 2 }} \frac{w(\gamma) }{\tanh(\beta g_{ij}) }     w(\tau  ) \mathbf1 _{\{ |\cV_\gamma|<k_N^2,  |\cV_\tau| < \km^4 \}}\bigg)^2,
	\end{split}\]
where we used that the odd moments of $\tanh(\beta g_{ij})$ vanish. Combining this with Cauchy-Schwarz, which shows that $ \big(\bE_{ij}\tanh^2(\beta g_{ij})\big)^2\leq \bE_{ij} \tanh^4(\beta g_{ij})$, we obtain
 	\[\begin{split}
	  \bE_{ij} (R_{ij}^{(2)})^2
	&  \leq CN \, \bE_{ij}\tanh^2(\beta g_{ij}) \Big( \bE_{ij} (R_{ij}^{(2)})^2\Big)\\
	& \leq CN \, \bE_{ij}\tanh^4(\beta g_{ij})  \bigg(\sum_{ \substack{   \gamma\in \Gl, \tau \in \Gsc:\\ \{i,j\}\in\gamma , \{i,j\}\in \tau,\\ |\cV_{\gamma}\cap\cV_{\tau}|\geq 2 }}  \frac{w(\gamma)w(\tau\setminus \{i,j\} )}{\tanh(\beta g_{ij}) }  \mathbf1 _{\{ |\cV_\gamma|<k_N^2,  |\cV_\tau| < \km^4 \}}\bigg)^2\\
	&\hspace{0.5cm} +  CN \, \bE_{ij}\tanh^2(\beta g_{ij})\bigg(\sum_{ \substack{   \gamma\in \Gl, \tau \in \Gsc:\\ \{i,j\}\in\gamma , \{i,j\}\not \in \tau,\\ |\cV_{\gamma}\cap\cV_{\tau}|\geq 2 }} \frac{w(\gamma) }{\tanh(\beta g_{ij}) }     w(\tau  ) \mathbf1 _{\{ |\cV_\gamma|<k_N^2,  |\cV_\tau| < \km^4 \}}\bigg)^2\\
	& \leq  CN \, \bE_{ij} \tanh^2(\beta g_{ij}) (R_{ij}^{(2)})^2
	\end{split}\]
for some $C>0$ and, as a consequence, that
		\be\label{eq:R2step0}\begin{split}
		\bE  (R_{ij}^{(2)})^2\leq C N\,\bE \big (\wt R_{ij}^{(2)}\big)^2.
		\end{split}\ee
Hence, from now on let us focus on controlling the \rhs in \eqref{eq:R2step0}. Due to the vertex constraint $|\cV_{\gamma}\cap\cV_{\tau}|\geq 2 $ and the edge constraint $\{i,j\} \in \gamma$ in the summation over $(\gamma,\tau) \in \Gl\times \Gsc$, basic combinatorics suggests $\wt  R_{ij}^{(2)}$ to be typically of size $ O(N^{-2})$ and we aim to verify this by controlling its second moment. Compared to most of the previous $L^2(\Omega)$ bounds, however, \eg those in the proof of Lemma \ref{lm:R134567}, we encounter a crucial difference: the graphs $ \gamma\in\Gl $ and $\tau\in \Gsc$ may now share some edges \st the product weights $ w(\gamma)w(\tau)$ are in general not orthogonal in $L^2(\Omega)$ for distinct pairs $ (\gamma,\tau) \in \Gl\times \Gsc$. For this reason, controlling the \rhs in \eqref{eq:R2step0} requires additional care compared to our previous arguments.

In order to estimate the second moment in a systematic way, we proceed in three main steps. Before explaining these steps more precisely, let us introduce some additional notation: motivated by the fact that we sum over products of weights of pairs $(\gamma, \tau) \in \Gl\times \Gsc$ which may have non-empty edge intersection, it is useful to consider from now on multi-graphs $\gamma\circ\tau$ (edges may have multiplicity greater than one) which are built up from cycles $\gamma\in\Gl$ and simple, closed graphs $\tau\in\Gsc$. Given such a multi-graph $\gamma\circ\tau$, we define its graph weight by
		\[w(\gamma\circ\tau):= w(\gamma) w(\tau).\] 
Note that this is consistent with the definition \eqref{eq:gweight} for simple graphs. We then define  
		\[\begin{split} 
		 \psi:\Gl\times \Gsc &\to \Gs^2, \hspace{0.5cm}  (\gamma,\tau)\mapsto \psi(\gamma,\tau):=(\psi_1,\psi_2) , 
		\end{split}\]
where $\psi_1=\psi_1(\gamma\circ\tau)  $ and $\psi_2=\psi_2(\gamma\circ\tau) $ are defined so that
		\begin{equation}\label{eq:defeta}
		\gamma \circ \tau = \psi_1 \circ \psi_2 \circ \psi_2\hspace{0.5cm}\text{ and } \hspace{0.5cm}\psi_1 \cap \psi_2 = \emptyset.
		\end{equation}
In other words, $\psi_1 (\gamma\circ \tau)$ denotes the simple graph that consists of the edges that have multiplicity one in $\gamma \circ \tau$, and $\psi_2(\gamma\circ \tau)$ denotes the simple graph consisting of the edges which have multiplicity two in $\gamma\circ\tau$. Observe here that, since $\gamma, \tau \in \Gsc$, each edge $\{i,j\}$ in $\gamma\circ\tau$ has multiplicity $n_{ij}(\gamma\circ \tau)\in \{0,1,2\}$. In particular, $\psi (\gamma,\tau)$ is well-defined. For a basic illustration of the definition of $\psi$, see Figure \ref{fig:1}. 

In Lemma \ref{lm:peta} below, we summarize basic, but important structural properties of the graphs $ \psi_1(\gamma\circ\tau)$ and $\psi_2(\gamma\circ\tau)$, for $(\gamma,\tau)\in \Gl\times \Gsc$. Among other things, we show that in fact $ \psi_1(\gamma\circ\tau)\in \Gsc$. As a consequence, note that for $  \psi_1(\gamma\circ\tau) \neq \psi_1(\gamma'\circ\tau') $
		\be \label{eq:psi1orth}
		\begin{split}
		\bE w(\gamma\circ \tau) w(\gamma'\circ \tau') &= \bE w(\psi_1(\gamma\circ\tau))  w(\psi_1(\gamma'\circ\tau'))w^2(\psi_2(\gamma\circ\tau)) w^2(\psi_2(\gamma'\circ\tau'))   \\
		& =\big( \bE  w(\psi_1 )  w(\psi_1') \big) \big( \bE w^2(\psi_2) w^2(\psi_2') \big)  = 0.
		\end{split} 
		\ee
Here, we abbreviate $ \psi_i =\psi_i(\gamma\circ\tau) $  and $ \psi_i' =\psi_i(\gamma'\circ\tau') $ for $i\in \{1,2\}$. 

\begin{figure}[t!]

\begin{subfigure}{1\textwidth}
\begin{center}

\begin{tikzpicture}[node distance={12mm}, thick, main/.style = {draw, circle}] 
\node[main, RoyalBlue] (1) {$i$}; 
\node[main, RoyalBlue] (2) [above  of=1] {$j$}; 
\node[main, RoyalBlue] (3) [above right of=2] {$k$}; 
\node[main,fill=RoyalBlue] (4) [right of=3] {$l$}; 
\node[main,,fill=RoyalBlue] (5) [below right of=4] {$m$}; 
\node[main,,fill=RoyalBlue] (6) [below of=5] {$n$}; 
\node[main, RoyalBlue] (7) [below right of=1] {$p$}; 
\node[main, RoyalBlue] (8) [right of=7] {$q$}; 

\node[main] (9) [above right of=4] {$r$}; 
\node[main] (10) [above right=1.5cm and 2cm of 5] {$s$}; 
\node[main] (11) [right=5cm of 5] {$t$}; 
\node[main] (12) [right=3cm of 10] {$u$};
\node[main] (13) [below right=1cm and 3cm of 6] {$v$};
\node[main] (14) [above right=1cm and 1.8cm of 13] {$w$}; 

\path (1) edge [RoyalBlue, bend left =10] node [left,  pos=0.5] { } (2); 
\path (2) edge [RoyalBlue, bend left =10] node [above left,  pos=0.5] { } (3); 
\path (3) edge [RoyalBlue, bend left =10] node [above,  pos=0.5] { } (4);
\path (4) edge [bend left =10] node [above right, pos=0.5] { } (5);
\path (4) edge [RoyalBlue, bend right =10] node [left, pos=0.5] { } (5);
\path (5) edge [bend left =10] node [right, pos=0.5] { } (6);
\path (5) edge [RoyalBlue, bend right =10] node [left , pos=0.5] { } (6);
\path (1) edge [RoyalBlue, bend right =10] node [below left,  pos=0.5] { } (7);
\path (7) edge [RoyalBlue, bend right =10] node [below,  pos=0.5] { } (8);
\path (8) edge [RoyalBlue, bend right =10] node [below right,  pos=0.5] { } (6);

\path (4) edge [bend left =10] node [above left,  pos=0.5] { } (9);
\path (9) edge [bend left =10] node [above right,  pos=0.5] { } (10);
\path (5) edge [bend left =10] node [below right,  pos=0.5] { } (10);
\path (5) edge [bend right =10] node [below,  pos=0.6] { } (11);
\path (10) edge [bend left =10] node [above right,  pos=0.5] { } (11);
\path (10) edge [bend left =10] node [above,  pos=0.5] { } (12);
\path (11) edge [bend right =10] node [below right,  pos=0.5] { } (12);
\path (6) edge [bend right =10] node [above right,  pos=0.5] { } (13);
\path (13) edge [bend right =10] node [below right,  pos=0.5] { } (14);
\path (11) edge [bend left =10] node [above right,  pos=0.7] { } (14);
\end{tikzpicture} 

\subcaption{A multigraph $ \gamma\circ \tau$ with $ \gamma\in\Gl$, $\tau\in\Gsc$ \st $\{i,j\} \in \gamma$ and $ |\cV_{\gamma}\cap\cV_\tau |\geq 2$. The edges of $\gamma$ are colored in blue, the edges of $\tau$ are depicted in black and the joint vertices are filled in blue. }
\end{center}
\end{subfigure}

\vspace{0.5cm}

\begin{subfigure}{1\textwidth}
\begin{center}

\begin{tikzpicture}[node distance={12mm}, thick, main/.style = {draw, circle}] 
\node[main] (1) {$i$}; 
\node[main ] (2) [above  of=1] {$j$}; 
\node[main ] (3) [above right of=2] {$k$}; 
\node[main, fill=RoyalBlue] (4) [right of=3] {$l$}; 
\node[main ,fill=RoyalBlue] (5) [below right of=4] {$m$}; 
\node[main ,fill=RoyalBlue] (6) [below of=5] {$n$}; 
\node[main ] (7) [below right of=1] {$p$}; 
\node[main] (8) [right of=7] {$q$}; 

\node[main ] (9) [above right of=4] {$r$}; 
\node[main ] (10) [above right=1.5cm and 2cm of 5] {$s$}; 
\node[main ] (11) [right=5cm of 5] {$t$}; 
\node[main ] (12) [right=3cm of 10] {$u$};
\node[main ] (13) [below right=1cm and 3cm of 6] {$v$};
\node[main ] (14) [above right=1cm and 1.8cm of 13] {$w$}; 

\path (1) edge [  bend left =10] node [left,  pos=0.5] { } (2); 
\path (2) edge [ bend left =10] node [above left,  pos=0.5] { } (3); 
\path (3) edge [bend left =10] node [above,  pos=0.5] { } (4); 
\path (4) edge [RoyalBlue, bend left =10] node [above right, pos=0.5] { } (5);
\path (4) edge [RoyalBlue,  bend right =10] node [left, pos=0.5] { } (5);
\path (5) edge [RoyalBlue,  bend left =10] node [right, pos=0.5] { } (6);
\path (5) edge [RoyalBlue, bend right =10] node [left , pos=0.5] { } (6);
\path (1) edge [bend right =10] node [below left,  pos=0.5] { } (7);
\path (7) edge [bend right =10] node [below,  pos=0.5] { } (8);
\path (8) edge [bend right =10] node [below right,  pos=0.5] { } (6);

\path (4) edge [ bend left =10] node [above left,  pos=0.5] { } (9);
\path (9) edge [ bend left =10] node [above right,  pos=0.5] { } (10);
\path (5) edge [bend left =10] node [below right,  pos=0.5] { } (10);
\path (5) edge [ bend right =10] node [below,  pos=0.6] { } (11);
\path (10) edge [ bend left =10] node [above right,  pos=0.5] { } (11);
\path (10) edge [ bend left =10] node [above,  pos=0.5] { } (12);
\path (11) edge [ bend right =10] node [below right,  pos=0.5] { } (12);
\path (6) edge [ bend right =10] node [above right,  pos=0.5] { } (13);
\path (13) edge [ bend right =10] node [below right,  pos=0.5] { } (14);
\path (11) edge [ bend left =10] node [above right,  pos=0.7] { } (14);
\end{tikzpicture} 

\subcaption{$\psi(\gamma\circ\tau) = (\psi_1, \psi_2)$: the edges of $\psi_1(\gamma\circ \tau)$ are depicted in black while the two copies of the simple path $\psi_2 (\gamma\circ\tau) = \{ l,m \}\circ\{m,n\}$ are colored in blue. By definition, $\gamma\circ\tau = \psi_1\circ\psi_2\circ\psi_2$.}
\end{center}
\end{subfigure}

\caption{}
\label{fig:1}
\end{figure}
Given this notation and preliminary remarks, our strategy to control the second moment of $ \wt R_{ij}^{(2)}$ can now be summarized by the following three key steps: 

\vspace{0.4cm}
\noindent \textbf{Step 1:} Motivated by the orthogonality \eqref{eq:psi1orth} in $\psi_1$, we make a change of variables 
		\[\begin{split} \widetilde R_{ij}^{(2)} &= \sum_{ \substack{   \gamma\in \Gl, \tau \in \Gsc: \\ \{i,j\}\in\gamma, |\cV_{\gamma}\cap\cV_{\tau}|\geq 2   }}   w(\gamma)   w(\tau) \mathbf1_{\{ |\cV_\gamma| < \km^2 ,    |\cV_\tau| < \km^4 \} }\\
		& = \sum_{\eta_1 \in \Gsc  } \sum_{\substack{  (\gamma,\tau)\in \Gl\times \Gsc: \\\{i,j\}\in\gamma, \psi_1(\gamma\circ \tau) = \eta_1   }}   w(\eta_1) w(\psi_2(\gamma\circ \tau)) \mathbf1_{\{ |\cV_\gamma| < \km^2 ,    |\cV_\tau| < \km^4 \} }
		\end{split} \]
and use \eqref{eq:psi1orth} and the positivity of Gaussian moments to estimate 
		\be\label{eq:R2step1}
		\begin{split}
		\bE \big(\widetilde R_{ij}^{(2)}\big)^2 \leq \sum_{ \substack{ \eta_1 \in \Gsc: \\ |\cV_{\eta_1}|< k_N^2+k_N^4 }} \bE w^2(\eta_1) \, \bE \bigg(  \sum_{ (\gamma, \tau)\in S_{\eta_1}^{ij} }  w^2 (\psi_2(\gamma\circ \tau ))   \bigg)^2,  
		\end{split}
		\ee
where, for $\eta\in \Gsc$ and $i\neq j\in [N]$, the sets $ S_{\eta}^{ij}$ are defined by
		 \be \begin{split} \label{eq:defSetaij} 
		S_{\eta}^{ij}&:= \big\{(\gamma,\tau) \in \Gamma_{\text{loop}}\times \Gamma_{\text{sc}}: \{i,j\} \in \gamma ,|\cV_{\gamma}\cap\cV_{\tau}|\geq 2, |\gamma| < \km^2, \psi_1( \gamma \circ \tau) = \eta\big\}. 
		\end{split}\ee
		 
\noindent \textbf{Step 2:}  In the second step, we would like to replace the sum 
		$$ \sum_{ (\gamma, \tau)\in S_{\eta}^{ij} }  w^2 (\psi_2(\gamma\circ \tau ))  $$
in \eqref{eq:R2step1} by a sum of the form 
		$$ \sum_{ \eta_2\in T_{\eta_1}^{ij} }  w^2 (\eta_2)  $$ 
for a suitable subset $ T_{\eta_1}^{ij}\subset \Gs$. Observe, however, that for $\eta_1\in \Gsc$ there may be many distinct $ (\gamma,\tau), (\gamma',\tau')\in S_{\eta_1}^{ij}$ with the property that $ \psi(\gamma,\tau) = \psi(\gamma',\tau')$; see Figure \ref{fig:2} for a basic illustration. In Lemma \ref{lm:peta}, we therefore collect basic structural properties of the graphs $ \psi_1(\gamma\circ\tau)$ and $ \psi_2(\gamma\circ\tau)$. Among other things, we estimate the size of the pre-image $ \psi^{-1}(\{(\eta_1, \eta_2)\})$ for $ (\eta_1,\eta_2)\in \Gsc\times \Gs$. Based on the key observation that $|  \psi^{-1}(\{(\eta_1, \eta_2)\})|$ is bounded by the number of cycles contained in $ \eta_1\circ \eta_2$, we use a basic counting algorithm that shows that
		\[ |  \psi^{-1}(\{(\eta_1, \eta_2)\})|\leq (1+|\cV_{\eta_1}|) e^{2 (|\cE_{\eta_1}|-|\cV_{\eta_1}|)}. \]
Setting $ T_{\eta}^{ij} := \big\{ \psi_2(\gamma\circ\tau): (\gamma,\tau)\in S_{\eta}^{ij}  \big\}$, this implies the upper bound 
		\be\label{eq:R2step2}
		\begin{split}
		\bE \big(\widetilde R_{ij}^{(2)}\big)^2 &\leq \sum_{ \substack{ \eta_1 \in \Gsc: \\ |\cV_{\eta_1}|< k_N^2+k_N^4 }} (1+|\cV_{\eta_1}|)^2 e^{4 (|\cE_{\eta_1}|-|\cV_{\eta_1}|)}  \bE \,w^2(\eta_1) \, \bE \bigg(  \sum_{ \eta_2\in T_{\eta_1}^{ij} }  w^2 (\eta_2)  \bigg)^2. 
		\end{split}
		\ee

\begin{figure}[t!]

\begin{subfigure}{0.5\textwidth}
\begin{center}

\begin{tikzpicture}[node distance={12mm}, thick, main/.style = {draw, circle}] 
\node[main, RoyalBlue,fill=RoyalBlue] (1) {}; 
\node[main, RoyalBlue,fill=RoyalBlue] (2) [above  of=1] {}; 
\node[main, RoyalBlue,fill=RoyalBlue] (3) [above right of=2] {}; 
\node[main,RoyalBlue, fill=RoyalBlue] (4) [right of=3] {}; 
\node[main,fill=RoyalBlue] (5) [below right of=4] {}; 
\node[main,fill=RoyalBlue] (6) [below of=5] {}; 
\node[main, RoyalBlue,fill=RoyalBlue] (7) [below right of=1] {}; 
\node[main, RoyalBlue,fill=RoyalBlue] (8) [right of=7] {}; 

\node[main,fill=black] (11) [right=2.2cm of 5] {}; 
\node[main,fill=black] (13) [below right=1cm and 1.2cm of 6] {};
\node[main,fill=black] (14) [above right=1cm and 1.4cm of 13] {}; 

\path (1) edge [RoyalBlue, bend left =10] node [left,  pos=0.5] { } (2); 
\path (2) edge [RoyalBlue, bend left =10] node [above left,  pos=0.5] { } (3); 
\path (3) edge [RoyalBlue, bend left =10] node [above,  pos=0.5] { } (4);
\path (4) edge [RoyalBlue,bend left =10] node [above right, pos=0.5] { } (5);
\path (5) edge [bend left =10] node [right, pos=0.6] {\small\{i,j\} } (6);
\path (5) edge [RoyalBlue, bend right =10] node [left , pos=0.5] {\small\{i,j\}  } (6);
\path (1) edge [RoyalBlue, bend right =10] node [below left,  pos=0.5] { } (7);
\path (7) edge [RoyalBlue, bend right =10] node [below,  pos=0.5] { } (8);
\path (8) edge [RoyalBlue, bend right =10] node [below right,  pos=0.5] { } (6);

\path (5) edge [bend left =10] node [below,  pos=0.6] { } (11);
\path (6) edge [bend right =10] node [above right,  pos=0.5] { } (13);
\path (13) edge [bend right =10] node [below right,  pos=0.5] { } (14);
\path (11) edge [bend left =10] node [above right,  pos=0.7] { } (14);
\end{tikzpicture} 

\subcaption{ }
\end{center}
\end{subfigure}
\begin{subfigure}{0.5\textwidth}
\begin{center}

\begin{tikzpicture}[node distance={12mm}, thick, main/.style = {draw, circle}] 
\node[main,fill=black] (1) {}; 
\node[main,fill=black] (2) [above  of=1] {}; 
\node[main,fill=black] (3) [above right of=2] {}; 
\node[main,fill=black] (4) [right of=3] {}; 
\node[main,RoyalBlue,fill=black] (5) [below right of=4] {}; 
\node[main, RoyalBlue,fill=black] (6) [below of=5] {}; 
\node[main, fill=black] (7) [below right of=1] {}; 
\node[main,fill=black] (8) [right of=7] {}; 

\node[main, RoyalBlue,fill=RoyalBlue ] (11) [right=2.2cm of 5] {}; 
\node[main, RoyalBlue,fill=RoyalBlue ] (13) [below right=1cm and 1.2cm of 6] {};
\node[main, RoyalBlue,fill=RoyalBlue] (14) [above right=0.8cm and 1.4cm of 13] {}; 

\path (1) edge [ bend left =10] node [left,  pos=0.5] { } (2); 
\path (2) edge [ bend left =10] node [above left,  pos=0.5] { } (3); 
\path (3) edge [ bend left =10] node [above,  pos=0.5] { } (4);
\path (4) edge [bend left =10] node [above right, pos=0.5] { } (5);
\path (5) edge [RoyalBlue, bend left =10] node [right, pos=0.6] {\small\{i,j\} } (6);
\path (5) edge [bend right =10] node [left , pos=0.5] {\small\{i,j\}  } (6);
\path (1) edge [  bend right =10] node [below left,  pos=0.5] { } (7);
\path (7) edge [ bend right =10] node [below,  pos=0.5] { } (8);
\path (8) edge [ bend right =10] node [below right,  pos=0.5] { } (6);

\path (5) edge [RoyalBlue,bend left =10] node [below,  pos=0.6] { } (11);
\path (6) edge [RoyalBlue,bend right =10] node [above right,  pos=0.5] { } (13);
\path (13) edge [RoyalBlue,bend right =10] node [below right,  pos=0.5] { } (14);
\path (11) edge [RoyalBlue,bend left =10] node [above right,  pos=0.7] { } (14);
\end{tikzpicture} 

\subcaption{}
\end{center}
\end{subfigure}

\caption{Example of two distinct pairs $ (\gamma,\tau), (\gamma', \tau')\in \Gl\times \Gsc \in S_{\eta}^{ij}$ for some $\eta \in \Gl$ such that $ \psi_2(\gamma\circ\tau) = \psi_2(\gamma'\circ\tau')$. Note that (b) is obtained from (a) by simply switching the roles of $\gamma$ and $\tau$.}
\label{fig:2}
\end{figure}
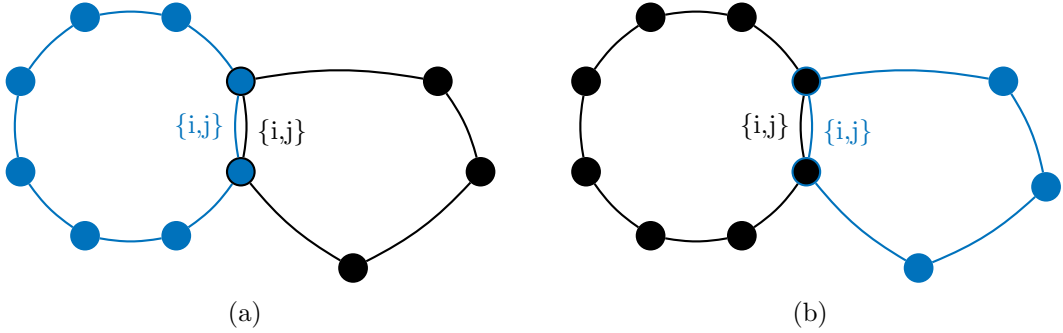

\noindent \textbf{Step 3:} In the last step, we control the \rhs in \eqref{eq:R2step2} by combining the bounds from Lemma \ref{lm:Akl} with Lemma \ref{lm:Tetambd} discussed below. In Lemma \ref{lm:Tetambd}, we estimate the size of
		\be\label{eq:R2step3a} \bE \bigg(  \sum_{ \eta_2\in T_{\eta_1}^{ij} }  w^2 (\eta_2)  \bigg)^2 = \bE  \sum_{ \eta_2, \eta_2'\in T_{\eta_1}^{ij} }  w^2 (\eta_2) w^2 (\eta'_2) \ee
under a mild constraint on $|\cV_{\eta_1}|$. Loosely speaking, the proof of Lemma \ref{lm:Tetambd} shows that the expectation in \eqref{eq:R2step3a} is, up to certain combinatorial corrections (see the bound \eqref{eq:4thmom0} for the exact statement), comparable to  
		\[ \bigg( \bE   \sum_{ \eta_2\in T_{\eta_1}^{ij} }  w^2 (\eta_2)  \bigg) \bigg( \bE   \sum_{ \eta'_2\in T_{\eta_1}^{ij} }  w^2 (\eta_2')  \bigg). \]
Combined with Lemma \ref{lm:peta}, which shows that $ \eta_2$ is either a cycle or an edge-disjoint union of self-avoiding paths with endpoints in $\eta_1$, basic combinatorial considerations then suggest that the expectation in \eqref{eq:R2step3a} is bounded uniformly in $N$, that it is of size $ O(N^{-2})$ if $ T_{\eta_1}^{ij}$ only contains graphs $\eta_2$ with at least one edge and that it is of size $  O(N^{-4})$ if $ T_{\eta_1}^{ij}$ only contains graphs $\eta_2$ that contain the edge $\{i,j\}$. All this is made precise in Lemma \ref{lm:Tetambd} below.

Now, keeping in mind that we expect that $ \bE (\widetilde R_{ij}^{(2)})^2$ is essentially of order $ O(N^{-4})$ (up to corrections of size $ O(N^{\epsilon})$, for any small $\epsilon>0$), Lemma \ref{lm:Akl} and the previous observations suggest to upper bound the \rhs in \eqref{eq:R2step2} by 
		\be\label{eq:R2step3b}
		\begin{split}
		\bE \big(\widetilde R_{ij}^{(2)}\big)^2 
		& \leq \sum_{l\leq 1}  \sum_{ \substack{ \eta_1 \in \Gsc: \{i,j\}\subset \cV_{\eta_1}, \\ |\cV_{\eta_1}|< k_N^2+k_N^4, \\ | \cE_{\eta_1}|-|\cV_{\eta_1}|=l }} (1+|\cV_{\eta_1}|)^2 e^{4l} \, \bE \,w^2(\eta_1) \, \bE \bigg(  \sum_{ \eta_2\in T_{\eta_1}^{ij} }  w^2 (\eta_2)  \bigg)^2 \\
		&\hspace{0.5cm} + \sum_{l\geq 2}  \sum_{ \substack{ \eta_1 \in \Gsc: \{i,j\}\subset \cV_{\eta_1}, \\ |\cV_{\eta_1}|< k_N^2+k_N^4, \\ | \cE_{\eta_1}|-|\cV_{\eta_1}|=l }} (1+|\cV_{\eta_1}|)^2 e^{4l} \, \bE \,w^2(\eta_1) \, \bE \bigg(  \sum_{ \eta_2\in T_{\eta_1}^{ij} }  w^2 (\eta_2)  \bigg)^2\\
		&\hspace{0.5cm} +   \sum_{ \substack{ \eta_1 \in \Gsc:\\ \{i,j \}\cap \cV_{\eta_1}^c\neq \emptyset,\\ |\cV_{\eta_1}|< k_N^2+k_N^4  }} (1+|\cV_{\eta_1}|)^2 e^{4 (|\cE_{\eta_1}|-|\cV_{\eta_1}|)} \, \bE \,w^2(\eta_1) \, \bE \bigg(  \sum_{ \eta_2\in T_{\eta_1}^{ij} }  w^2 (\eta_2)  \bigg)^2. 
		\end{split}
		\ee
Since $ 0\leq  |\cE_{\eta_1}| -  |\cV_{\eta_1}|\leq 1 $ implies $ \eta_2 \neq \emptyset$, which is explained in Lemma \ref{lm:peta}, we deduce that the \rhs of \eqref{eq:R2step3b} is of size $ O(N^{-4})$ from the following three key scenarios that correspond to the three terms on the \rhs of \eqref{eq:R2step3b}, illustrated in Figure \ref{fig:3}: 
\begin{itemize} 
\item $ \eta_1$ is such that $ \{ i,j\}\subset  \cV_{\eta_1} $ and $ |\cE_{\eta_1}| -  |\cV_{\eta_1}|\leq 1$ \st in particular $ \eta_2 \neq \emptyset$: the fact that $ \{ i,j\}\subset  \cV_{\eta_1} $ provides a decay factor $N^{-2}$ (by Lemma \ref{lm:Akl}) and the fact that $ \eta_2 \neq \emptyset$ provides another decay factor $N^{-2}$ (by Lemma \ref{lm:Tetambd}). 
\item $ \eta_1$ is such that $ \{ i,j\}\subset  \cV_{\eta_1} $ and $  |\cE_{\eta_1}| -  |\cV_{\eta_1}|\geq 2$: the fact that $ \{ i,j\}\subset  \cV_{\eta_1} $ provides a decay factor $N^{-2}$ (by Lemma \ref{lm:Akl}) and the fact that $  |\cE_{\eta_1}| -  |\cV_{\eta_1}|\geq 2$ provides an additional decay factor $ N^{-2}$ (by Lemma \ref{lm:Akl}). In other words, in this scenario the overall size $ O(N^{-4})$ is already implied by properties of the graphs $\eta_1$ alone. 
\item Either $ i \not \in \cV_{\eta_1}$ or $ j \not \in \cV_{\eta_1} $: this implies that $ \{i,j\}\in \eta_2$ so that the $N^{-4}$ decay is a consequence of the bounds on the term in \eqref{eq:R2step3a} (by Lemma \ref{lm:Tetambd}). 
\end{itemize}
Combining the key steps, our conclusion on $  R_{ij}^{(2)} $ is summarized in Corollary \ref{cor:R2} below. 
\vspace{0.4cm}

\begin{figure}[t!]

\begin{subfigure}[b]{0.32\textwidth}
\begin{center}
\begin{tikzpicture}[node distance={10mm}, thick, main/.style = {draw, circle}] 
\node[main, ForestGreen,fill=ForestGreen] (1) {}; 
\node[main, ForestGreen,fill=ForestGreen] (3) [above right of=1] {}; 
\node[main,fill=RoyalBlue] (5) [ below right=0.1cm and 0.5cm of 3] {}; 
\node[main,fill=RoyalBlue] (6) [below of=5] {}; 
\node[main, RoyalBlue,fill=RoyalBlue] (8) [below right of=1] {}; 

\node[main,fill=black] (11) [right=1cm of 5] {}; 
\node[main,fill=black] (13) [below right=0.6cm and 0.6cm of 6] {};
\node[main,fill=black] (14) [above right=0.6cm and 0.6cm of 13] {}; 

\path (1) edge [ForestGreen, bend left =15] node [left,  pos=0.5] { } (3); 
\path (3) edge [RoyalBlue, bend left =15] node [above,  pos=0.5] { } (5);
\path (5) edge [RoyalBlue,bend right =15] node [right, pos=0.6] { } (6);
\path (5) edge [bend left =15] node [right, pos=0.6] { } (6);
\path (6) edge [RoyalBlue, bend left =15] node [left , pos=0.5] {  } (8);
\path (8) edge [RoyalBlue, bend left =15] node [below right,  pos=0.5] { } (1);

\path (5) edge [bend left =10] node [below,  pos=0.6] { } (11);
\path (6) edge [bend right =10] node [above right,  pos=0.5] { } (13);
\path (13) edge [bend right =10] node [below right,  pos=0.5] { } (14);
\path (11) edge [bend left =10] node [above right,  pos=0.7] { } (14);
\end{tikzpicture} 
\subcaption{$i,j\in\cV_{\psi_1} $, $|\cE_{\psi_1}| = |\cV_{\psi_1}| $.}
\end{center}
\end{subfigure}
\begin{subfigure}[b]{0.32\textwidth}
\begin{center}
\begin{tikzpicture}[node distance={10mm}, thick, main/.style = {draw, circle}] 
\node[main, ForestGreen,fill=ForestGreen] (1) {}; 
\node[main, ForestGreen,fill=ForestGreen] (3) [above right of=1] {}; 
\node[main,fill=RoyalBlue] (5) [  right of=3] {}; 
\node[main,fill=RoyalBlue] (6) [below of=5] {}; 
\node[main, RoyalBlue,fill=RoyalBlue] (8) [below right of=1] {}; 
\node[main,fill=] (2) [above left=0.8cm and 0.2cm of 5] {}; 
\node[main,fill=] (4) [above right=1cm and 0.4cm of 5] {}; 
\node[main,fill=] (7) [below right=0.1cm and 0.8cm of 4] {}; 

\node[main,fill=black] (11) [right=1cm of 5] {}; 
\node[main,fill=black] (13) [below right=0.6cm and 0.6cm of 6] {};
\node[main,fill=black] (14) [above right=0.6cm and 0.6cm of 13] {}; 

\path (1) edge [ForestGreen, bend left =15] node [left,  pos=0.5] { } (3); 
\path (3) edge [RoyalBlue, bend left =15] node [above,  pos=0.5] { } (5);
\path (5) edge [RoyalBlue,bend right =15] node [right, pos=0.6] { } (6);
\path (5) edge [bend left =15] node [right, pos=0.6] { } (6);
\path (5) edge [bend left =15] node [right, pos=0.6] { } (2);
\path (5) edge [bend right =15] node [right, pos=0.6] { } (4);
\path (2) edge [bend left =15] node [right, pos=0.6] { } (4);
\path (11) edge [bend left =15] node [right, pos=0.6] { } (4);
\path (11) edge [bend right =15] node [right, pos=0.6] { } (7);
\path (4) edge [bend left =15] node [right, pos=0.6] { } (7);
\path (5) edge [RoyalBlue,bend right =15] node [right, pos=0.6] { } (6);
\path (5) edge [RoyalBlue,bend right =15] node [right, pos=0.6] { } (6);
\path (6) edge [RoyalBlue, bend left =15] node [left , pos=0.5] {  } (8);
\path (8) edge [RoyalBlue, bend left =15] node [below right,  pos=0.5] { } (1);

\path (5) edge [bend left =10] node [below,  pos=0.6] { } (11);
\path (6) edge [bend right =10] node [above right,  pos=0.5] { } (13);
\path (13) edge [bend right =10] node [below right,  pos=0.5] { } (14);
\path (11) edge [bend left =10] node [above right,  pos=0.7] { } (14);
\end{tikzpicture} 
\subcaption{$i,j\in \cV_{\psi_1} $, $|\cE_{\psi_1}| \geq |\cV_{\psi_1}|+ 2 $.}
\end{center}
\end{subfigure}
\begin{subfigure}[b]{0.32\textwidth}
\begin{center}
\begin{tikzpicture}[node distance={10mm}, thick, main/.style = {draw, circle}] 
\node[main, RoyalBlue,fill=RoyalBlue] (1) {}; 
\node[main, RoyalBlue,fill=RoyalBlue] (3) [above right=0.6cm and 0.6cm of 1] {}; 
\node[main, ForestGreen,fill=ForestGreen] (5) [  right of=3] {}; 
\node[main,ForestGreen ,fill=ForestGreen] (6) [below=0.8cm of 5] {}; 
\node[main, RoyalBlue,fill=RoyalBlue] (8) [below right=0.8cm and 0.15cm of 1] {}; 

\node[main,fill=RoyalBlue] (9) [below left =0.8cm and 0cm of 6] {}; 
\node[main,fill=black] (11) [right=0.8cm of 5] {}; 
\node[main,fill=black] (13) [below right=0.8cm and 0.8cm of 6] {};
\node[main,fill=black] (14) [above right=0.7cm and 0.4cm of 13] {}; 

\path (1) edge [RoyalBlue, bend left =15] node [left,  pos=0.5] { } (3); 
\path (3) edge [RoyalBlue, bend left =15] node [above,  pos=0.5] { } (5);
\path (5) edge [ForestGreen,bend right =15] node [right, pos=0.6] { } (6);
\path (5) edge [ForestGreen,bend left =15] node [right, pos=0.6] { } (6);
\path (6) edge [RoyalBlue,bend right =15] node [right, pos=0.6] { } (9);
\path (6) edge [ bend left =15] node [right, pos=0.6] { } (9);
\path (9) edge [RoyalBlue, bend left =15] node [left , pos=0.5] {  } (8);
\path (8) edge [RoyalBlue, bend left =15] node [below right,  pos=0.5] { } (1);

\path (5) edge [bend left =10] node [below,  pos=0.6] { } (11);
\path (9) edge [bend right =10] node [above right,  pos=0.5] { } (13);
\path (13) edge [bend right =10] node [below right,  pos=0.5] { } (14);
\path (11) edge [bend left =10] node [above right,  pos=0.7] { } (14);
\end{tikzpicture} 
\subcaption{$ i\not \in   \cV_{\psi_1}$ \st $ \{i,j\}\in \psi_2$. }
\end{center}
\end{subfigure}

\caption{Example graphs related to \textbf{Step 3}: $ \gamma\in\Gl$ is colored in blue, $ \tau\in\Gsc$ is depicted in black, $\{i,j\}$ is colored in green and we denote $ \psi_1 = \psi_1(\gamma\circ\tau), \psi_2 = \psi_2(\gamma\circ\tau)$.}
\label{fig:3}
\end{figure}
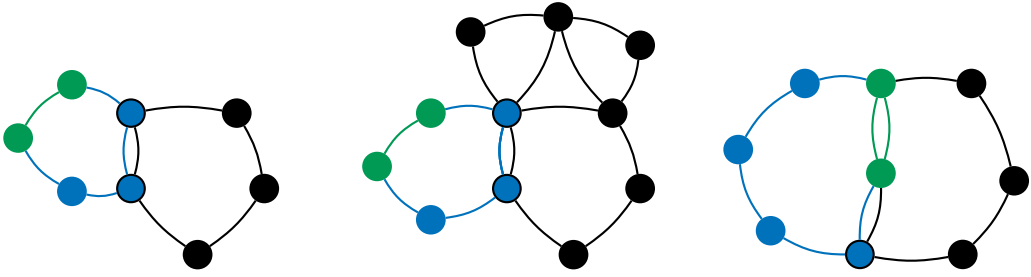
Let us now carry out the above key steps in detail. We start with the following lemma that describes basic properties of the map $\psi$ defined above before \textbf{Step 1}. 

\begin{lemma}
\label{lm:peta} Let $ \gamma\in \Gamma_{\emph{ loop}}, \tau \in \Gamma_{\emph{sc}}$ and let $\psi=(\psi_1 , \psi_2 ):\Gamma_{\emph{ loop}}\times \Gamma_{\emph{sc}}\to \Gamma_{\emph s}^2$ be defined as in \eqref{eq:defeta}. Then the following properties hold true for every $ (\gamma,\tau) \in \Gamma_{\emph{ loop}}\times \Gamma_{\emph{sc}}$:
	\begin{enumerate}
		\item[(1)] We have that $\psi_1=\psi_1 (\gamma\circ\tau) \in \Gamma_{\emph{sc}}$ and $\psi_2=\psi_2(\gamma\circ\tau)= \gamma\cap \tau $ is either equal to $ \psi_2=\gamma \in \Gamma_{\emph{loop}}$ or it is equal to an edge-disjoint union of paths in $\Gp$ whose end points lie in $\cV_{\psi_1}$. If $\psi_2=\gamma$, then $\tau = \psi_1\circ\gamma$. If $ |\cV_{\psi_{1}} \cap \cV_{\psi_{2}}| \geq 2$, then $\psi_{2} $ can always be written as an edge-disjoint union of paths in $\Gp$ with end points in $\cV_{\psi_{1}}$ (including both cases $\psi_{2}=\gamma$ or $\psi_{2}\not=\gamma$) and, on the other hand, if $ |\cV_{\psi_{1}} \cap \cV_{\psi_{2}}| \leq 1$, then we have necessarily $ \psi_{2} = \gamma $. 
		
		\item [(2)] If $|\cV_{\gamma}\cap\cV_{\tau}|\geq 2$ and $|\cE_{\psi_{1}(\gamma\circ\tau)}| - |\cV_{\psi_{1}(\gamma\circ\tau) }|\leq 1$, then $\psi_{2}(\gamma\circ\tau) \not = \emptyset$.
		
		\item[(3)] For every $\eta_{1} \in \Gamma_{\emph{sc}}$ and $\eta_{2}\in \Gamma_{\emph{s}}$ with $\eta_{1}\cap \eta_{2}=\emptyset$, we have that
				\be\label{eq:prebnd}|\psi^{-1} (\{ (\eta_{1},\eta_{2})\} )|\leq (1+|\cV_{\eta_{1}}| ) e^{2( |\cE_{\eta_{1}}|-|\cV_{\eta_{1}}|)}. \ee
		
	\end{enumerate}
\end{lemma}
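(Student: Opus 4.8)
The plan is to work throughout at the level of edge multiplicities in the multigraph $\gamma\circ\tau$. Since $\gamma\in\Gl\subset\Gsc$ and $\tau\in\Gsc$, every edge of $\gamma\circ\tau$ has multiplicity in $\{0,1,2\}$, so that $\psi_1=\psi_1(\gamma\circ\tau)$ is exactly the simple graph on the multiplicity-one edges, $\psi_2=\psi_2(\gamma\circ\tau)=\gamma\cap\tau$ the simple graph on the multiplicity-two edges, and $\gamma\circ\tau=\psi_1\circ\psi_2\circ\psi_2$. For (1), I would first note that for each vertex $v$ the multigraph degree $n_v(\gamma\circ\tau)=n_v(\gamma)+n_v(\tau)$ is even, while $n_v(\gamma\circ\tau)=n_v(\psi_1)+2n_v(\psi_2)$; hence $n_v(\psi_1)$ is even, and since $\psi_1$ has no isolated vertices and no self-loops by construction, $\psi_1\in\Gsc$. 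Next, $\psi_2=\gamma\cap\tau$ is a subgraph of the cycle $\gamma$, hence either all of $\gamma$ or an edge-disjoint union of proper arcs of $\gamma$, i.e.\ of paths in $\Gp$. For each such arc and each of its two endpoints $v$, the other $\gamma$-edge at $v$ lies in $\gamma$ but not in $\tau$ (otherwise the arc would extend), hence has multiplicity one, so $v\in\cV_{\psi_1}$; thus the arc endpoints lie in $\cV_{\psi_1}$. The identity $\tau=\psi_1\circ\gamma$ when $\psi_2=\gamma$ is immediate from $\gamma\circ\tau=\psi_1\circ\gamma\circ\gamma$.

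For the dichotomy in (1): if $\psi_2\notin\{\gamma,\emptyset\}$ then $\psi_2$ contains at least one arc, whose two distinct endpoints lie in $\cV_{\psi_1}\cap\cV_{\psi_2}$, so $|\cV_{\psi_1}\cap\cV_{\psi_2}|\geq2$; the contrapositive yields $|\cV_{\psi_1}\cap\cV_{\psi_2}|\leq1\Rightarrow\psi_2=\gamma$. Conversely, if $|\cV_{\psi_1}\cap\cV_{\psi_2}|\geq2$ then either $\psi_2\neq\gamma$ is already a union of arcs as above, or $\psi_2=\gamma$ and cutting the cycle $\gamma$ at two of its vertices lying in $\cV_{\psi_1}$ exhibits it as an edge-disjoint union of paths in $\Gp$ with endpoints in $\cV_{\psi_1}$.

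For (2), I would argue by contradiction: if $\psi_2=\emptyset$, then $\gamma$ and $\tau$ are edge-disjoint and $\psi_1=\gamma\circ\tau$ with $\cV_{\psi_1}=\cV_\gamma\cup\cV_\tau$ and $\cE_{\psi_1}=\cE_\gamma\sqcup\cE_\tau$. Using $|\cE_\gamma|=|\cV_\gamma|$ (as $\gamma\in\Gl$), $|\cE_\tau|\geq|\cV_\tau|$ (every vertex of $\tau\in\Gsc$ has even degree $\geq2$), and $|\cV_\gamma\cap\cV_\tau|\geq2$, one gets
\[ |\cE_{\psi_1}|-|\cV_{\psi_1}| = \big(|\cE_\gamma|-|\cV_\gamma|\big)+\big(|\cE_\tau|-|\cV_\tau|\big) + |\cV_\gamma\cap\cV_\tau| \geq 2, \]
contradicting the hypothesis $|\cE_{\psi_1}|-|\cV_{\psi_1}|\leq1$; hence $\psi_2\neq\emptyset$. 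For (3), the starting point is that a pair $(\gamma,\tau)\in\psi^{-1}(\{(\eta_1,\eta_2)\})$ is determined by $\gamma$ alone: from $\gamma\circ\tau=\eta_1\circ\eta_2\circ\eta_2$ one reads off $\cE_{\eta_2}\subseteq\cE_\gamma$, $\cE_\gamma\setminus\cE_{\eta_2}\subseteq\cE_{\eta_1}$, and then $\tau=\eta_2\circ(\eta_1\setminus\gamma)$ is forced. Hence $|\psi^{-1}(\{(\eta_1,\eta_2)\})|$ is at most the number of cycles $\gamma\subseteq\eta_1\circ\eta_2$ containing all edges of $\eta_2$. I would bound this by a greedy traversal of $\gamma$: after fixing a starting edge and an orientation — at most $1+|\cV_{\eta_1}|$ choices, which also absorbs the degenerate case $\gamma=\eta_2$ — the cycle alternates between the forced arcs coming from $\eta_2$ and segments running inside $\eta_1$, and each time the traversal enters a vertex $v\in\cV_{\eta_1}$ there are at most $d_v(\eta_1)-1$ admissible continuations. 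Since $\eta_1\in\Gsc$ has all degrees even and $\geq2$, one has $\sum_v(d_v(\eta_1)-2)=2(|\cE_{\eta_1}|-|\cV_{\eta_1}|)$, and the elementary inequality $d-1\leq e^{d-2}$ (for $d\geq2$) gives $\prod_v(d_v(\eta_1)-1)\leq e^{2(|\cE_{\eta_1}|-|\cV_{\eta_1}|)}$, which yields \eqref{eq:prebnd}.

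I expect the main obstacle to be the bookkeeping in (3): making precise that the greedy traversal really does reach every admissible cycle, and that each vertex of $\cV_{\eta_1}$ contributes the factor $d_v(\eta_1)-1$ only once — in particular handling $\eta_2$-path endpoints, where the incoming edge lies in $\eta_2$ rather than in $\eta_1$, and the initial vertex, so that no extra factors creep in beyond what the cyclomatic excess $|\cE_{\eta_1}|-|\cV_{\eta_1}|$ can pay for. Parts (1) and (2) are then routine graph theory once the multiplicity decomposition $\gamma\circ\tau=\psi_1\circ\psi_2\circ\psi_2$ and the observation $d_v\geq2$ on $\Gsc$-graphs are in hand.
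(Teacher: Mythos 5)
Your proposal is correct and follows essentially the same route as the paper's proof: part (1) via the parity of $n_v(\psi_1)$ and the observation that a subgraph of the cycle $\gamma$ is either all of $\gamma$ or a union of arcs with endpoints forced into $\cV_{\psi_1}$; part (2) via the same edge/vertex counting identity; and part (3) by reducing the preimage count to counting cycles in $\eta_1\circ\eta_2$ containing $\eta_2$ and bounding these by a non-backtracking traversal with $\prod_v(n_v(\eta_1)-1)\leq e^{2(|\cE_{\eta_1}|-|\cV_{\eta_1}|)}$. The bookkeeping concern you flag for (3) is present in the paper as well and is resolved exactly as you anticipate (the prefactor $1+|\cV_{\eta_1}|$ absorbs the choice of starting point and the degenerate cases $\eta_1=\emptyset$, $\eta_2\in\Gl$).
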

\begin{proof}
To prove (1), note that for any $i\in[N]$, the degree $n_i(\psi_{1}) = n_i(\gamma)+n_i(\tau)-2n_i(\psi_{2})$ is even so that $\psi_{1} \in \Gsc$. Since $\psi_{2} = \gamma \cap \tau$, the remaining statements follow from the fact that every vertex $ i \in \cV_{\psi_{2}}$ has degree either one or two (with regards to $\psi_{2}$). In fact, since $ \psi_{2}\subset \gamma $, every vertex $i\in \cV_{\psi_{2}}$ has degree two with respect to $\gamma \in\Gl$ so either both edges $ \{i, j_1\}, \{i,j_2\}\in \gamma$ that contain $i \in \cV_{\psi_{2}}$ are contained in $\psi_{2}$ or only one of them. Note also that $n_i(\psi_{2})=1$ implies $ i \in \cV_{\psi_{1}}\cap \cV_{\psi_{2}}$. Indeed, with the previous notation, we can assume \Wlog that $ \{i, j_1\} \in \psi_{2}$ while $   \{i,j_2\} \in \psi_{1}  $, i.e. $ i \in \cV_{\psi_{1}}\cap \cV_{\psi_{2}}$.

Now, suppose first that $\psi_{2}$ contains a cycle $\gamma' \in\Gl$. Then $\gamma'\subset\psi_{2}\subset \gamma$, so in fact $\gamma' = \psi_{2} = \gamma$ and, as a consequence of $\gamma\circ\tau = \psi_{1}\circ\psi_{2}\circ\psi_{2}$, we obtain that $\tau =\psi_{1}\circ\psi_{2}$. On the other hand, assume that $ \psi_{2}\neq \gamma$ (in particular, $\psi_{2}$ contains no cycles). Then, $\psi_{2}$ must contain a vertex $i\in \cV_{\psi_{2}}$ with degree $n_i(\psi_{2})=1$ (otherwise $\psi_{2}$ would be an even graph that contains a cycle, by \eqref{eq:gsccycle}). Consider the path connected component $\psi_i\subset \psi_{2}$ that contains $i\in \cV_{\psi_{2}}$. Then, we can write $\psi_i$ as a finite union
		\[ \mu_i = \{ i, j_1\} \circ \{j_1,j_2\} \circ\ldots \circ \{j_{k-1}, j_k\} \]   
for vertices $j_1, \ldots, j_k \in \cV_{\psi_{2}}$ with $j_l\neq j_{l'}$ and $j_l \neq i$ for all $l\neq l'$. The fact that $j_k\neq i$ follows from the assumption that $n_i(\psi_{2})=1$. Similarly, the fact that $j_l\neq j_{l'}$ for all $l\neq l'$ follows from the assumption that $ \psi_{2} $ contains no cycle, and it is also clear that $n_{j_k}(\psi_{2}) =1$, otherwise we could extend the component $\psi_i$ by another edge that contains $j_k\in \cV_{\psi_{2}}$. By the previous remarks, this implies that $j_k\in \cV_{\psi_{1}}\cap \cV_{\psi_{2}}$. Now, checking whether $j_l \in \cV_{\psi_{1}}\cap \cV_{\psi_{2}}$ for each $l$, starting with $j_1\in\cV_{\psi_{2}}$, it is clear that we can write $\psi_i$ as an edge disjoint union of paths in $\Gp$ with endpoints in $\cV_{\psi_{1}}$. 

The previous arguments show that $\psi_{2} = \gamma $ (in which case $\tau = \psi_{1}\circ\psi_{2}$) or $\psi_{2}$ can be written as an edge disjoint union of paths in $\Gp$ with endpoints in $\cV_{\psi_{1}}$. Now, suppose $ |\cV_{\psi_{1}}\cap \cV_{\psi_{2}}|\geq 2$ and suppose that $\psi_{2} = \gamma \in \Gl$. Then, writing   
		\[\psi_{2} =   \{j_1,j_2\} \circ\ldots \circ \{j_{k-1}, j_k\} \circ \{j_k,j_1\},\]
we find at least two vertices $ j_{l_1}, j_{l_2} \in \cV_{\psi_{1}}\cap \cV_{\psi_{2}}$ so that $\psi_{2}$ equals an edge disjoint union of paths in $\Gp$ with endpoints in $\cV_{\psi_{1}}$.  Finally, if $ |\cV_{\psi_{1}}\cap \cV_{\psi_{2}}|\leq 1$, we claim that every vertex $i\in \cV_{\psi_{2}}$ has degree $n_i(\psi_{2}) =2$ so that $\psi_{2}$ contains a cycle which implies $\psi_{2} = \gamma$. Indeed, if $|\cV_{\psi_{1}}\cap \cV_{\psi_{2}}| =0 $, $\psi_{2}$ cannot contain a vertex $i\in \cV_{\psi_{2}}$ with degree $n_i(\psi_{2})=1$, by the previous remarks. On the other hand, if there is a vertex $i\in \cV_{\psi_{2}}$ with degree $n_i(\psi_{2})=1$, then consider the path connected component $\psi_i$ that contains $i\in \cV_{\psi_{2}}$. Arguing as above, we see that $\psi_i$ is a disjoint union of paths in $\Gp$ with endpoints in $\cV_{\psi_{1}}$ and that at least one additional vertex $j_k\in \cV_{\psi_{2}}$ has degree $n_{j_k}(\psi_{2})=1$, contradicting the assumption $ |\cV_{\psi_{1}}\cap \cV_{\psi_{2}}|\leq 1$. This concludes the proof of (1). 

Let us now switch to the proof of (2). Here, we use the identities
		\begin{align*}
		|\cE_{\gamma}| + |\cE_{\tau}|&=|\cE_{\psi_{1}}| + 2|\cE_{\psi_{2}}|\,, \hspace{0.5cm}|\cV_{\gamma} \cup \cV_{\tau}| = |\cV_{\psi_{1}}| + |\cV_{\psi_{2}}| -|\cV_{\psi_{1}}\cap\cV_{\psi_{2}}|\,,
		\end{align*}
which implies (2), because if we assume $\psi_{2} = \emptyset$, then 
	\begin{equation*}
	|\cE_{\psi_{1}}| -  |\cV_{\psi_{1}}|  = |\cE_{\gamma}| + |\cE_{\tau}| - |\cV_{\gamma} \cup \cV_{\tau}|  = |\cE_{\gamma}| + |\cE_{\tau}| - |\cV_{\gamma}| - |\cV_{\tau}|+ |\cV_{\gamma}	\cap\cV_{\tau}| \geq |\cV_{\gamma}\cap\cV_{\tau}|. 
	\end{equation*}
 
To prove (3), suppose first $\eta_{1}=\emptyset$. Then $|\psi^{-1} (\{ (\eta_{1},\eta_{2})\} )|\neq 0 $  if and only if $ \eta_{2}\in \Gl $, because if $\eta_{2}$ is not a loop, then $\eta_{2}\circ\eta_{2}$ does not contain a simple loop, while $\gamma\circ\tau$ always contains simple loops, for $\gamma\in \Gl$ and $\tau \in \Gsc$. If $ \eta_{2}\in\Gl$, on the other hand, the identity $\gamma\circ\tau = \eta_{2}\circ\eta_{2}$ for $(\gamma,\tau) \in \Gl\times \Gsc$ implies $\gamma = \eta_{2}$ (since $\eta_{2}\subset\gamma$) and thus $\tau = \eta_{2}$. This shows in particular that \eqref{eq:prebnd} is true if $\eta_{1}=\emptyset$.

So, assume that $\eta_{1} \not =\emptyset$. Since every $(\gamma,\tau)\in \psi^{-1}( \{ (\eta_{1},\eta_{2})\})\subset \Gl\times \Gsc $ contains a cycle $\gamma \subset \eta_{1}\circ\eta_{2}$, which in fact determines $\tau$ through $\gamma\circ\tau = \eta_{1}\circ\eta_{2}\circ\eta_{2}$, notice that
		\[ |\psi^{-1}( \{ (\eta_{1},\eta_{2})\})|\leq  | \{  \gamma \in\Gl: \eta_{2} \subset \gamma\subset \eta_{1} \circ\eta_{2} \}|,   \]
so it is enough to count the number of cycles contained in $\eta_{1} \circ\eta_{2}$. Without loss of generality, we can assume that $\psi^{-1}( \{ (\eta_{1},\eta_{2})\})\not =\emptyset$, which implies that $ \eta_{1}\in \Gsc$ and that $\eta_{2}$ either consists of disjoint paths in $\Gp$ with endpoints in $\cV_{\eta_{1}}$ or $\eta_{2}$ is equal to some loop $\gamma\in\Gl$. In the latter case, $| \psi^{-1}( \{ (\eta_{1},\eta_{2})\}) | = 1$, by the proof of (1). So, assume that $ \eta_{1}\in \Gsc$, that $\eta_{2} \not \in \Gl$ (so it also does not contain a cycle) and that $\eta_{2}$ consists of edge disjoint paths in $\Gp$ with endpoints in $\cV_{\eta_{1}}$. 

Here, we start from an arbitrary vertex $ i_1 \in   \cV_{\eta_{1}}$ and pick one of $i_1$'s adjacent vertices in $\eta_{1}$. Then, the adjacent vertex $i_2\in \cV_{\eta_{1}}$ is either not in $\cV_{\eta_{2}}$ or it is an end point $ i_2\in \cV_{\eta_{1}}\cap \cV_{\eta_{2}}$ of a path $\pi \in \Gp$ with $\pi \subset \eta_{2}$ having end points in $\cV_{\eta_{1}}$. In the first case, we continue our walk, going at the next step to an adjacent vertex $ i_3 \in \cV_{\eta_{1}}$ (going backwards is not allowed at each step; in particular $i_3\neq i_1$). In the second case, we also walk to a new vertex in $\cV_{\eta_{1}}$, but here there are two different options: either $ i_2 \in \cV_{\eta_{1}}\cap \cV_{\eta_{2}}$ has an adjacent vertex in $\cV_{\eta_{1}}$ (which is not equal to $i_1$) or we can follow the path $\pi$ that starts at $i_2$ and arrives at $i_3\in \cV_{\eta_{1}}$ (since $\pi\in\Gp$ is simple, we can only walk along one direction on $\pi$ and we can not miss any cycles in $\eta_{1}\circ\eta_{2}$ proceeding this way). Altogether, we arrive at $i_3 \in \cV_{\eta_{1}}$ and now we can repeat the procedure. This way, we arrive after finitely many steps at some vertex along the way twice. This means we have detected a cycle in $\eta_{1}\circ\eta_{2}$. Counting all possible directions that we can follow at each step, we have $ |\cV_{\eta_{1}}| $ vertices from which we can choose our starting point, and then we have $n_{i_k} (\eta_{1})-1$ many directions we can choose at each subsequent step. Since every loop $\gamma \in \eta_{1}\circ\eta_{2}$ will certainly be detected, we find at most
 		\[ |\cV_{\eta_{1}}|  \prod_{k \in \cV_{\eta_{1}}} (n_{k} (\eta_{1})-1) \leq |\eta_{1}| e^{\sum_{k  \in \cV_{\eta_{1}}} \log (n_{k} (\eta_{1})-1)} \leq  |\eta_{1}| e^{2(|\cE_{\eta_{1}}| - |\cV_{\eta_{1}}|)}\]
cycles, where we used $\log (x-1) \leq x -2 $ for $x \geq 2$ and $ |\cE_{\gamma}| = \frac12\sum_{i=1}^N n_i(\gamma) $ for $\gamma \in \Gsc$.
\end{proof}
 
Next, let us turn to the bounds on the expectation in \eqref{eq:R2step3a} that were motivated above in $\textbf{Step 3}$. We have the following lemma. 
\begin{lemma}
\label{lm:Tetambd}
Let $ \eta_1 \in \Gsc$ be such that $|\cV_{\eta_1}| \leq k_N^{n} $ for some $n\in\bN$, let $ S_{\eta_1}^{ij}$ be defined as in \eqref{eq:defSetaij} and define $ T_{\eta_1}^{ij}$ by 
		\be \label{eq:defTetaij} T_{\eta_1}^{ij}:= \big\{ \psi_2(\gamma\circ\tau): (\gamma,\tau)\in S_{\eta}^{ij}  \big\}. \ee
Then, there exists a constant $C>0$ that is independent of $\eta_1$ such that
		\be\label{eq:lm261}\begin{split}
		\bE  \sum_{ \eta_2, \eta_2'\in T_{\eta_1}^{ij} }  w^2 (\eta_2) w^2 (\eta'_2)  \leq C. 
		\end{split}\ee
Assuming that $\eta_1$ is such that $ \eta_2 \neq \emptyset$ for every $ \eta_2 \in T_{\eta_1}^{ij}$, we have that
		\be\label{eq:lm262}\begin{split}
		\bE  \sum_{ \eta_2, \eta_2'\in T_{\eta_1}^{ij} }  w^2 (\eta_2) w^2 (\eta'_2) \leq C ( |\cV_{\eta_1}|^2+1)^2   N^{-2}.
		\end{split}\ee
Finally, if we assume that $ i \not \in \cV_{\eta_1}$ or $j\not \in \cV_{\eta_1}$, then 
		\be\label{eq:lm263}\begin{split}
		\bE  \sum_{ \eta_2, \eta_2'\in T_{\eta_1}^{ij} }  w^2 (\eta_2) w^2 (\eta'_2)  \leq C  ( |\cV_{\eta_1}|+1)^2 N^{-4} . 
		\end{split}\ee
\end{lemma}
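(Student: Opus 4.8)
\emph{Strategy.} Throughout set $c_2 := N\,\bE\tanh^2(\beta g_e)$ and $c_4 := N^2\,\bE\tanh^4(\beta g_e)$ for an off-diagonal edge $e$; since $\tanh^2 x\le x^2$ and $g_e$ is centered Gaussian of variance $N^{-1}$ one has $c_2\le\beta^2<1$ and $c_4\le 3\beta^4$, and by independence of the edge variables $\bE\,w^2(\eta)=(c_2/N)^{|\eta|}$ and $\bE[w^2(\eta)w^2(\eta')]=(c_2/N)^{|\eta\triangle\eta'|}(c_4/N^2)^{|\eta\cap\eta'|}$ for all simple $\eta,\eta'$ (this is the quantitative content of \eqref{eq:4thmom0}). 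The first step is to describe $T_{\eta_1}^{ij}$ explicitly. Fix $(\gamma,\tau)\in S_{\eta_1}^{ij}$: since $\psi_1(\gamma\circ\tau)=\gamma\triangle\tau$ and $\psi_2(\gamma\circ\tau)=\gamma\cap\tau$, the constraint $\psi_1(\gamma\circ\tau)=\eta_1$ forces $\tau=\gamma\triangle\eta_1$, whence $\eta_2:=\psi_2(\gamma\circ\tau)=\gamma\setminus\eta_1$. In particular $|\eta_2|\le|\gamma|<k_N^2$, and by Lemma~\ref{lm:peta}(1) either $\eta_2=\gamma\in\Gl$ with $\{i,j\}\in\gamma$, $\gamma\cap\eta_1=\emptyset$ and $|\eta_2|\ge 3$ (the ``cycle case''), or $\eta_2$ is a nonempty union of self-avoiding paths each of whose endpoints lies in $\cV_{\eta_1}$ — and, since $\eta_2\subseteq\gamma$ for the single cycle $\gamma$, these paths are vertex-disjoint. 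In the situation of \eqref{eq:lm263}, say $i\notin\cV_{\eta_1}$, both edges of $\gamma$ at $i$ avoid $\eta_1$ and hence lie in $\eta_2$, so $\{i,j\}\in\eta_2$ and $i$ is an interior vertex of its path; in particular $|\eta_2|\ge 2$.

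Next I would estimate the first moment $M_1:=\bE\sum_{\eta_2\in T_{\eta_1}^{ij}}w^2(\eta_2)=\sum_{\eta_2\in T_{\eta_1}^{ij}}(c_2/N)^{|\eta_2|}$ by enumerating $\eta_2$ as above. A vertex-disjoint union of $p$ paths with $m$ edges in total has at most $\binom{m-1}{p-1}|\cV_{\eta_1}|^{2p}N^{m-p}$ labellings (choose the $p$ path lengths; the $2p$ endpoints from $\cV_{\eta_1}$; the $m-p$ interior vertices from $[N]$), so writing $x:=|\cV_{\eta_1}|^2/N=o(1)$ (valid since $|\cV_{\eta_1}|\le k_N^n=o(\sqrt N)$) the identity $\sum_{p\ge1}\binom{m-1}{p-1}x^p=x(1+x)^{m-1}$ turns the path contribution into $\sum_{m\ge1}c_2^m x(1+x)^{m-1}$, a geometric series that converges \emph{because} $c_2<1$ — the only place $\beta<1$ is used — and sums to $O(x)=O(|\cV_{\eta_1}|^2/N)$; the cycle contribution is $\sum_{m\ge3}N^{m-2}(c_2/N)^m=O(N^{-2})$. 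Hence $M_1=O((|\cV_{\eta_1}|^2+1)/N)$ whenever $\emptyset\notin T_{\eta_1}^{ij}$. In the situation of \eqref{eq:lm263} the extra constraints from Step~1 — the slot of the edge $\{i,j\}$ and the vertex $i$ are fixed, and if $j\in\cV_{\eta_1}$ then $j$ is a fixed endpoint, otherwise $j$ is a second fixed interior vertex — each remove a factor $N$ (or $|\cV_{\eta_1}|$) from the count, and the same computation gives $M_1=O((|\cV_{\eta_1}|+1)/N^2)$. The identical bookkeeping with $(c_2/N)^{|\eta_2|}$ weighted by $|\eta_2|^k$ for fixed $k$ changes only constants, so also $\sum_{\eta_2}|\eta_2|^k\bE w^2(\eta_2)=O(M_1)$.

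For the fourth moment I would write $\bE[w^2(\eta_2)w^2(\eta_2')]=D^{|\eta_2\cap\eta_2'|}\bE w^2(\eta_2)\bE w^2(\eta_2')$ with $D:=c_4/c_2^2$ bounded by a universal constant, expand $D^{|\eta_2\cap\eta_2'|}=\sum_{F\subseteq\eta_2\cap\eta_2'}(D-1)^{|F|}$, and sum over $\eta_2'$: for fixed $F$ the map $\eta_2'\mapsto\eta_2'\setminus F$ is injective and, by Step~1, $\eta_2'\setminus F$ is again a (possibly empty) vertex-disjoint union of paths with endpoints in $\cV_{\eta_1}\cup\cV_F$ — still subject to the remaining constraints of $T_{\eta_1}^{ij}$, in particular, for \eqref{eq:lm263}, still containing the forced edge $\{i,j\}$ and the forced interior vertex $i$ unless they have been placed into $F$. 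Summing over $F\subseteq\eta_2$ and using $|\eta_2|<k_N^2\ll N$ to absorb the products $\prod_{e\in\eta_2}(1+O(N^{-1}))$ into $1+o(1)$, each partial sum is controlled by the first-moment estimates of the previous paragraph applied with the (only mildly enlarged) endpoint set. The decisive bookkeeping is whether the ``empty'' contribution $\mathbf 1_{F\in T_{\eta_1}^{ij}}$ is present: it occurs for \eqref{eq:lm261} through $F=\emptyset$, giving $\bE\sum_{\eta_2,\eta_2'}w^2(\eta_2)w^2(\eta_2')\lesssim M_1=O(1)$; it never occurs for \eqref{eq:lm262} or \eqref{eq:lm263} because $\emptyset\notin T_{\eta_1}^{ij}$ there, so only subsums with $|F|\ge1$ survive and one gains the extra powers of $N^{-1}$ that (together with the $M_1$-bounds of Step~2, and, in \eqref{eq:lm263}, the $O(N^{-2})$ weight of the forced shared edge $\{i,j\}$) produce $O((|\cV_{\eta_1}|^2+1)^2N^{-2})$ and $O((|\cV_{\eta_1}|+1)^2N^{-4})$ respectively.

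The main obstacle is this last step. Unlike everywhere else in the paper, pairs $(\eta_2,\eta_2')$ with many common edges are not orthogonal and must be handled by hand; one must verify that the combinatorial cost of reconstructing $\eta_2'$ from $\eta_2\cap\eta_2'$ is always dominated by the gain $\bE\tanh^4(\beta g_e)=O(N^{-2})$ per shared edge — uniformly in $\eta_1$ and in the (mildly growing) sizes of the graphs. Precisely controlling which partial sums do or do not pick up an empty contribution is exactly what separates the three cases, and tracking the forced edge $\{i,j\}$ together with the interior vertex $i$ in \eqref{eq:lm263} (so that the ``reduced'' graphs $\eta_2'\setminus F$ still carry the corresponding decay) is the most delicate point. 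In all of these estimates the convergence of the relevant generating series rests solely on $c_2=N\bE\tanh^2(\beta g)\le\beta^2<1$, so the whole argument runs in the entire high-temperature regime $\beta<1$.
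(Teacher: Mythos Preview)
Your route is genuinely different from the paper's, and for \eqref{eq:lm261} and \eqref{eq:lm262} it can be made to work essentially as you sketch: the inclusion--exclusion over $F\subseteq\eta_2\cap\eta_2'$ yields
\[
\sum_{\eta_2,\eta_2'}D^{|\eta_2\cap\eta_2'|}\bE w^2(\eta_2)\,\bE w^2(\eta_2')
=\sum_{\eta_2}\bE w^2(\eta_2)\sum_{F\subseteq\eta_2}(D-1)^{|F|}\sum_{\eta_2'\supseteq F}\bE w^2(\eta_2'),
\]
and bounding the inner sum crudely by $O(1)$ gives $A(\eta_2)\le M_1+O(|\eta_2|/N)$, hence the double sum is $\le M_1^2+O(M_1/N)$, which is exactly the right size for \eqref{eq:lm261} and \eqref{eq:lm262}. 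The paper does something quite different: it bounds $|\eta_2\cap\eta_2'|\le 2|\cV^\circ_{\eta_2}\cap\cV^\circ_{\eta_2'}|+|\eta_2^\partial|+|(\eta_2')^\partial|$, groups $\eta_2,\eta_2'$ by their unlabeled type, and shows via an elementary count on interior-vertex overlaps that the whole fourth moment is bounded by $C\big(\sum_{\eta_2}3^{|\eta_2^\partial|}\bE w^2(\eta_2)\big)^2$ (this is \eqref{eq:4thmom0}). That reduction to a single weighted first moment squared is the paper's key device; it avoids your edge-subset expansion entirely and treats all three cases uniformly.

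For \eqref{eq:lm263} your sketch has a real gap. The estimate $A(\eta_2)\le M_1+O(|\eta_2|/N)$ only gives $M_1^2+O(M_1/N)=O((|\cV_{\eta_1}|+1)^2N^{-4})+O((|\cV_{\eta_1}|+1)N^{-3})$, and the second term is too large by a full power of $N$. Your phrase ``the $O(N^{-2})$ weight of the forced shared edge $\{i,j\}$'' points to the fix but does not execute it: since $\{i,j\}\in\eta_2\cap\eta_2'$ always, factor out $\bE\tanh^4(\beta g_{ij})=O(N^{-2})$ \emph{before} running the inclusion--exclusion, and then apply the \eqref{eq:lm262}-type argument to $\tilde\eta_2:=\eta_2\setminus\{i,j\}$ and $\tilde\eta_2':=\eta_2'\setminus\{i,j\}$, which are nonempty (because $|\eta_2|\ge 2$) and have one path with the fixed endpoint $i$, so their first moment is $O((|\cV_{\eta_1}|+1)/N)$. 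This yields $O(N^{-2})\cdot\big[\tilde M_1^2+O(\tilde M_1/N)\big]=O((|\cV_{\eta_1}|+1)^2N^{-4})$ as required. Without this extraction, the ``reduced'' graphs $\eta_2'\setminus F$ for a generic one-edge $F$ need \emph{not} carry the $N^{-2}$ decay you assert (when $F$ swallows one of the two forced edges at $i$, the remaining constraint on $\zeta$ is only an $O(N^{-1})$ one), which is exactly where the bookkeeping fails. The paper's route via \eqref{eq:4thmom0} sidesteps this entirely, since the first-moment bound in the third case is derived once and then squared.
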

\begin{proof} We start with a general bound on $\bE  \sum_{ \eta_2, \eta_2'\in T_{\eta_1}^{ij} }  w^2 (\eta_2) w^2 (\eta'_2)  $
and then consider the different cases in the statement of the lemma separately. In the first step, we use that $ \bE[Z^4]= 3 \sigma^4$ for centered Gaussian random variables $Z \sim \cN(0,\sigma^2)$, which implies
	 \[ \bE[w^2(\eta_2)w^2(\eta_2')]= 3^{|\eta_2 \cap \eta_2'|}\bE\big( w^2(\eta_2)\big) \bE\big( w^2(\eta_2')\big). 
	 \]
Define $\cV^\circ_{\eta_2}: = \cV_{\eta_2} \setminus \cV_{\eta_1}$ as the set of interior points of the paths in $  \Gp$ that are contained in $\eta_2$ and which have endpoints in $\cV_{\eta_1}$. Moreover, denote by $\eta_2^\partial$ the set
		\[ \eta_2^{\partial} := \big\{ e \in \eta_2: e = \{j_1,j_2\} \text{ for } j_1,j_2\in \cV_{\eta_1}\cap \cV_{\eta_2} \big\} \subset \eta_2. \]
That is, $\eta_2^\partial$ corresponds to the edges in $\eta_2$ that connect two vertices in $\cV_{\eta_1} \cap \cV_{\eta_2}$. Then 
	\be \label{eq:eta22'bnd}  |\eta_2\cap \eta_2'| \leq 2| \cV^\circ_{\eta_2}\cap \cV^\circ_{\eta_2'}| + |\eta_2^\partial \cap (\eta_2')^\partial| \leq 2| \cV^\circ_{\eta_2}\cap \cV^\circ_{\eta_2'}| + |\eta_2^\partial| + |(\eta_2')^\partial|. \ee
Given an edge $e \in \eta_2\cap \eta_2' \in \Gs$, the bound in \eqref{eq:eta22'bnd} follows from the fact that $e$ either connects two vertices in $\cV_{\eta_1}\cap \cV_{\eta_2} $ (and thus in $\cV_{\eta_1}\cap \cV_{\eta_2'}$ as well, since $e \in \eta_2'$), or it contains at least one vertex in $ \cV^\circ_{\eta_2} = \cV_{\eta_2} \setminus \cV_{\eta_1}$ (which lies also in $ \cV^\circ_{\eta_2'}$, because $e\in \eta_2'$). Notice that every such vertex in $ \cV^\circ_{\eta_2} \cap \cV^\circ_{\eta_2'}$ is an element of exactly two edges ($ \eta_2$ is a edge-disjoint union of paths in $\Gp$ with endpoints in $\eta_1$), justifying the factor two in front of the first term on the \rhs in \eqref{eq:eta22'bnd}. Hence

	\[\begin{split}
		\bE  \sum_{ \eta_2, \eta_2'\in T_{\eta_1}^{ij} }  w^2 (\eta_2) w^2 (\eta'_2)     
		&\leq  \sum_{ \eta_2, \eta_2'\in T_{\eta_1}^{ij} }  9^{|\cV^\circ_{\eta_2} \cap \cV^\circ_{\eta_2'}|}\bE \big( 3^{|\eta_2^\partial|}w^2(\eta_2)\big)\bE\big(3^{|(\eta_2')^\partial|}w^2(\eta_2')\big).
	\end{split}\]
We now prove that the major contribution to the \rhs of the last bound comes from the case when $|\cV^\circ_{\eta_2} \cap \cV^\circ_{\eta_2'}| = 0$. To this end, recall first that $\bE\,w^2(\eta_2)$ is uniquely determined by the edge occupation numbers $n_{ij} (\eta_2)$ for $1\leq i<j\leq N$. Considering two graphs $\eta_2 \sim \wt \eta_2$ to be equivalent if there exists a bijection $\pi:[N]\to[N]$ such that $\pi( \cV_{\eta_1}) = \cV_{\eta_1} $ and $ n_{ij} (\wt \eta_2) = n_{\pi(i)\pi(j)}(\eta_2)$ for all $i,j\in[N]$, we can rewrite $T_{\eta_1}^{ij} = \cup_{\iota} T(\iota)$ (abbreviating $T (\iota):= T_{\eta_1}^{ij}(\iota)$) as the disjoint union over equivalence classes $\iota$ of unlabeled graphs. Thus
	\begin{equation}\begin{split}
	\label{eq:3ee}
	\bE  \sum_{ \eta_2, \eta_2'\in T_{\eta_1}^{ij} }  w^2 (\eta_2) w^2 (\eta'_2)    &\leq	\sum_{\iota,\iota'} \sum_{\eta_2 \in T(\iota),\eta_2' \in T(\iota')}9^{|\cV^\circ_{\eta_2} \cap \cV^\circ_{\eta_2'}|}\bE\big( 3^{|\eta_2^\partial|}w^2(\eta_2)\big) \bE\big( 3^{|(\eta_2')^\partial|}w^2(\eta_2')\big) \\
	 &=\sum_{\iota,\iota'}  \bE\big( 3^{|\iota^\partial|}w^2(\iota)\big) \bE \big( 3^{|(\iota')^\partial|}w^2(\iota')\big)\sum_{\eta_2 \in T(\iota),\eta_2' \in T(\iota')}9^{|\cV^\circ_{\eta_2} \cap \cV^\circ_{\eta_2'}|}. 
	 \end{split}\end{equation} 
Now, consider any two vertex sets $V, V' \subset [N]\setminus V_{\eta_1} $ \st $ |V|, |V'| <k_N^2$. This applies to $\cV^\circ_{\eta_2} , \cV^\circ_{\eta_2'} $ for every $\eta_2 \in T(\iota) , \eta_2'\in T(\iota')$, because $ \eta_2\subset \gamma$ and $|\gamma|<k_N^2$. We rewrite
	 \[\begin{split}
	 &\sum_{\eta_2 \in T(\iota),\eta_2' \in T(\iota')}9^{|\cV^\circ_{\eta_2} \cap \cV^\circ_{\eta_2'}|} = \!\!\!\!\!\sum_{\substack{ V, V' \subset [N]\setminus V_{\eta_1}: \\|V|, |V'| < k_N^2}}\!\!\!\!\!\!\!\!9^{|V\cap V'|}  |\{\eta_2 \in T(\iota):\cV^\circ_{\eta_2} = V\}|   |\{\eta_2' \in T(\iota'): \cV^\circ_{\eta_2'}= V'\}|, \\
	 \end{split}\]
and notice that $ |\{\eta_2 \in T(\iota):\cV^\circ_{\eta_2}  = V_1\}| =  |\{\eta_2 \in T(\iota):\cV^\circ_{\eta_2}  =  V_2\}|$ whenever $|V_1|  = |V_2|$. 
Given $l\in \bN$, let us therefore abbreviate 
		\[ \lambda_{\iota, l} := |\{\eta_2 \in T(\iota):\cV^\circ_{\eta_2} = V\}|\]
for some and hence all $V\subset [N]\setminus \cV_{\eta_1}$ with $|V|=l$. Then, for $l=|V|, l'= |V'| < k_N^2$, we have for large $N$ ($\gg k_N^2$ and $\gg k_N^n\geq |\cV_{\eta_1}|$ ) that
	\[ \begin{split}
	 	\frac{|\{(V,V'):  |V|=l, |V'|=l',  |V \cap V'| = k \}|}{|\{(V,V'): |V|=l, |V'|=l', |V \cap V'| = 0\}|} 
		&= \frac{\binom{N -|\cV_{\eta_1}|}{k}\binom{N-|\cV_{\eta_1}|-k}{l-k}\binom{N-|\cV_{\eta_1}|-l-k}{l'-k}}{\binom{N-|\cV_{\eta_1}|}{l} \binom{N-|\cV_{\eta_1}|-l}{l'}}\\ 
		& = \frac{l! l'! }{k! (l-k)! (l'-k)! }\frac{ (N -|\cV_{\eta_1}|- l- k)!}{  (N -|\cV_{\eta_1}|- l)!}\\
		&\leq( 2\km^4)^{k}   N^{-k}
	 \end{split}\]
so that  
	\[\begin{split}
	 \sum_{\eta_2 \in T(\iota),\eta_2' \in T(\iota')}9^{|\cV^\circ_{\eta_2} \cap \cV^\circ_{\eta_2'}|} &\leq \sum_{l, l'=0    }^{k_N^2} \sum_{k=0}^{k_N^2} \lambda_{\iota, l}  \lambda_{\iota', l'} \sum_{\substack{ V, V' \subset [N]:|V\cap V'|=k, \\ |V| =l, |V'|=l'  }  }9^{k }   \\
	 &\leq \sum_{l, l'=0    }^{k_N^2} \sum_{k=0}^{k_N^2} \lambda_{\iota, l}  \lambda_{\iota', l'} \sum_{\substack{ V, V' \subset [N]:|V\cap V'|=0, \\ |V| =l, |V'|=l'   }  } (18  k_N^{4})^k N^{-k}    \\
	  &\leq \sum_{l, l'=0    }^{k_N^2}   \sum_{\substack{ V, V' \subset [N]: \\ |V| =l, |V'|=l'   }  } \lambda_{\iota, l}  \lambda_{\iota, l'} \bigg(\sum_{k\geq 0}(18  k_N^{4})^k N^{-k}\bigg)    \\
	&\leq C |  T(\iota)||  T(\iota')| 
	 \end{split}\]
and therefore, by \eqref{eq:3ee}, that
	\be \label{eq:4thmom0} \begin{split}
	\bE  \sum_{ \eta_2, \eta_2'\in T_{\eta_1}^{ij} }  w^2 (\eta_2) w^2 (\eta'_2)   &\leq C \sum_{\iota,\iota'}  |  T(\iota)||  T(\iota')|  \bE\big( 3^{|\iota^\partial|}w^2(\iota)\big) \bE \big( 3^{|(\iota')^\partial|}w^2(\iota')\big)\\
	&=  C\sum_{\iota,\iota'} \sum_{\eta_2 \in T(\iota),\eta_2' \in T(\iota')}\bE\big( 3^{|\eta_2^\partial|}w^2(\eta_2)\big)\bE\big( 3^{|(\eta_2')^\partial|}w^2(\eta_2')\big) \\
	&= C \bigg( \sum_{\eta_2  \in T_{\eta_1}^{ij}} 3^{|\eta_2^\partial|} \bE \,w^2(\eta_2) \bigg)^2 \,.
	 \end{split}\ee	
	 
Based on 	\eqref{eq:4thmom0}, let us now consider the different cases considered in the statement of the lemma. First of all, without any further assumptions on $\eta_1$ (except the constraint on the size of its vertex set), we note that the constant $C>0$ on the \rhs in \eqref{eq:4thmom0} is independent of $\eta_1$ and, based on the structural results from Lemma \ref{lm:peta} (1), we get 
		\[\begin{split}
		 \sum_{\eta_2  \in T_{\eta_1}^{ij}} 3^{|\eta_2^\partial|}\bE w^2(\eta_2) &=\sum_{l  \geq 0}\sum_{\substack{  \eta_2 = \nu_1\circ\ldots\circ\nu_l\in T_{\eta_1}^{ij}:\\ \nu_m \in \Gp, \nu_m \cap \nu_n=\emptyset \,(\forall \, m\neq n), \\|\cV_{\eta_1} \cap \cV_{\eta_2}| \geq 2 }} 3^{|\eta_2^\partial|} \bE \,w^2(\eta_2)  + \sum_{ \substack{\eta_2\in T_{\eta_1}^{ij}\cap \Gl: \\ |\cV_{\eta_1} \cap \cV_{\eta_2}| \leq 1  }} 3^{|\eta_2^\partial|} \bE \,w^2(\eta_2)\\
		 & \leq \sum_{l \geq 0} 3^l\bigg(\sum_{u,v \in \cV_{\eta_1} } \sum_{\nu \in \Gp^{uv} } \bE w^2(\nu) \bigg)^l + \sum_{\substack{ \eta_2\in   \Gl:\\ \{i,j\}\in \eta_2} }\bE \,w^2(\eta_2)\\
		 & \leq C \sum_{l \geq 0} 3^l |\cV_{\eta_1}|^{2l} N^{-l} + CN^{-2}  \leq C (|\cV_{\eta_1}|^2 +1)N^{-1}\leq C.
		\end{split}\]
Notice that we used that if $\eta_2\in T_{\eta_1}^{ij}\cap \Gl$ with $ |\cV_{\eta_1} \cap \cV_{\eta_2}| \leq 1$, then $\{i,j\}\in\eta_2$ by definition of $T_{\eta_1}^{ij}$ and $ | \eta_2^\partial|=0$. Similarly, $ |\eta_2^\partial|\leq l$ if $\eta_2$ equals an edge disjoint union of $l$ paths in $\Gp$ with endpoints in $\cV_{\eta_1}$. This proves \eqref{eq:lm261}. The bound \eqref{eq:lm262} follows in the same way. Here, we simply use that if $\eta_1$ is such that $\eta_2\neq \emptyset$ for all $\eta_2\in T_{\eta_1}^{ij}$, then 
		\[\begin{split}
		 \sum_{\eta_2  \in T_{\eta_1}^{ij}} 3^{|\eta_2^\partial|}\bE w^2(\eta_2) &=  \sum_{l  \geq 1}\sum_{\substack{  \eta_2 = \nu_1\circ\ldots\circ\nu_l\in T_{\eta_1}^{ij}:\\ \nu_m \in \Gp, \nu_m \cap \nu_n=\emptyset \,(\forall \, m\neq n), \\|\cV_{\eta_1} \cap \cV_{\eta_2}| \geq 2 }} 3^{|\eta_2^\partial|} \bE \,w^2(\eta_2)  + \sum_{ \substack{\eta_2\in T_{\eta_1}^{ij}\cap \Gl: \\ |\cV_{\eta_1} \cap \cV_{\eta_2}| \leq 1  }} 3^{|\eta_2^\partial|} \bE \,w^2(\eta_2)\\
		 & \leq  C \sum_{l \geq 1} 3^l |\cV_{\eta_1}|^{2l} N^{-l} + CN^{-2}  \leq C (|\cV_{\eta_1}|^2 +1)N^{-1}. 
		\end{split}\]

It remains to prove the last bound \eqref{eq:lm263}. If $i \not \in \cV_{\eta_1}$ or $j \not \in \cV_{\eta_1}$, then $ \{i,j\}\not \in \eta_1$ so that $\{i,j\} \in \eta_2$ for all $\eta_2 \in T_{\eta_1}^{ij}$. Here, we obtain additional decay. If $i \in \cV_{\eta_1}$ and $ j \not \in \cV_{\eta_1}$, we can proceed as above, but use instead    
		\[ \begin{split}
		 \sum_{\eta_2  \in T_{\eta_1}^{ij}} \bE w^2(\eta_2) &\leq  \sum_{l  \geq 1}\,\sum_{\substack{  \eta_2 = \nu_1\circ\ldots\circ\nu_l\in T_{\eta_1}^{ij}: \\ \nu_m \in \Gp, \nu_m \cap \nu_n=\emptyset \,(\forall \, m\neq n), \\ \{i,j\}\in \eta_2, |\cV_{\eta_1} \cap \cV_{\eta_2}| \geq 2  }} 3^{|\eta_2^\partial|}\bE \,w^2(\eta_2)  + \sum_{\substack{ \eta_2\in T_{\eta_1}^{ij}\cap  \Gl : \\ |\cV_{\eta_1} \cap \cV_{\eta_2}| \leq 1 }} 3^{|\eta_2^\partial|} \bE \,w^2(\eta_2)\\
		 &\leq \sum_{l \geq 1} 3^{l}\bigg(\sum_{u \in \cV_{\eta_1}} \sum_{\nu \in \Gp^{ju} } \bE w^2(\{i,j\}\circ \nu) \bigg)\bigg(\sum_{u,v \in \cV_{\eta_1} } \sum_{\nu \in \Gp^{uv} } \bE w(\nu)^2 \bigg)^{l-1} +CN^{-2} \\
		&\leq C ( |\cV_{\eta_1}| +1)N^{-2}. 
		\end{split} \]
For the cases $i \not \in \cV_{\eta_1}$, $ j  \in \cV_{\eta_1}$ and both $i\not \in \cV_{\eta_1},j \not \in \cV_{\eta_1}$, we proceed analogously.
\end{proof}


Collecting the previous results and combining them with the key \textbf{Steps 1, 2, 3} outlined above, we are now ready to control the remaining error term $ R_{ij}^{(2)}$.  
\begin{cor}\label{cor:R2}
Let $\beta <1$ and let $R_{ij}^{(2)} $ be defined as in \eqref{eq:defR2}. Then, for every $\epsilon>0$ there exists a constant $C>0$ such that
		\[\max_{i,j \in [N], i\neq j} \big \|R_{ij}^{(2)} \big\|_{L^2(\Omega) } \leq C N^{-3/2+\epsilon}. \]
\end{cor}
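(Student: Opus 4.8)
The plan is to run the machinery assembled in \textbf{Steps 1--3} through the chain $\bE (R_{ij}^{(2)})^2\leq CN\,\bE(\wt R_{ij}^{(2)})^2$ of \eqref{eq:R2step0} together with the three-scenario bound \eqref{eq:R2step3b} for $\bE(\wt R_{ij}^{(2)})^2$, and to verify that each of the three terms on the right-hand side of \eqref{eq:R2step3b} is $O(N^{-4+\epsilon})$. This gives $\bE(\wt R_{ij}^{(2)})^2=O(N^{-4+\epsilon})$, hence $\bE(R_{ij}^{(2)})^2=O(N^{-3+\epsilon})$ and $\|R_{ij}^{(2)}\|_{L^2(\Omega)}=O(N^{-3/2+\epsilon})$; since every estimate below is uniform in $i\neq j$, the maximum over $i\neq j$ costs nothing. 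Throughout I would use that the summation in \eqref{eq:R2step3b} runs over $\eta_1$ with $|\cV_{\eta_1}|<k_N^2+k_N^4=O((\log N)^6)$, so that all the prefactors $(1+|\cV_{\eta_1}|)^2$ and $(|\cV_{\eta_1}|^2+1)^2$ occurring below are $O(N^\epsilon)$, and that by $L^2(\Omega)$-orthogonality of distinct graphs in $\Gsc$ the sum $\sum_{\eta_1}\bE w^2(\eta_1)$ over the relevant families of $\eta_1$ is exactly the quantity estimated in Lemma \ref{lm:Akl}.

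For the first term ($\{i,j\}\subset\cV_{\eta_1}$, $l:=|\cE_{\eta_1}|-|\cV_{\eta_1}|\leq 1$): since the constraint $|\cV_\gamma\cap\cV_\tau|\geq 2$ is built into $S_{\eta_1}^{ij}$, Lemma \ref{lm:peta}(2) guarantees that every $\eta_2\in T_{\eta_1}^{ij}$ is nonempty, so \eqref{eq:lm262} bounds the inner $\eta_2$-expectation by $C(|\cV_{\eta_1}|^2+1)^2N^{-2}$, while $\sum_{\eta_1}\bE w^2(\eta_1)$ over $\eta_1\in A_l$ with $\{i,j\}\subset\cV_{\eta_1}$ and small vertex set is $\leq CN^{-l(1-\epsilon)-2}$ by the second bound in \eqref{eq:almij2} with $m=2$; summing over $l\in\{0,1\}$ (with $e^{4l}$ a constant) yields $O(N^{-4+\epsilon})$. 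For the second term ($\{i,j\}\subset\cV_{\eta_1}$, $l\geq 2$): here I would discard the $\eta_2$-structure and use only the crude estimate \eqref{eq:lm261}, $\bE\sum_{\eta_2,\eta_2'}w^2(\eta_2)w^2(\eta_2')\leq C$, together again with $\sum_{\eta_1}\bE w^2(\eta_1)\leq CN^{-l(1-\epsilon)-2}$. The key point is that for $N$ large the correction $e^{4l}$ is beaten by the labelling decay, $e^{4l}N^{-l(1-\epsilon)}\leq e^{-l((1-\epsilon)\log N-4)}$, which is summable in $l\geq 2$ with sum $O(N^{-2+\epsilon})$; combined with the factor $N^{-2}$ coming from $\{i,j\}\subset\cV_{\eta_1}$ this again produces $O(N^{-4+\epsilon})$, so in this regime the full decay already comes from properties of $\eta_1$ alone.

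For the third term ($i\notin\cV_{\eta_1}$ or $j\notin\cV_{\eta_1}$): then $\{i,j\}\notin\eta_1$, which forces $\{i,j\}\in\eta_2$ for every $\eta_2\in T_{\eta_1}^{ij}$, so \eqref{eq:lm263} bounds the inner expectation by $C(|\cV_{\eta_1}|+1)^2N^{-4}$; the remaining $\eta_1$-sum, now over \emph{all} $\eta_1\in\Gsc$ with small vertex set, is $\sum_{l\geq 0}e^{4l}\,\bE\,w(\{\gamma\in A_l:\,|\cV_\gamma|\text{ small}\})^2\leq C\sum_{l\geq 0}e^{4l}N^{-l(1-\epsilon)}\leq C$ by the first bound in \eqref{eq:almij2}, so this term is $O(N^{-4+\epsilon})$ as well. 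Combining the three estimates gives the claim.

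Given the preparatory Lemmas \ref{lm:peta}, \ref{lm:Tetambd} and \ref{lm:Akl}, the argument is essentially bookkeeping, and the one genuinely delicate point I expect is balancing the combinatorial correction $e^{4(|\cE_{\eta_1}|-|\cV_{\eta_1}|)}$ inherited from \textbf{Step 2} against the $N$-decay coming from the number of vertex labellings in Lemma \ref{lm:Akl}. This balancing is exactly what forces the restriction to $\eta_1$ with polylogarithmic vertex set (so that all sub-leading factors disappear into $N^\epsilon$) and what makes the sharp bound $\bE(\wt R_{ij}^{(2)})^2=O(N^{-4+\epsilon})$ — rather than the naive $O(N^{-3})$ — attainable, so that the extra factor $N$ lost in the reduction \eqref{eq:R2step0} can be absorbed.
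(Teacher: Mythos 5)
Your proposal is correct and follows essentially the same route as the paper's proof: the reduction $\bE(R_{ij}^{(2)})^2\leq CN\,\bE(\wt R_{ij}^{(2)})^2$, the three-way split \eqref{eq:R2step3b}, and the same pairing of Lemma \ref{lm:peta}(2), Lemma \ref{lm:Akl}/\eqref{eq:almij2} and Lemma \ref{lm:Tetambd} in each of the three scenarios, with the $e^{4l}$ factor absorbed by the $N^{-l(1-\epsilon)}$ labelling decay exactly as in the paper. No gaps.
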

\begin{proof} 
Starting with the bound \eqref{eq:R2step0}, we combine \eqref{eq:R2step1} with Lemma \ref{lm:peta} (\textbf{Step 1}) to conclude \eqref{eq:R2step2} (\textbf{Step 2}) and split the latter as in \eqref{eq:R2step3b} (\textbf{Step 3}). Thus, we arrive at
		\be\label{c}\begin{split}
		\bE \big( R_{ij}^{(2)}\big)^2 
		& \leq CN \sum_{l\leq 1}  \sum_{ \substack{ \eta_1 \in \Gsc: \{i,j\}\subset \cV_{\eta_1}, \\ |\cV_{\eta_1}|< k_N^2+k_N^4, \\ | \cE_{\eta_1}|-|\cV_{\eta_1}|=l }} (1+|\cV_{\eta_1}|)^2  \, \bE \,w^2(\eta_1) \, \bE \bigg(  \sum_{ \eta_2\in T_{\eta_1}^{ij} }  w^2 (\eta_2)  \bigg)^2 \\
		&\hspace{0.5cm} + CN \sum_{l\geq 2}  \sum_{ \substack{ \eta_1 \in \Gsc: \{i,j\}\subset \cV_{\eta_1}, \\ |\cV_{\eta_1}|< k_N^2+k_N^4, \\ | \cE_{\eta_1}|-|\cV_{\eta_1}|=l }} (1+|\cV_{\eta_1}|)^2 e^{4l} \, \bE \,w^2(\eta_1) \, \bE \bigg(  \sum_{ \eta_2\in T_{\eta_1}^{ij} }  w^2 (\eta_2)  \bigg)^2\\
		&\hspace{0.5cm} + CN   \sum_{ \substack{ \eta_1 \in \Gsc:\\ \{i,j \}\cap \cV_{\eta_1}^c\neq \emptyset,\\ |\cV_{\eta_1}|< k_N^2+k_N^4  }} (1+|\cV_{\eta_1}|)^2 e^{4 (|\cE_{\eta_1}|-|\cV_{\eta_1}|)} \, \bE \,w^2(\eta_1) \, \bE \bigg(  \sum_{ \eta_2\in T_{\eta_1}^{ij} }  w^2 (\eta_2)  \bigg)^2. 
		\end{split}\ee
By Lemma \ref{lm:peta}, we know that $| \cE_{\eta_1}|-|\cV_{\eta_1}|\leq 1 $ implies that $\eta_2\neq \emptyset $ for all $\eta_2 \in T_{\eta_1}^{ij}$. Applying Lemma \ref{lm:Akl} and Lemma \ref{lm:Tetambd}, we thus conclude that 
		\[\begin{split}
		 CN \sum_{l\leq 1}  \sum_{ \substack{ \eta_1 \in \Gsc: \{i,j\}\subset \cV_{\eta_1}, \\ |\cV_{\eta_1}|< k_N^2+k_N^4, \\ | \cE_{\eta_1}|-|\cV_{\eta_1}|=l }} (1+|\cV_{\eta_1}|)^2  \, \bE \,w^2(\eta_1) \, \bE \bigg(  \sum_{ \eta_2\in T_{\eta_1}^{ij} }  w^2 (\eta_2)  \bigg)^2&\leq CN^{-3+\epsilon}
		\end{split}\]
for every small $\epsilon>0$ (bounding $ k_N^n \ll N^{\epsilon}$ for every fixed $n\in\bN$) and $N$ large enough. Based on the same lemmas, we obtain that
		\[\begin{split}
		&CN \sum_{l\geq 2}  \sum_{ \substack{ \eta_1 \in \Gsc: \{i,j\}\subset \cV_{\eta_1}, \\ |\cV_{\eta_1}|< k_N^2+k_N^4, \\ | \cE_{\eta_1}|-|\cV_{\eta_1}|=l }} (1+|\cV_{\eta_1}|)^2 e^{4l} \, \bE \,w^2(\eta_1) \, \bE \bigg(  \sum_{ \eta_2\in T_{\eta_1}^{ij} }  w^2 (\eta_2)  \bigg)^2 \\
		&\leq C N^{-1 } k_N^{8} \sum_{l=2}^{2k_N^8}    e^{4l} N^{-l(1-\epsilon')}  \leq CN^{-3+\epsilon}
		\end{split}\]
for every small $\epsilon', \epsilon> 0$ and, analogously, that 
		\[CN \!\!\!\!\!\!  \sum_{ \substack{ \eta_1 \in \Gsc:\\ \{i,j \}\cap \cV_{\eta_1}^c\neq \emptyset,\\ |\cV_{\eta_1}|< k_N^2+k_N^4  }} (1+|\cV_{\eta_1}|)^2 e^{4 (|\cE_{\eta_1}|-|\cV_{\eta_1}|)} \, \bE \,w^2(\eta_1) \, \bE \bigg(  \sum_{ \eta_2\in T_{\eta_1}^{ij} }  w^2 (\eta_2)  \bigg)^2\leq CN^{-3+\epsilon}. \]		
\end{proof}
 
As a corollary of Lemma \ref{lm:R134567} and Corollary \ref{cor:R2}, we can reduce the comparison of $\tbf{M}$ with the resolvent $(1+\beta^2- \beta \tbf{G})^{-1}$ to that of comparing its main contribution $\tbf{P} = (p_{ij})_{1\leq i,j\leq N}\in\bR^{N\times N}$, defined in \eqref{eq:defP},
with $(1+\beta^2- \beta \tbf{G})^{-1}$. 
The approximation of $\tbf{M}$ by $\tbf{P}$ is summarized in the next corollary which gives precise meaning to \eqref{eq:heuristics2} and which proves the first half of Theorem \ref{thm:main}. In the proof of the corollary, we make use of the main result of \cite{ALR}, which tells us that the fluctuation term $\wh Z_N$, defined in \eqref{eq:ZNgraph}, converges in distribution to a log-normal variable. 

\begin{prop}(\cite[Prop. 2.2]{ALR}) \label{prop:ALR}
Let $\beta <1$ and set 
		\[ \sigma^2 = \sum_{k\geq 3} \frac{\beta^{2k}}{2k}. \]
Then $\wh Z_N$, defined in \eqref{eq:ZNgraph}, converges in distribution to a log-normal variable
		\[  \lim_{N\to \infty}\wh Z_N \stackrel{\emph d }{=} \exp\Big (Y - \frac{\sigma^2}2 \Big),  \]
where $Y \sim \cN(0,\sigma^2)$ denotes a centered, Gaussian random variable of variance $\sigma^2$. 
\end{prop}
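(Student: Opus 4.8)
The plan is to read off the log-normal limit directly from the graphical representation $\wh Z_N=\sum_{\gamma\in\Gsc}w(\gamma)$ in \eqref{eq:ZNgraph}: one truncates the graph size, replaces the truncated sum by a product over cycles, and then runs a central limit theorem for the logarithm of that product. First, I would truncate. For a fixed cutoff $K\in\bN$, set $\wh Z_N^{\leq K}:=\sum_{\gamma\in\Gsc:\,|\gamma|\leq K}w(\gamma)$. Lemma \ref{lm:lrg} gives $\|\wh Z_N-\wh Z_N^{\leq K}\|_{L^2(\Omega)}^2\leq C\,e^{-(K+1)\log\beta^{-1}}$ with $C$ independent of $N$, so it suffices to identify the distributional limit of $\wh Z_N^{\leq K}$ for each fixed $K$ and then send $K\to\infty$.

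Next, I would compare $\wh Z_N^{\leq K}$ with the product $P_N^K:=\prod_{\gamma\in\Gl:\,|\gamma|\leq K}\big(1+w(\gamma)\big)$. Expanding the product over subsets of cycles and grouping the resulting multigraphs, the difference $\wh Z_N^{\leq K}-P_N^K$ is a signed sum of weights of configurations involving at least two cycles that are not pairwise vertex-disjoint; in each such configuration the number of distinct vertices is strictly smaller than the number of distinct edges, which is precisely the feature that forces decay. The second-moment estimates behind Lemma \ref{lm:Akl} (the classes $A_l$ with $l\geq1$, plus the obvious variant for cycles sharing edges, where $\bE\,\tanh^4(\beta g_{uv})=O(N^{-2})$ beats the vertex count) then show that $\bE\big(\wh Z_N^{\leq K}-P_N^K\big)^2\leq C_K\,N^{-1+\epsilon}\to0$ as $N\to\infty$, for each fixed $K$.

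It then remains to analyse $P_N^K$. Since $|w(\gamma)|<1$ always, $P_N^K>0$ and $\log P_N^K=\sum_{\gamma\in\Gl:\,|\gamma|\leq K}\log\big(1+w(\gamma)\big)$. On the event $\{\max_{\gamma\in\Gl}|w(\gamma)|\leq1/2\}$, whose probability tends to one by \cite[Eq.~(3.5)]{ALR}, we may expand $\log(1+w(\gamma))=w(\gamma)-\tfrac12w(\gamma)^2+O(|w(\gamma)|^3)$, so that $\log P_N^K=\sum_\gamma w(\gamma)-\tfrac12\sum_\gamma w(\gamma)^2+o_{\bP}(1)$, the sums running over $\gamma\in\Gl$ with $|\gamma|\leq K$. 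The central claim is that $\sum_{\gamma\in\Gl:\,|\gamma|\leq K}w(\gamma)$ converges in distribution (written $\Rightarrow$) to $\cN(0,\sigma_K^2)$ with $\sigma_K^2:=\sum_{k=3}^{K}\beta^{2k}/(2k)$, which I would prove by the method of moments: the $p$-th moment expands into $\sum\bE\big[w(\gamma_1)\cdots w(\gamma_p)\big]$ over $p$-tuples of cycles, a term being non-zero only when every edge is covered an even number of times, and a vertex count shows that the only families of configurations surviving the limit $N\to\infty$ are the perfect pairings of identical cycles, which reproduce exactly the Wick moments of $\cN(0,\sigma_K^2)$ — using $\bE\,\tanh^2(\beta g_{uv})=\beta^2N^{-1}+O(N^{-2})$ and that there are $(1+o(1))\,N^k/(2k)$ cycles of length $k$, so $\sum_{\gamma\in\Gl:\,|\gamma|=k}\bE\,w(\gamma)^2\to\beta^{2k}/(2k)$. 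A much simpler first- and second-moment computation yields $\sum_{\gamma\in\Gl:\,|\gamma|\leq K}w(\gamma)^2\to\sigma_K^2$ in $L^2(\Omega)$ and $\bE\sum_{\gamma\in\Gl:\,|\gamma|\leq K}|w(\gamma)|^3\to0$. Combining these by Slutsky gives $\log P_N^K\Rightarrow Y_K-\sigma_K^2/2$ with $Y_K\sim\cN(0,\sigma_K^2)$, hence $P_N^K\Rightarrow\exp(Y_K-\sigma_K^2/2)$ by the continuous mapping theorem, and $\wh Z_N^{\leq K}\Rightarrow\exp(Y_K-\sigma_K^2/2)$ by the previous step.

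Finally, since $\sigma_K^2\uparrow\sigma^2=\sum_{k\geq3}\beta^{2k}/(2k)<\infty$, we have $\exp(Y_K-\sigma_K^2/2)\Rightarrow\exp(Y-\sigma^2/2)$ as $K\to\infty$; combined with the bound $\sup_N\|\wh Z_N-\wh Z_N^{\leq K}\|_{L^2(\Omega)}\to0$ as $K\to\infty$ from the first step, a standard $3\epsilon$ argument against bounded Lipschitz test functions upgrades this to $\wh Z_N\Rightarrow\exp(Y-\sigma^2/2)$, which is the claim. I expect the main obstacle to be the central limit theorem in the third step: the weights $\{w(\gamma)\}_{\gamma\in\Gl}$ are pairwise orthogonal in $L^2(\Omega)$ but not independent — distinct cycles can share edges — so no off-the-shelf CLT applies, and the combinatorial heart of the proof is to show that every configuration of cycles other than a perfect pairing of identical cycles contributes a negative power of $N$, which amounts to weighing the number of vertex labellings of each multigraph against the decay coming from the Gaussian moments $\bE\prod_e\tanh^{n_e}(\beta g_e)$.
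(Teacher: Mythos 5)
Your argument is correct, and it is essentially the proof of \cite[Prop.~2.2]{ALR}: the paper itself does not prove this proposition but imports it from [ALR], whose argument proceeds exactly via your truncation, the comparison of $\wh Z_N^{\leq K}$ with the product $\prod_{\gamma\in\Gl}(1+w(\gamma))$ (the discrepancy being controlled by the excess $|\cE_\gamma|-|\cV_\gamma|\geq 1$ of non-vertex-disjoint configurations), and a moment-method CLT for $\sum_\gamma w(\gamma)$ in which only perfect pairings of identical cycles survive. The only bookkeeping point worth noting is that the difference $\wh Z_N^{\leq K}-P_N^K$ also contains vertex-disjoint unions of small cycles whose \emph{total} edge count exceeds $K$; these are again handled by Lemma \ref{lm:lrg}, so this does not affect the conclusion.
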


\begin{cor}\label{cor:Mmain} Let $\beta <1 $, let $ \tbf{M}\in \bR^{N\times N}$ be defined as in \eqref{eq:mij} and let $ \tbf{P}\in \bR^{N\times N}$ be defined as in \eqref{eq:defP}. Then, we have in the sense of probability that 
		\[ \lim_{N\to \infty} \| \tbf{M} - \tbf{P} \|_{\emph{F}} =0. \]
\end{cor}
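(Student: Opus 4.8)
The plan is to start from the exact decomposition \eqref{eq:coruseR1to7}, which gives for $i\neq j$
\[ \langle \sigma_i\sigma_j\rangle - p_{ij} = \big(1-\tanh^2(\beta g_{ij})\big)\,\wh Z_N^{-1}\big(R_{ij}^{(1)}-R_{ij}^{(2)}-R_{ij}^{(3)}-R_{ij}^{(4)}+R_{ij}^{(5)}\big) - R_{ij}^{(6)} - R_{ij}^{(7)}, \]
together with the fact that $\langle\sigma_j^2\rangle = 1 = p_{jj}$, so that $\|\tbf M - \tbf P\|_{\text F}^2 = \sum_{i\neq j}|\langle\sigma_i\sigma_j\rangle - p_{ij}|^2$. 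By Lemma \ref{lm:R134567} and Corollary \ref{cor:R2}, each of $R_{ij}^{(1)},R_{ij}^{(2)},R_{ij}^{(5)},R_{ij}^{(6)},R_{ij}^{(7)}$ and $\wh Z_N^{-1}R_{ij}^{(4)}$ has $L^2(\Omega)$ norm at most $N^{-3/2+\epsilon}$, uniformly in $i\neq j$, while $R_{ij}^{(3)}$ is uniformly at most $N^{-\log N}$ with probability tending to one. Since $\|\tbf M-\tbf P\|_{\text F}^2$ is a sum of $N^2-N$ such entries, each being the square of a quantity of uniform size $N^{-3/2+\epsilon}$, the Frobenius norm should vanish — the only real difficulty being that the prefactor $\wh Z_N^{-1}$ multiplying $R^{(1)},R^{(2)},R^{(3)},R^{(5)}$ is random, correlated with these errors, and only controlled in distribution by Proposition \ref{prop:ALR}.

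To deal with this, I would fix $\delta,\delta'>0$ and, using Proposition \ref{prop:ALR} (so that $\wh Z_N$ converges in distribution to $\exp(Y-\sigma^2/2)$, which is positive almost surely), choose a continuity point $c>0$ of the limit law small enough that $\bP(\exp(Y-\sigma^2/2)\geq c)\geq 1-\delta'$; on the event $A_N:=\{\wh Z_N\geq c\}$ one has $\wh Z_N^{-1}\leq c^{-1}$ and $\liminf_N\bP(A_N)\geq 1-\delta'$. Let also $B_N:=\{\max_{i\neq j}|R_{ij}^{(3)}|\leq N^{-\log N}\}$, so $\bP(B_N)\to1$ by Lemma \ref{lm:R134567}. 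On $A_N\cap B_N$, using $0\leq 1-\tanh^2\leq1$, the contribution of $R^{(3)}$ to $\|\tbf M-\tbf P\|_{\text F}^2$ is bounded deterministically by $Cc^{-2}N^2 N^{-2\log N}\to0$, and all remaining terms are controlled by
\[ X_N := \sum_{i\neq j}\Big( \sum_{k\in\{1,2,5,6,7\}}|R_{ij}^{(k)}|^2 + \big|\wh Z_N^{-1}R_{ij}^{(4)}\big|^2 \Big), \]
so that $\|\tbf M-\tbf P\|_{\text F}^2 \leq C(1+c^{-2})X_N + o(1)$ on $A_N\cap B_N$.

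Finally I would estimate $\bE X_N \leq C N^2\big(\max_{i\neq j}\sum_{k\in\{1,2,5,6,7\}}\|R_{ij}^{(k)}\|_{L^2(\Omega)}^2 + \max_{i\neq j}\|\wh Z_N^{-1}R_{ij}^{(4)}\|_{L^2(\Omega)}^2\big) \leq C N^{2}\cdot N^{-3+2\epsilon}=CN^{-1+2\epsilon}$, which tends to zero once $\epsilon<1/2$ is fixed; hence $X_N\to0$ in $L^1(\Omega)$, in particular in probability. Combining everything,
\[ \bP\big(\|\tbf M-\tbf P\|_{\text F}>\delta\big) \leq \bP(A_N^c)+\bP(B_N^c)+\bP\big(C(1+c^{-2})X_N+o(1)>\delta^2\big), \]
and taking $\limsup_{N\to\infty}$ the right-hand side is at most $\delta'$; since $\delta'>0$ was arbitrary, $\|\tbf M-\tbf P\|_{\text F}\to0$ in probability. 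I expect the only delicate point to be the random normalization $\wh Z_N^{-1}$ just discussed; it is harmless because the errors it multiplies are uniformly of $L^2$-size $N^{-3/2+\epsilon}$, leaving a full power of $N$ to spare after summing $N^2$ entries, and because $\wh Z_N^{-1}$ is tight and bounded away from zero in probability by \cite{ALR} — so everything else reduces to bookkeeping.
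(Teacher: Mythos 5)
Your proposal is correct and follows essentially the same route as the paper: the same exact decomposition \eqref{eq:coruseR1to7}, the same input from Lemma \ref{lm:R134567} and Corollary \ref{cor:R2}, Markov's inequality applied to the sum of squared errors, and the same device of conditioning on a lower bound for $\wh Z_N$ via Proposition \ref{prop:ALR} (the paper phrases this as $\limsup_N \bP(\wh Z_N\leq\epsilon)=\bP(Y\leq\sigma^2/2+\log\epsilon)$ followed by $\epsilon\to 0$, which is equivalent to your choice of a continuity point $c$). No gaps.
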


\begin{proof} 
Since $ \langle \sigma_i \sigma_i\rangle = 1 = p_{ii}$, it is enough to compare the off-diagonal elements of $\tbf M$ and $\tbf P$. Recalling the identity \eqref{eq:coruseR1to7}, we have for $i\neq j$ that
		\[\begin{split}
		(\tbf{M} - \tbf{P})_{ij} & =  (1-\tanh^2(\beta g_{ij}) ) \wh Z_N^{-1} \Big( R_{ij}^{(1)}-R_{ij}^{(2)} -R_{ij}^{(3)}-R_{ij}^{(4)} + R_{ij}^{(5)}    \Big) -  R_{ij}^{(6)} - R_{ij}^{(7)}. 
		\end{split}\]
Combining the fact that $(\sum_{k=1}^{7} R_{ij}^{(k)})^2 \leq C \sum_{k=1}^{7}( R_{ij}^{(k)})^2 $ with $ \tanh^2(.)\leq 1$, we obtain  
\[\begin{split}
		&\bP \big(   \| \tbf{M} - \tbf{P} \|_{\text{F}}^2 >\delta^2    \big) \\
	        & \leq   \bP \bigg(  C  \sum_{1\leq i\neq j \leq N}  \wh Z_N^{-2} \sum_{k=1}^5 \big(R_{ij}^{(k)}\big)^2+\sum_{k=6}^{7}\big(R_{ij}^{(k)}\big)^2 >  \delta^2  \bigg) \\
		& \leq  \sum_{k=1}^5 \bP \bigg(    \sum_{1\leq i\neq j \leq N} \big( R_{ij}^{(k)} \big)^2 >  \frac{\wh Z_N^2 \delta^2}{C'}    \bigg) + \sum_{k=6}^7\bP \bigg(    \sum_{1\leq i\neq j \leq N} \big( R_{ij}^{(k)} \big)^2 >  \frac{\delta^2}{C'}    \bigg).
		\end{split}\]
Applying Markov's inequality on the \rhs of the last bound yields		
		\[\begin{split}
		&\bP \big(  \| \tbf{M} - \tbf{P} \|_{\text{F}} >\delta     \big))\\
		&\leq  \bP\big(  \wh Z_N \leq \epsilon   \big)+  \sum_{\substack{ k=1,\ldots, 5, \\ k\neq 4 }} \bP \bigg(   \sum_{1\leq i\neq j \leq N} \big( R_{ij}^{(k)} \big)^2 >   \frac{\delta^2 \epsilon^2}{C'}    \bigg)  +  C' \delta^{-2}      \sum_{1\leq i\neq j \leq N}     \| \wh Z_N^{-1} R_{ij}^{(4)}  \|_{L^2(\Omega)}^2  \\
		& \hspace{0.5cm}+  C' \delta^{-2}      \sum_{1\leq i\neq j \leq N}  \sum_{k=6}^7  \| R_{ij}^{(k)}  \|_{L^2(\Omega)}^2  \\
		& \leq \bP\big(   \wh Z_N \leq \epsilon  \big) + C' \delta^{-2}N^{2}   (1+ \epsilon^{-2} \big) \max_{k\in \{1,2,4,5,6,7\}} \max_{i,j\in [N]:i\neq j}  \| R_{ij}^{(k)}  \|_{L^2(\Omega)}^2  \\
		&\hspace{0.5cm} +   \bP\Big(   \max_{i,j\in [N]:i\neq j} | R_{ij}^{(3)}|^2 >  N^{-2} \delta^2 \epsilon^2/ C' \Big)
		\end{split}\]	
for every $\delta, \epsilon > 0$. 	By Lemma \ref{lm:R134567} and Corollary \ref{cor:R2}, we conclude that 
		\[\begin{split}
		\limsup_{N\to\infty }\bP \big(  \| \tbf{M} - \tbf{P} \|_{\text{F}} >\delta    \big)&\leq \limsup_{N\to \infty}\Big(  \bP\big( \big \{ \wh Z_N \leq \epsilon \big\} \big)  + C' \delta^{-2}   (1+ \epsilon^{-2} \big) CN^{-1+\epsilon'} +o(1)\Big)\\
		& = \bP\big(    Y  \leq \sigma^2/2+ \log \epsilon \big), 
		\end{split}\]
for every $\delta>0$. Here $o(1)$ denotes an error with $o(1)\to 0$ as $N\to \infty$ and $ Y \sim \cN(0, \sigma^2)$ denotes a Gaussian random variable with variance as in Proposition \ref{prop:ALR}. Since $\epsilon>0$ was arbitrary, the claim follows by sending $\epsilon\to 0$.
\end{proof}
\section{Proof of Theorem \ref{thm:main}}\label{sec:proofmain}
In this section we conclude the proof of Theorem \ref{thm:main}. Throughout this section, we assume that for some sufficiently small $\epsilon>0$, we have that $$ \|\beta \tbf G\|_{\text {op}} \leq 2\beta +(1-\beta)^2 \epsilon/2.$$ 
It is well-known (see \eg \cite{BY}) that this holds true on a set of probability at least $1-o(1)$, for some error $o(1) $ (that depends on $\epsilon$) such that $\lim_{N\to \infty}o(1)=0$.
 
As a consequence of Corollary \ref{cor:Mmain}, Theorem \ref{thm:main} follows if we prove that $ \tbf P\in \bR^{N\times N}$, defined in \eqref{eq:defP}, converges in norm to $ (1+\beta^2- \beta \tbf G)^{-1}$. The assumption on $ \|  \tbf G\|_{\text {op}} $ implies $ \| (1+\beta^2-\beta \tbf G)^{-1}\|_{\text {op}}\leq (1-\beta)^{-2}( 1+\epsilon )$ \st it is enough to prove that the matrix 
		\be\label{eq:defQ}\tbf{Q}= (q_{ij})_{1\leq i,j\leq N}:=\tbf P (1+\beta^2- \beta \tbf G) - \text{id}_{\bR^N} \in \bR^{N\times N} \ee
converges in norm to zero. Before providing the details on this, let us point out two crucial cancellation mechanisms that readily indicate the smallness of $\tbf Q$. To ease the notation, here and from now on all summations over matrix indices run over $[N]$ (recall also that $g_{ii} = 0$). Using the graphical representation of $p_{ij}$ for $i \neq j$, we have that  
		\be\label{eq:qijheuristic}\begin{split}
		 q_{ij} &=  p_{ij}  - \delta_{ij} -\beta g_{ij} + \beta^2 p_{ij}   - \beta \sum_{k: k \neq i} p_{ik} g_{kj}  \\
		 & = \bigg(  \sum_{\substack{ \gamma \in \Gp^{ij}: |\gamma|\geq 2 }} \,\,\prod_{e\in \gamma} \beta g_{e}\bigg) \textbf{1}_{i\neq j}+ \beta^2 p_{ij}- \beta \sum_{k: k \neq i} p_{ik} g_{kj}
		\end{split}\ee
and our main task is to show that the main contributions to the first two terms on the \rhs of \eqref{eq:qijheuristic} are cancelled by the last term on the \rhs of \eqref{eq:qijheuristic}, that is, by the sum
		\be\label{eq:qijheuristic2} \beta \sum_{k: k \neq i} p_{ik} g_{kj}.\ee
These cancellations are based on two mechanisms: for $i\neq k$, the matrix element $p_{ik}$ consists of a sum over the weights $\prod_{e\in \gamma} \beta g_{e}$ of the self-avoiding paths $\gamma\in \Gp^{ik}$ from vertex $i$ to $k$. Such paths $\gamma$ either contain the edge $\{j,k\}$ or they do not contain $\{j,k\}$. In the latter case, if additionally $ j\not \in \cV_\gamma$, multiplying the weight  
 $\prod_{e\in \gamma} \beta g_{e}$ by $\beta g_{kj}$ corresponds to the weight of a self-avoiding path from vertex $i$ to vertex $j$ of length greater than two, indicating a cancellation of such contributions with the first term on the \rhs of \eqref{eq:qijheuristic} (the graphs that do not contain $ \{j,k\}$, but that do contain the vertex $j$ turn out to lead to negligible errors). On the other hand, if $\gamma\in \Gp^{ik}$ already contains $\{j, k\}$, then $\gamma$ is equal to a union $\gamma = \gamma'\circ \{j,k\}$ for some self-avoiding path $\gamma'\in \Gp$ from vertex $i$ to $j$. If we factor out $\beta g_{kj}$ from $\prod_{e\in \gamma} \beta g_{e}$, the law of large numbers 
 		\[ \beta^2 \sum_{k:k\neq i} g_{kj}^2 \approx \beta^2  \]
indicates a cancellation of such contributions with the term $\beta^2 p_{ij}$ on the \rhs of \eqref{eq:qijheuristic}.

Despite the simplicity of the cancellation effects outlined above, keeping track of the precise errors that remain to be controlled after taking into account the cancellations involves a number of rather tedious calculations. We summarize this in the next lemma. For its statement, we define matrices $ \tbf{Q}^{(k)} = (q_{ij}^{(k)})_{1\leq i,j\leq N} \in \bR^{N\times N}$, $k=1,\ldots,5$, by 
		\be \label{eq:defQ1to5}\begin{split}
		q_{ij}^{(1)}&:=  \begin{cases} - 2 \sum_{ \substack{    \gamma_1  \in \Gp^{ij}    }}  \Big(\prod_{\substack { e\in \gamma_1  }} \beta g_e\Big) \sum_{ \substack{    \gamma_2 \in \Gl : \\ \cV_{\gamma_1}\cap \cV_{\gamma_2}=\{j\} }}   \prod_{\substack { e'\in \gamma_2   }} \beta g_{e'} &:i\neq j \\ 0& :i=j, \end{cases} \\
		q_{ij}^{(2)}&:= \begin{cases} 0  &:i\neq j\\  -  2 \sum_{ \substack{    \gamma\in \Gl:\, i\in \cV_\gamma  }}  \prod_{\substack { e\in \gamma   }} \beta  g_e& :i=j. \end{cases} \\
		q_{ij}^{(3)}&:= \begin{cases} \beta^2 \sum_{k:k\neq i} g^2_{kj} \sum_{ \substack{   \gamma' \in \Gl : \\ \{i,j\}\in\gamma', \, k\in \cV_{\gamma'}    }}  \prod_{\substack { e\in \gamma':e\neq \{i,j\}   }} \beta g_e   &: i\neq j\\ 0& :i=j,\end{cases} \\
		q_{ij}^{(4)}&:= \begin{cases} 0 &:i\neq j \\  - \beta^2  \sum_{k}  \big( g_{ik}^2 - \bE  g_{ik}^2\big) & :i=j, \end{cases} \\
		q_{ij}^{(5)}&:= \begin{cases}  \beta^2 g_{ij}^2 p_{ij}&: i\neq j\\ 0& :i=j. \end{cases}
		\end{split}\ee

\begin{lemma}\label{lm:decqij} For $ \tbf{P}$ defined as in \eqref{eq:defP} and the $ \tbf{Q}^{(k)}$ defined as in \eqref{eq:defQ1to5}, we have that
		\be\label{eq:Qdec}\begin{split} 
		\tbf{Q} =  \frac{\beta^2}{N} \, \tbf{P}  + \tbf{Q}^{(1)} + \tbf{Q}^{(2)} + \tbf{Q}^{(3)}+ \tbf{P} \tbf{Q}^{(4)}+ \tbf{Q}^{(5)}. 
		\end{split}\ee
  
\end{lemma}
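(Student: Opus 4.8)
The plan is to read the components of \eqref{eq:Qdec} off the explicit formula \eqref{eq:qijheuristic} for $q_{ij}$ ($i\neq j$), together with its diagonal counterpart
\[ q_{ii} \;=\; (1+\beta^2)p_{ii} - \beta\sum_{k} p_{ik}g_{ki} - 1 \;=\; \beta^2 - \beta\sum_{k: k\neq i} p_{ik}g_{ki}, \]
where I used $p_{ii}=1$ and $g_{ii}=0$. The whole proof amounts to expanding the ``mixing term'' $\beta\sum_{k\neq i}p_{ik}g_{kj}$ by inserting $p_{ik}=\sum_{\nu\in\Gp^{ik}}\prod_{e\in\nu}\beta g_e$ (valid since $k\neq i$) and classifying each summand, indexed by a self-avoiding path $\nu\in\Gp^{ik}$ and the extra edge $\{k,j\}$, according to the position of $\{k,j\}$ relative to $\nu$. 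Discarding the terms with $k=j$ (which vanish since $g_{jj}=0$), there are exactly three cases: (A) $j\notin\cV_\nu$; (B) $\{k,j\}\in\nu$ (equivalently, since $k$ has degree one in $\nu$, $j$ is the neighbour of $k$ on $\nu$, so $\nu=\nu'\circ\{j,k\}$ with $\nu'\in\Gp^{ij}$ and $k\notin\cV_{\nu'}$); (C) $j\in\cV_\nu$ but $\{k,j\}\notin\nu$, i.e.\ $j$ is an interior vertex of $\nu$ not adjacent to $k$.

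In case (A) the map $\nu\mapsto\nu\circ\{k,j\}$ is a bijection onto $\{\gamma\in\Gp^{ij}:|\gamma|\geq 2\}$ preserving product weights, so this contribution is exactly the first term of \eqref{eq:qijheuristic} and cancels against it. In case (B), the product weight acquires a factor $\beta g_{jk}\cdot g_{kj}=\beta g_{kj}^2$, and summing over the admissible $k$ (those with $k\notin\cV_{\nu'}$) gives $\beta^2\big(\sum_k g_{kj}^2\big)p_{ij}-\beta^2\sum_{\nu'\in\Gp^{ij}}\big(\prod_{e\in\nu'}\beta g_e\big)\sum_{k\in\cV_{\nu'}}g_{kj}^2$. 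I would then use the deterministic identity $\sum_k g_{kj}^2=\tfrac{N-1}{N}+\sum_k(g_{kj}^2-\bE g_{kj}^2)$ to split the first piece into $\beta^2 p_{ij}$ (cancelling the explicit $\beta^2 p_{ij}$ in \eqref{eq:qijheuristic}), $-\tfrac{\beta^2}{N}p_{ij}$, and $\beta^2 p_{ij}\sum_k(g_{kj}^2-\bE g_{kj}^2)=-(\tbf P\tbf Q^{(4)})_{ij}$ (here using that $\tbf Q^{(4)}$ is diagonal, so $(\tbf P\tbf Q^{(4)})_{ij}=p_{ij}q_{jj}^{(4)}$). For the second piece I split $\cV_{\nu'}$ according to $k=i$ (which, after reweighting, yields $q^{(5)}_{ij}=\beta^2 g_{ij}^2 p_{ij}$) or $k$ interior; via the standard bijection between loops containing $\{i,j\}$ and paths in $\Gp^{ij}$ of length $\geq 2$, the latter part is identified with $q^{(3)}_{ij}$.

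In case (C), attaching the chord $\{k,j\}$ at the interior vertex $j$ decomposes $\nu\circ\{k,j\}$ into the sub-path $\gamma_1\in\Gp^{ij}$ of $\nu$ from $i$ to $j$ and a loop $\gamma_2\in\Gl$ (the sub-path of $\nu$ from $j$ to $k$ closed by $\{k,j\}$), with $\cV_{\gamma_1}\cap\cV_{\gamma_2}=\{j\}$; conversely each such pair $(\gamma_1,\gamma_2)$ comes from exactly two configurations $(\nu,k)$ — one for each of the two edges of $\gamma_2$ at $j$ — both carrying product weight $\big(\prod_{e\in\gamma_1}\beta g_e\big)\big(\prod_{e\in\gamma_2}\beta g_e\big)$, so this contribution equals $2\sum_{\gamma_1\in\Gp^{ij}}\big(\prod_{e\in\gamma_1}\beta g_e\big)\sum_{\gamma_2\in\Gl:\,\cV_{\gamma_1}\cap\cV_{\gamma_2}=\{j\}}\prod_{e\in\gamma_2}\beta g_e=-q^{(1)}_{ij}$. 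Collecting (A)--(C) gives the off-diagonal part of \eqref{eq:Qdec} (with $q^{(1)}_{ii}=q^{(3)}_{ii}=q^{(5)}_{ii}=0$ not entering). The diagonal entries are handled by the same argument with $j$ replaced by $i$: case (A) is vacuous since $i\in\cV_\nu$ always; case (B), $\{k,i\}\in\nu$, forces $\nu=\{i,k\}$ and contributes $\beta^2\sum_k g_{ik}^2$ to $\beta\sum_k p_{ik}g_{ki}$, so that $\beta^2-\beta^2\sum_k g_{ik}^2=\tfrac{\beta^2}{N}p_{ii}+q^{(4)}_{ii}=\tfrac{\beta^2}{N}p_{ii}+(\tbf P\tbf Q^{(4)})_{ii}$ after the same rearrangement; and case (C), $\{k,i\}\notin\nu$, produces via a $2$-to-$1$ map onto $\{\gamma\in\Gl: i\in\cV_\gamma\}$ (with $\prod_{e\in\nu}\beta g_e\cdot g_{ki}=\beta^{-1}\prod_{e\in\gamma}\beta g_e$) precisely $q^{(2)}_{ii}=-2\sum_{\gamma\in\Gl:\,i\in\cV_\gamma}\prod_{e\in\gamma}\beta g_e$.

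The computation is entirely elementary and uses no probabilistic input beyond the bookkeeping identity $\sum_k g_{kj}^2=\tfrac{N-1}{N}+\sum_k(g_{kj}^2-\bE g_{kj}^2)$; the operator-norm smallness of the individual matrices $\tbf Q^{(k)}$ (and hence of $\tbf Q$) is a separate matter, established afterwards. The only delicate point — and the main place to be careful — is verifying that the maps used in cases (B) and (C) (and their diagonal analogues) are genuinely bijective, respectively exactly $2$-to-$1$, that all the built-in side constraints ($k\neq i$, $k\notin\cV_{\nu'}$, $j$ interior, $\cV_{\gamma_1}\cap\cV_{\gamma_2}=\{j\}$, simplicity of the resulting loop) are automatically satisfied, and that the product weights match under each correspondence, so that nothing is double-counted or omitted.
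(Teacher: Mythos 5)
Your proposal is correct and follows essentially the same route as the paper: both proofs expand $\beta\sum_{k\neq i}p_{ik}g_{kj}$ graphically and classify the terms by the position of the chord $\{k,j\}$ relative to the path (your cases (A)--(C) correspond exactly to the paper's cancellation of the self-avoiding-path sum, its treatment of $\Sigma_{\geq 4}$ via loops containing both $\{i,k\}$ and $\{k,j\}$, and the path-times-loop identity \eqref{eq:loopxcyc} giving $\tbf{Q}^{(1)}$), with the same handling of the diagonal via the $2$-to-$1$ loop correspondence and the centering $\sum_k g_{kj}^2=\tfrac{N-1}{N}+\sum_k(g_{kj}^2-\bE g_{kj}^2)$. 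The only difference is organizational — you classify the summands directly rather than peeling off the length-$3$ and length-$4$ loops and telescoping $\Sigma_{\geq 5}-\Sigma_{\geq 4}$ as the paper does — and all bijections, multiplicities, and signs in your sketch check out.
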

\begin{proof}
We start with the diagonal elements. A direct calculation yields
		\be\label{eq:qii}\begin{split}
		q_{ii} &= \beta^2 \bigg(1 -   \sum_k  g_{ik}^2\bigg) - \sum_{k}   \sum_{ \substack{   \{i,k\}\in \gamma\in \Gl   }}\beta g_{ik} \prod_{\substack { e\in \gamma:e\neq \{i,k\}  }}  \beta g_e \\
		&= \frac{\beta^2}N - \beta^2  \sum_{k }  \big( g_{ik}^2 - \bE  g_{ik}^2\big)  - \sum_{k}   \sum_{ \substack{   \{i,k\}\in \gamma\in \Gl   }}  \prod_{\substack { e\in \gamma   }} \beta  g_e \\
		& =\frac{\beta^2}N p_{ii} - \beta^2  \sum_{k}  \big( g_{ik}^2 - \bE  g_{ik}^2\big)p_{ii} -  2 \sum_{ \substack{    \gamma\in \Gl:\, i\in \cV_\gamma  }}  \prod_{\substack { e\in \gamma   }} \beta  g_e \\
		& = \Big( \frac{\beta^2}N \tbf P     +   \tbf{Q}^{(2)} +   \tbf{P} \tbf{Q}^{(4)}\Big)_{ii}
		\end{split}\ee
for all $i\in [N]$. For $i,j\in[N]$ with $i\neq j$, on the other hand, we use \eqref{eq:defP} and rewrite
		\be\label{eq:qid1}\begin{split}
		q_{ij}  &=  \beta^3 g_{ij}-   \sum_{k:k\neq i} \sum_{ \substack{   \{i,k\}\in \gamma\in \Gl   }} \Big(\prod_{\substack { e\in \gamma:e\neq \{i,k\}  }} \beta g_e\Big)\beta g_{kj}  \\
		&\hspace{0.5cm} + (1+\beta^2)  \sum_{ \substack{   \{i,j\}\in \gamma\in \Gl   }} \prod_{\substack { e\in \gamma:e\neq \{i,j\}  }} \beta g_e -   \sum_k \beta g_{ik} \beta g_{kj}  \\
		&=  \beta^3 g_{ij}-   \sum_{k:k\neq i} \sum_{ \substack{   \gamma\in \Gl: \\ \{i,k\}\in \gamma, |\gamma| \geq 4  }} \Big(\prod_{\substack { e\in \gamma:e\neq \{i,k\}  }} \beta g_e\Big)\beta g_{kj}    \\
		&\hspace{0.5cm} + \sum_{ \substack{   \gamma\in \Gl: \\ \{i,j\}\in \gamma, |\gamma| = 3   }} \prod_{\substack { e\in \gamma:e\neq \{i,j\}  }} \beta g_e-   \sum_k \beta g_{ik} \beta g_{kj}  \\
		&\hspace{0.5cm} +  \sum_{ \substack{   \gamma\in \Gl: \\ \{i,j\}\in \gamma, |\gamma| = 4   }} \prod_{\substack { e\in \gamma:e\neq \{i,j\}  }} \beta g_e -   \sum_{k:k\neq i} \sum_{ \substack{    \gamma\in \Gl: \\ \{i,k\}\in \gamma, |\gamma|  =3  }} \Big(\prod_{\substack { e\in \gamma:e\neq \{i,k\}  }} \beta g_e\Big)\beta g_{kj}  \\
		&\hspace{0.5cm}  + \beta^2   \sum_{ \substack{   \{i,j\}\in \gamma\in \Gl   }} \prod_{\substack { e\in \gamma:e\neq \{i,j\}  }} \beta g_e + \sum_{ \substack{    \gamma\in \Gl: \\ \{i,j\}\in \gamma, |\gamma| \geq 5   }} \prod_{\substack { e\in \gamma:e\neq \{i,j\}  }} \beta g_e  .
		\end{split}\ee
Inserting the representations
		\[\begin{split}  
		 \sum_{ \substack{    \gamma\in \Gl : \\ \{i,k\}\in \gamma, |\gamma| = 3  }} \!\!\! \!\!\! \Big(\prod_{\substack { e\in \gamma:e\neq \{i,k\}  }}\!\!\!  \beta g_e\Big)  
		&=   \sum_{l}  \beta  g_{il} \beta g_{lk} ,  \sum_{ \substack{    \gamma\in \Gl:\\\{i,j\}\in\gamma, |\gamma| =4   }}\!\!\!  \prod_{\substack { e\in \gamma:e\neq \{i,j\}  }} \!\!\! \beta g_e  =\!\!\! \sum_{k,l :k\neq i, l\neq j}   \!\!\!  \beta  g_{il} \beta g_{lk} \beta g_{kj} 
		\end{split}\]
into \eqref{eq:qid1}, we observe that the second line on the \rhs of \eqref{eq:qid1} vanishes and we get	 
		\be\label{eq:qid2} 
		\begin{split}
		q_{ij} &=  \beta^3 g_{ij}-   \sum_{k:k\neq i} \sum_{ \substack{   \gamma\in \Gl: \\ \{i,k\}\in \gamma, |\gamma| \geq 4  }} \Big(\prod_{\substack { e\in \gamma:e\neq \{i,k\}  }} \beta g_e\Big)\beta g_{kj}    \\
		&\hspace{0.5cm} + \sum_{k,l :k\neq i, l\neq j}    \beta  g_{il} \beta g_{lk} \beta g_{kj} -   \sum_{k:k\neq i}  \sum_{l}  \beta  g_{il} \beta g_{lk}\beta g_{kj}  \\
		&\hspace{0.5cm}  + \beta^2   \sum_{ \substack{   \{i,j\}\in \gamma\in \Gl   }} \prod_{\substack { e\in \gamma:e\neq \{i,j\}  }} \beta g_e + \sum_{ \substack{    \gamma\in \Gl: \\ \{i,j\}\in \gamma, |\gamma| \geq 5   }} \prod_{\substack { e\in \gamma:e\neq \{i,j\}  }} \beta g_e \\
		& = \beta^2 \bigg( 1 - \sum_{k:k\neq i} g_{kj}^2\bigg) \beta g_{ij}   -   \sum_{k:k\neq i} \sum_{ \substack{    \gamma\in \Gl : \\ \{i,k\}\in \gamma, |\gamma|\geq 4  }} \Big(\prod_{\substack { e\in \gamma:e\neq \{i,k\}  }} \beta g_e\Big)\beta g_{kj}\\
		&\hspace{0.5cm}  + \beta^2\sum_{ \substack{   \{i,j\}\in \gamma\in \Gl   }} \prod_{\substack { e\in \gamma:e\neq \{i,j\}  }} \beta g_e +    \sum_{ \substack{    \gamma\in \Gl:\\\{i,j\}\in\gamma, |\gamma|\geq 5   }} \prod_{\substack { e\in \gamma:e\neq \{i,j\}  }} \beta g_e \\
		& =: \beta^2 \bigg( 1 - \sum_{k:k\neq i} g_{kj}^2\bigg) \beta g_{ij}    + \beta^2\sum_{ \substack{   \{i,j\}\in \gamma\in \Gl   }} \prod_{\substack { e\in \gamma:e\neq \{i,j\}  }} \beta g_e +   \Sigma_{\geq 5} -  \Sigma_{\geq 4}.
		\end{split}\ee		 
Here, we defined (neglecting for notational simplicity the dependence on $i$ and $j$) 
		\be\label{eq:qid3} \begin{split} 
		\Sigma_{\geq 4}&= \sum_{k:k\neq i} \sum_{ \substack{    \gamma\in \Gl : \\ \{i,k\}\in \gamma, |\gamma|\geq 4  }} \Big(\prod_{\substack { e\in \gamma:e\neq \{i,k\}  }} \beta g_e\Big)\beta g_{kj},\hspace{0.5cm} \Sigma_{\geq 5}  =  \!\!\!\sum_{ \substack{    \gamma\in \Gl:\\\{i,j\}\in\gamma, |\gamma|\geq 5   }} \prod_{\substack { e\in \gamma:e\neq \{i,j\}  }} \beta g_e. 
		\end{split} \ee 
		
Next, let us decompose $\Sigma_{\geq 4}$, keeping in mind the cancellation mechanisms outlined at the beginning of this section. Recalling that $g_{kj} = 0$ for $k=j$, we split
		\[\begin{split}
		\Sigma_{\geq 4}& = \beta^2\sum_{k:k\neq i} g^2_{kj}\;\sum_{ \substack{    \gamma\in \Gl :|\gamma|\geq 4, \\ \{i,k\}\in \gamma, \{k,j\}\in \gamma   }} \Big(\prod_{\substack { e\in \gamma:e\neq \{i,k\}, \{k,j\}  }} \beta g_e\Big) \\
		& \hspace{0.5cm} + \sum_{k:k\neq i} \sum_{ \substack{    \gamma\in \Gl :|\gamma|\geq 4, \\ \{i,k\}\in \gamma, \{k,j\}\not \in \gamma   }} \Big(\prod_{\substack { e\in \gamma:e\neq \{i,k\}  }} \beta g_e\Big)\beta g_{kj}.
		\end{split}\]
Since a cycle $\tau \in \Gl$ that contains the edges $ \{i,j\}, \{i,k\}, \{k,j\}$ must in fact equal the cycle $ \tau =\{i,j\} \circ \{i,k\}\circ \{k,j\} $, which is of length $|\tau| =3$, we have that 
		\[\begin{split}
		&\sum_{ \substack{    \gamma\in \Gl :|\gamma|\geq 4, \\ \{i,k\}\in \gamma, \{k,j\}\in \gamma   }} \Big(\prod_{\substack { e\in \gamma:e\neq \{i,k\}, \{k,j\}  }} \beta g_e\Big) = \sum_{ \substack{    \gamma\in \Gl :|\gamma|\geq 4 \\  \{i,k\} , \{k,j\}\in \gamma, \{i,j\}\not \in \gamma   }} \Big(\prod_{\substack { e\in \gamma:e\neq \{i,k\}, \{k,j\}  }} \beta g_e\Big).
		\end{split} \]
Now, observe that every cycle $\gamma\in\Gl$ that contains $\{i,k\} , \{k,j\}\in \gamma$ and that does not contain $\{i,j\}\not \in \gamma$, can be identified uniquely with a cycle $\gamma' $ of degree $|\gamma'| = |\gamma|- 1$ with $ \{i,j\}\in \gamma'$ and $k\not \in \cV_{\gamma'}$. In fact, given a cycle $\gamma$ with these properties, the edges $\cE_{\gamma}\setminus\{\{i,k\} , \{k,j\} \} $ determine a unique, self-avoiding path from vertex $i$ to vertex $j$ that avoids vertex $k$, and closing this path through the edge $\{i,j\}$ yields $\gamma'$. Thus 
		\[\begin{split}
		\sum_{ \substack{    \gamma\in \Gl :|\gamma|\geq 4, \\ \{i,k\}\in \gamma, \{k,j\}\in \gamma,   }} \Big(\prod_{\substack { e\in \gamma:e\neq \{i,k\}, \{k,j\}  }} \beta g_e\Big) &= \sum_{ \substack{  \{i,j\}\in  \gamma' \in \Gl      }} \Big(\prod_{\substack { e\in \gamma':e\neq \{i,j\}   }} \beta g_e\Big) \\
		&\hspace{0.5cm} - \sum_{ \substack{   \gamma' \in \Gl : \\ \{i,j\}\in \gamma', \, k\in \cV_{\gamma'}    }} \Big(\prod_{\substack { e\in \gamma':e\neq \{i,j\}   }} \beta g_e\Big)
		\end{split} \]	
so that $\Sigma_{\geq 4}$ can be written as
		\[\begin{split}
		\Sigma_{\geq 4}& = \beta^2\sum_{k:k\neq i} g^2_{kj}\; \!\!\!\sum_{ \substack{  \{i,j\}\in  \gamma' \in \Gl      }}\!\!\! \Big(\prod_{\substack { e\in \gamma':e\neq \{i,j\}   }} \beta g_e\Big)   - \beta^2\sum_{k:k\neq i} g^2_{kj}\!\!\! \!\! \sum_{ \substack{   \gamma' \in \Gl : \\ \{i,j\}\in \gamma', \, k\in \cV_{\gamma'}    }} \!\!\!\Big(\prod_{\substack { e\in \gamma':e\neq \{i,j\}   }} \beta g_e\Big) \\
		& \hspace{0.5cm} + \sum_{k:k\neq i} \sum_{ \substack{    \gamma\in \Gl :|\gamma|\geq 4, \\ \{i,k\}\in \gamma, \{k,j\}\not \in \gamma   }} \Big(\prod_{\substack { e\in \gamma:e\neq \{i,k\}  }} \beta g_e\Big)\beta g_{kj}.
		\end{split}\]
		
Similarly, suppose that $\gamma\in\Gl$ with $|\gamma|\geq 4$ is a cycle that contains the edge $\{i,k\}\in \gamma$, but such that $j\not \in \cV_{\gamma}$ (in particular $ \{k,j\} \not \in \gamma$). Such a loop can be identified uniquely with a self-avoiding path from vertex $i$ to vertex $k$, avoiding vertex $j$. Such a path, on the other hand, can be identified uniquely with a cycle $\gamma'$ that contains the edges $\{i,j\}$ and $\{j,k\}$, with $|\gamma'| = |\gamma| +1$. What this implies is that 
 		\[\begin{split}
		&\sum_{k:k\neq i} \sum_{ \substack{    \gamma\in \Gl :|\gamma|\geq 4, \\ \{i,k\}\in \gamma, \{k,j\}\not \in \gamma   }} \Big(\prod_{\substack { e\in \gamma:e\neq \{i,k\}  }} \beta g_e\Big)\beta g_{kj} \\
		&=   \sum_{k:k\neq i} \,\sum_{ \substack{    \gamma'\in \Gl :|\gamma'|\geq 5 \\ \{i,j\} ,\{ j,k\} \in \gamma'  }}  \prod_{\substack { e\in \gamma':e\neq \{i,j\}  }} \beta g_e  + \sum_{k:k\neq i} \sum_{ \substack{    \gamma\in \Gl :|\gamma|\geq 4, \\ \{i,k\}\in \gamma, \{k,j\}\not \in \gamma, j\in \cV_{\gamma}   }} \Big(\prod_{\substack { e\in \gamma:e\neq \{i,k\}  }} \beta g_e\Big)\beta g_{kj} \\
		&=   \sum_{ \substack{    \gamma'\in \Gl :\\ |\gamma'|\geq 5,  \{i,j\}\in \gamma'   }} \prod_{\substack { e\in \gamma':e\neq \{i,j\}  }} \beta g_e +\sum_{k:k\neq i} \sum_{ \substack{    \gamma\in \Gl :|\gamma|\geq 4, \\ \{i,k\}\in \gamma, \{k,j\}\not \in \gamma, j\in \cV_{\gamma}  }} \Big(\prod_{\substack { e\in \gamma:e\neq \{i,k\}  }} \beta g_e\Big)\beta g_{kj}\\
		&=   \Sigma_{\geq 5} +\sum_{k:k\neq i} \sum_{ \substack{    \gamma\in \Gl :|\gamma|\geq 4, \\ \{i,k\}\in \gamma, \{k,j\}\not \in \gamma, j\in \cV_{\gamma}  }} \Big(\prod_{\substack { e\in \gamma:e\neq \{i,k\}  }} \beta g_e\Big)\beta g_{kj},
		\end{split} \]
where in the second step, we used the definition of $\Sigma_{\geq 5}$ in \eqref{eq:qid3} and the fact that the set of cycles that contain $ \{i,j\}, \{j,k\} $ is disjoint from the set of cycles that contain $ \{i,j\}, \{j,k'\} $ for $k\neq k'$. Combining the previous two identities, we infer that
		\be\label{eq:qid4}\begin{split}
		\Sigma_{\geq 5}-\Sigma_{\geq 4} & = - \beta^2\sum_{k:k\neq i} g^2_{kj}\; \!\!\!\sum_{ \substack{  \{i,j\}\in  \gamma' \in \Gl      }}\!\!\! \Big(\prod_{\substack { e\in \gamma':e\neq \{i,j\}   }} \beta g_e\Big)\\
		&\hspace{0.5cm} +\beta^2\sum_{k:k\neq i} g^2_{kj}\!\!\! \!\! \sum_{ \substack{   \gamma' \in \Gl : \\ \{i,j\}\in \gamma', \, k\in \cV_{\gamma'}    }} \!\!\!\Big(\prod_{\substack { e\in \gamma':e\neq \{i,j\}   }} \beta g_e\Big) \\
		&\hspace{0.5cm} - \sum_{k:k\neq i} \sum_{ \substack{    \gamma\in \Gl :|\gamma|\geq 4, \\ \{i,k\}\in \gamma, \{k,j\}\not \in \gamma, j\in \cV_{\gamma}  }} \Big(\prod_{\substack { e\in \gamma:e\neq \{i,k\}  }} \beta g_e\Big)\beta g_{kj}.
		\end{split}\ee
Now, collecting the identities \eqref{eq:qid2}, \eqref{eq:qid3} and \eqref{eq:qid4} and recalling the definition of $p_{ij}$ in \eqref{eq:defP}, we obtain for $i\neq j$ the decomposition
		\be\label{eq:qij5}\begin{split}
		q_{ij}  & =   \beta^2\bigg(1-\sum_{k:k\neq i}g^2_{kj}\bigg) \bigg( \beta   g_{ij} + \sum_{ \substack{   \{i,j\}\in \gamma\in \Gl   }} \prod_{\substack { e\in \gamma:e\neq \{i,j\}  }} \beta g_e  \bigg) \\
		&\hspace{0.5cm}  + \beta^2 \sum_{k:k\neq i} g^2_{kj} \sum_{ \substack{   \gamma' \in \Gl : \\ \{i,j\}\in\gamma', \, k\in \cV_{\gamma'}    }}  \prod_{\substack { e\in \gamma':e\neq \{i,j\}   }} \beta g_e\\
		&\hspace{0.5cm} - \sum_{k:k\neq i} \sum_{ \substack{    \gamma\in \Gl :|\gamma|\geq 4, \\ \{i,k\}\in \gamma, \{k,j\}\not \in \gamma, j\in \cV_{\gamma}  }} \Big(\prod_{\substack { e\in \gamma:e\neq \{i,k\}  }} \beta g_e\Big)\beta g_{kj}\\ 
		& =  \frac{\beta^2}N p_{ij} - \beta^2 \bigg(\sum_{k }\big( g^2_{kj} - \bE g^2_{kj}\big) \bigg) p_{ij}  + \beta^2 \sum_{k:k\neq i} g^2_{kj} \sum_{ \substack{   \gamma' \in \Gl : \\ \{i,j\}\in\gamma', \, k\in \cV_{\gamma'}    }}  \prod_{\substack { e\in \gamma':e\neq \{i,j\}   }} \beta g_e \\
		&\hspace{0.5cm}+ \beta^2 g_{ij}^2 p_{ij}  - \sum_{k:k\neq i} \sum_{ \substack{    \gamma\in \Gl :|\gamma|\geq 4, \\ \{i,k\}\in \gamma, \{k,j\}\not \in \gamma, j\in \cV_{\gamma}  }} \Big(\prod_{\substack { e\in \gamma:e\neq \{i,k\}  }} \beta g_e\Big)\beta g_{kj} \\
		& = \Big( \frac{\beta^2}N  \tbf P + \tbf Q^{(3)}+  \tbf P \tbf Q^{(4)} + \tbf Q^{(5)}\Big)_{ij} - \sum_{k:k\neq i} \sum_{ \substack{    \gamma\in \Gl :|\gamma|\geq 4, \\ \{i,k\}\in \gamma, \{k,j\}\not \in \gamma, j\in \cV_{\gamma}  }} \Big(\prod_{\substack { e\in \gamma:e\neq \{i,k\}  }} \beta g_e\Big)\beta g_{kj}. 
		\end{split}\ee
Finally, to connect the last term on the \rhs in \eqref{eq:qij5} with $\tbf Q^{(1)}$, notice that
		\be\label{eq:loopxcyc}\begin{split}
		& \sum_{k:k\neq i} \sum_{ \substack{    \gamma\in \Gl :|\gamma|\geq 4, \\ \{i,k\}\in \gamma, \{k,j\}\not \in \gamma, j\in \cV_{\gamma}  }} \Big(\prod_{\substack { e\in \gamma:e\neq \{i,k\}  }} \beta g_e\Big)\beta g_{kj}\\
		& =   \sum_{k:k\neq i,j} \sum_{ \substack{    (\gamma_1, \gamma_2) \in \Gp\times \Gl : \\ n_i(\gamma_1)=n_j(\gamma_1)=1,\\ \{j,k\}\in  \gamma_2, \cV_{\gamma_1}\cap \cV_{\gamma_2}=\{j\} }} \Big(\prod_{\substack { e\in \gamma_1  }} \beta g_e\Big) \Big(\prod_{\substack { e'\in \gamma_2   }} \beta g_{e'}\Big)\\
		&=  2  \sum_{ \substack{    \gamma_1  \in \Gp : \\ n_i(\gamma_1)=n_j(\gamma_1)=1     }}  \Big(\prod_{\substack { e\in \gamma_1  }} \beta g_e\Big) \sum_{ \substack{    \gamma_2 \in \Gl : \\ \cV_{\gamma_1}\cap \cV_{\gamma_2}=\{j\} }}   \prod_{\substack { e'\in \gamma_2   }} \beta g_{e'}.
		\end{split}\ee
Indeed, a loop $\gamma\in\Gl$ that contains the edge $\{i,k\}\in \gamma $ and the vertex $j\in \cV_\gamma$, but not the edge $\{j,k\}\not \in \gamma$ can be uniquely identified with a pair of self-avoiding walks, one going from vertex $i$ to $j$ avoiding vertex $k$, and the other going from vertex $j$ to vertex $k$ having length at least two (because $\{j,k\}\not\in \gamma$). In particular, the intersection of the two walks is given by vertex $j$ only. Since adding the edge $\{j,k\}$ to the second walk turns it uniquely into a cycle $\gamma_2$, we can identify $\gamma \circ \{j,k\}$ with $ \gamma_1\circ\gamma_2 $, where $   \gamma_1\in \Gp$ with endpoints $ i$ and $j$, $ \{j,k\}\in \gamma_2$ and $\cV_{\gamma_1}\cap \cV_{\gamma_2}=\{j\}$. Finally, summing over all $k\in[N]\setminus\{i,j\}$, we obtain the \rhs in \eqref{eq:loopxcyc} and inserting this into \eqref{eq:qij5}, we see that
		\be\label{eq:qij}\begin{split}
		q_{ij}  & =    \Big( \frac{\beta^2}N  \tbf P  +\tbf Q^{(1)}+ \tbf Q^{(3)}+  \tbf P \tbf Q^{(4)} + \tbf Q^{(5)}\Big)_{ij} 
		\end{split}\ee
for $i\neq j$. Comparing \eqref{eq:qii} and \eqref{eq:qij} with \eqref{eq:defQ1to5}, this proves the decomposition \eqref{eq:Qdec}. 
\end{proof}
Recalling the strategy outlined at the end of Section \ref{sec:intro}, our next task is to show that the error matrices $ \tbf Q^{(k)}$ have small operator norms. Looking at \eqref{eq:defQ1to5}, it is in fact not difficult to see that $ \tbf Q^{(2)}, \tbf Q^{(3)}, \tbf Q^{(4)}$ and $\tbf Q^{(5)}$ have small operator norms: 
\begin{itemize}
\item Due to the vertex constraint $i\in \cV_{\gamma}$ in the summation over cycles $\gamma\in\Gl$ in the definition of $ q^{(2)}_{ii}$, it is clear that the entries $q^{(2)}_{ii}$ are typically of order $N^{-1/2}$. In order to see that $ \| \tbf Q^{(2)}\|_{\text{op}} = \max_{i\in [N]} |q_{ii}^{(2)}| $ is small, we find it convenient to derive a fourth moment bound on the small graph contributions to $ q^{(2)}_{ii}$ and to combine this with a second moment bound on the large graph contributions.
\item Based on simple second moment bounds, the matrices $\tbf Q^{(3)}$ and $\tbf Q^{(5)}$ turn out to have small Frobenius (and thus operator) norms. 
\item The entries of $\tbf Q^{(4)}$ vanish by the law of large numbers and it is straightforward to quantify this through standard exponential moment bounds. In particular, this implies the smallness of $ \| \tbf Q^{(4)}\|_{\text{op}} = \max_{i\in [N]} |q_{ii}^{(4)}| $.  
\end{itemize} 

Treating the remaining off-diagonal error $ \tbf Q^{(1)}$ requires a more involved strategy: the (expectation of the) Frobenius norm of $ \tbf Q^{(1)}$ is of order one so that simple moment bounds are not sufficient to derive the smallness of its operator norm. Interestingly, however, it turns out that we can factor out $\tbf P$ from $ \tbf Q^{(1)}$, up to errors that have asymptotically vanishing operator norm. Anticipating to treat the small and large graph contributions separately, as before, let us define for fixed $k,k_1,k_2\geq 3$ the matrices $\tbf{Q}_{>k_1,k_2}^{(1)}, \tbf{Q}_{\leq k_1,k_2}^{(1)}, \tbf{Q}_{>k}^{(2)}$, $ \tbf{Q}_{\leq k}^{(2)}  $, $ \tbf{Q}_{> k_1,k_2}^{(6)}  $ and $ \tbf{Q}_{\leq k_1,k_2}^{(6)}  $  via 
		\be \label{eq:defQ126k}\begin{split}
		q_{\leq k_1,k_2,ij}^{(1)}&:= \begin{cases} - 2 \sum_{ \substack{    \gamma_1  \in \Gp^{ij} : |\gamma_1|\leq k_1    }} \! \big(\prod_{\substack { e\in \gamma_1:e  }} \beta g_e\big) \!\sum_{ \substack{    \gamma_2 \in \Gl :|\gamma_2|\leq k_2, \\ \cV_{\gamma_1} \cap \cV_{\gamma_2}=\{j\} }}   \prod_{\substack { e'\in \gamma_2   }} \beta g_{e'} &:i\neq j \\ 0& :i=j, \end{cases} \\
		q_{>k_1,k_2,ij}^{(1)}&:= \begin{cases}  q_{ij}^{(1)}-q_{\leq k_1,k_2,ij}^{(1)}  &:i\neq j \\  0 & :i=j, \end{cases} \\
		q_{\leq k, ij}^{(2)}&:= \begin{cases} 0  &:i\neq j\\  -  2 \sum_{ \substack{    \gamma\in \Gl:\\ | \gamma|\leq k , i\in \cV_\gamma  }}  \prod_{\substack { e\in \gamma   }} \beta  g_e& :i=j, \end{cases} \hspace{1.2cm} q_{> k,ij}^{(2)}:= \begin{cases} 0  &:i\neq j\\ q_{ij}^{(2)}-q_{\leq k, ij}^{(2)}& :i=j, \end{cases}\\
		q_{> k_1,k_2,ij}^{(6)}&:= \begin{cases} - 2 \sum_{ \substack{    \gamma_1  \in \Gp : |\gamma_1| > k_1\\n_i(\gamma_1)=n_j(\gamma_2)=1    }} \!\!\!\! \big(\prod_{\substack { e\in \gamma_1:e  }} \beta g_e\big) \sum_{ \substack{    \gamma_2 \in \Gl : \\|\gamma_2|\leq k_2, j\in\cV_{\gamma_2}  }}   \prod_{\substack { e'\in \gamma_2   }} \beta g_{e'}&:i\neq j \\  0 & :i=j. \end{cases} \\ 
		q_{\leq k_1,k_2,ij}^{(6)} &:= \begin{cases}- 2\sum_{ \substack{    \gamma_1  \in \Gp : |\gamma_1|\leq k_1\\n_i(\gamma_1)=n_j(\gamma_1)=1    }} \!\!\!\! \big(\prod_{\substack { e\in \gamma_1:e  }} \beta g_e\big) \sum_{ \substack{    \gamma_2 \in \Gl :|\gamma_2|\leq k_2, \\ j\in \cV_{\gamma_2},|\cV_{\gamma_1} \cap \cV_{\gamma_2}|\geq 2 }} \!\!  \prod_{\substack { e'\in \gamma_2   }} \beta g_{e'}&:i\neq j \\  0 & :i=j. \end{cases}\\ 
		\end{split}\ee	
Then, by \eqref{eq:defQ1to5} and \eqref{eq:defQ126k}, we have that
		\[ \tbf{Q}^{(1)} =  \tbf{Q}_{\leq k_1,k_2}^{(1)} + \tbf{Q}_{>k_1,k_2}^{(1)} =  \tbf{Q}_{>k_1,k_2}^{(1)}+ \big( \tbf{P} - \text{id}_{\bR^{N}} \big) \tbf{Q}_{\leq k_2}^{(2)}  - \tbf{Q}_{>k_1,k_2}^{(6)} - \tbf{Q}_{\leq k_1,k_2}^{(6)}. \] 
Combined with the decomposition \eqref{eq:Qdec}, this implies that for every $k_1,k_2\geq 3$, we have
		\be\label{eq:Qdec2}\begin{split} 
		\tbf{Q} &=   \tbf{P} \Big(\frac{ \beta^2}N \text{id}_{\bR^N}   +   \tbf{Q}_{\leq k_2}^{(2)}+ \tbf{Q}^{(4)}\Big) + \tbf{Q}^{(3)} + \tbf{Q}^{(5)}  \\
		&\hspace{0.5cm}+  \tbf{Q}_{>k_1,k_2}^{(1)}  + \tbf{Q}_{> k_2}^{(2)}  - \tbf{Q}_{>k_1,k_2}^{(6)} - \tbf{Q}_{\leq k_1,k_2}^{(6)}.  
		\end{split}\ee	
Instead of controlling $\tbf Q^{(1)}$ directly, the last identity shows that it is sufficient to prove the smallness of $\tbf{Q}_{>k_1,k_2}^{(1)}, \tbf{Q}_{>k_1,k_2}^{(6)}$ and $\tbf{Q}_{\leq k_1,k_2}^{(6)}$ (in addition to the previously mentioned errors). While the matrices $\tbf{Q}_{>k_1,k_2}^{(1)}, \tbf{Q}_{>k_1,k_2}^{(6)}$ involve large graphs, so that second moment bounds are enough to conclude their smallness, the entries of $\tbf{Q}_{\leq k_1,k_2}^{(6)}$ are structurally quite similar to the error term $R^{(2)}_{ij}$, defined in \eqref{eq:defR2} and analyzed in detail in Section \ref{sec:notation}. Adapting the latter analysis to the analysis of $\big(\tbf{Q}_{\leq k_1,k_2}^{(6)}\big)_{ij}$, we show below that $\tbf{Q}_{\leq k_1,k_2}^{(6)}$ has small Frobenius (and thus operator) norm. 

\begin{lemma} Let $\beta <1$, $ \tbf P $ as in \eqref{eq:defP}, $\tbf{Q}$ as in \eqref{eq:defQ} and let $\tbf{Q}^{(k)}$, $k\in\{1,\dots,5\}$, be defined as in \eqref{eq:defQ1to5}. Then, there exists a constant $C = C_{\beta}>0$ such that 
		\be \label{eq:qijbnds1}
		\begin{split}
		 \max_{i,j \in [N], i\neq j} \big \|q_{ij}^{(3)} \big\|_{L^2(\Omega) }\leq  CN^{-3/2}, \; \max_{i,j \in [N], i\neq j}\big \|q_{ij}^{(5)}  \big\|_{L^2(\Omega) } &\leq CN^{-3/2}. 
		 \end{split}
		 \ee
Moreover, for fixed $k,k_1,k_2\geq 3$, let $\tbf{Q}_{>k_1,k_2}^{(1)}, \tbf{Q}_{\leq k_1,k_2}^{(1)}, \tbf{Q}_{>k}^{(2)},  \tbf{Q}_{\leq k}^{(2)},\tbf{Q}_{>k_1,k_2}^{(6)}  $ and $\tbf{Q}_{>k_1,k_2}^{(2)}$ be defined as in \eqref{eq:defQ126k}. 
Then, there exists $C>0$, independent of $N $ and $k$, as well as some constant $ C_k' >0$ that depends on $k $, but that is independent of $N$ so that 
		\be \label{eq:qijbnds2}\begin{split}
		\max_{i,j\in [N] }\big\| q_{>k_1,k_2,ij}^{(1)} \big\|_{L^2(\Omega)} &\leq C e^{- \min(k_1,k_2)\log \beta^{-2} } N^{-1}, \\
		\max_{i\in [N] }\big\| q_{>k,ii}^{(2)} \big\|_{L^2(\Omega)} &\leq C e^{- k\log \beta^{-2} } N^{-1/2}, \hspace{0.5cm}\max_{i \in [N] }\big\| q_{\leq k,ii}^{(2)} \big\|_{L^4(\Omega)} \leq C'_k N^{-1/2}.
		\end{split}\ee	 
Moreover, for $k_N $ as defined in \eqref{eq:defkN}, we have for every $\epsilon>0$ that
		\be \label{eq:qijbnds3}\begin{split}
		\max_{i,j\in [N]: i\neq j} \| q_{\leq k_N^4,k_N^2,  ij}^{(6)} \|_{L^2(\Omega)} \leq N^{-3/2 + \epsilon}\,, \; \lim_{N \to \infty}\bP\Big(\max_{i,j\in [N]: i\neq j}|q_{>k_N^4,k_N^2,  ij}^{(6)}| >N^{-\log N}\Big) = 0.
		\end{split}\ee	
\end{lemma}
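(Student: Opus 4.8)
The plan is to obtain each bound by a second-moment computation, exploiting the $L^2(\Omega)$-orthogonality of products of distinct Gaussians $g_{uv}$ — i.e. that $\bE\prod_e\beta g_e$ over an edge-multiset vanishes unless every edge occurs with even multiplicity — together with the counts from Lemma~\ref{lm:graphnumber} and Lemma~\ref{lm:Akl}; the only genuinely delicate term is $q^{(6)}_{\leq k_N^4,k_N^2,ij}$, which is structurally a copy of $R^{(2)}_{ij}$ and is treated by replaying \textbf{Steps 1--3}. For \eqref{eq:qijbnds1}: for $q^{(5)}_{ij}=\beta^2 g_{ij}^2 p_{ij}$ with $i\neq j$ I would split $p_{ij}=\beta g_{ij}+p'_{ij}$ with $p'_{ij}=\sum_{\gamma\in\Gp^{ij}:|\gamma|\geq 2}\prod_{e\in\gamma}\beta g_e$, note that $p'_{ij}$ is independent of $g_{ij}$ since a self-avoiding path of length $\geq 2$ from $i$ to $j$ cannot use the edge $\{i,j\}$, and conclude $\bE(q^{(5)}_{ij})^2\leq C\,\bE g_{ij}^6+C\,\bE g_{ij}^4\,\bE(p'_{ij})^2\leq CN^{-3}$, using $\bE(p'_{ij})^2=O(N^{-1})$. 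For $q^{(3)}_{ij}$ I would expand the square: orthogonality forces the two cycles $\gamma'$ to coincide, so the two indices $k,k'$ lie on the same cycle ($\leq|\cV_{\gamma'}|^2$ choices), the factors $g_{kj}^2,g_{k'j}^2$ contribute $N^{-2}$, and the constraint $\{i,j\}\in\gamma'$ makes $\sum_{\gamma'}\beta^{2|\gamma'|}N^{-|\gamma'|}|\cV_{\gamma'}|^2$ of order $N^{-2}$, so that $\bE(q^{(3)}_{ij})^2\leq CN^{-3}$; terms where $\{k,j\}$ is itself an edge of $\gamma'$ are handled by $\bE g_{kj}^4$, and a mild large-cycle cutoff via Lemma~\ref{lm:lrg} removes long graphs.

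For \eqref{eq:qijbnds2}: the cycle weights $\prod_{e\in\gamma}\beta g_e$ in $q^{(2)}_{ii}$ are products of distinct Gaussians, hence orthogonal unless the cycles coincide, giving $\bE(q^{(2)}_{>k,ii})^2=4\sum_{\gamma\in\Gl:\,|\gamma|>k,\,i\in\cV_\gamma}\beta^{2|\gamma|}N^{-|\gamma|}\leq CN^{-1}\sum_{m>k}\beta^{2m}$, which decays exponentially in $k$; the $L^4$-bound on $q^{(2)}_{\leq k,ii}$ then follows from this $L^2$-bound and Gaussian hypercontractivity, since $q^{(2)}_{\leq k,ii}$ lives in a Wiener chaos of order $\leq k$, so $\|q^{(2)}_{\leq k,ii}\|_{L^4}\leq C'_k\|q^{(2)}_{\leq k,ii}\|_{L^2}\leq C'_k N^{-1/2}$. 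For $q^{(1)}_{>k_1,k_2,ij}$ I would check that $(\gamma_1,\gamma_2)\mapsto\gamma_1\circ\gamma_2$ is injective on pairs with $\gamma_1\in\Gp^{ij}$, $\gamma_2\in\Gl$, $\cV_{\gamma_1}\cap\cV_{\gamma_2}=\{j\}$ (in the image, $i$ is the unique degree-one vertex, $j$ the unique degree-three vertex, and the path-edge at $j$ is distinguished as the one reaching $i$), so orthogonality applies; counting paths $i\to j$ ($\approx N^{|\gamma_1|-1}$) and cycles through $j$ avoiding the rest of the path ($\approx N^{|\gamma_2|-1}$) and imposing $|\gamma_1|>k_1$ or $|\gamma_2|>k_2$ yields $\bE(q^{(1)}_{>k_1,k_2,ij})^2\leq CN^{-2}\sum_{m_1+m_2>\min(k_1,k_2)}\beta^{2(m_1+m_2)}$, with exponential decay in $\min(k_1,k_2)$.

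The main obstacle is the first bound in \eqref{eq:qijbnds3}: here $q^{(6)}_{\leq k_N^4,k_N^2,ij}$ sums over pairs $(\gamma_1,\gamma_2)$, $\gamma_1\in\Gp^{ij}$ with $|\gamma_1|\leq k_N^4$, $\gamma_2\in\Gl$ with $|\gamma_2|\leq k_N^2$, $j\in\cV_{\gamma_2}$ and $|\cV_{\gamma_1}\cap\cV_{\gamma_2}|\geq 2$, and since $\gamma_1,\gamma_2$ may share edges the weights are no longer orthogonal for distinct pairs — exactly the difficulty met for $R^{(2)}_{ij}$. I would reduce to the corresponding multigraph second moment as in \eqref{eq:R2step0}, then, after adjoining the virtual edge $\{i,j\}$ to turn $\gamma_1$ — assumed of length $\geq 2$, the length-one path $\gamma_1=\{i,j\}$ being handled directly — into a cycle $\widehat\gamma_1\in\Gl$ so that $(\widehat\gamma_1,\gamma_2)$ falls within the scope of $\psi:\Gl\times\Gsc\to\Gs^2$, invoke the orthogonality in $\psi_1$ (\textbf{Step 1}, cf.\ \eqref{eq:R2step1}), the pre-image bound of Lemma~\ref{lm:peta}(3) (\textbf{Step 2}, cf.\ \eqref{eq:R2step2}), and the fourth-moment estimates of Lemma~\ref{lm:Tetambd} (\textbf{Step 3}, cf.\ \eqref{eq:R2step3b}), essentially reproducing the proof of Corollary~\ref{cor:R2}, while tracking the edge $\{i,j\}$. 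The decay budget: the prescribed endpoint $i$ of $\gamma_1$ and the common vertex $j$ cost nothing; the path from $i$ to $j$ yields $N^{-1}$; and $|\cV_{\gamma_1}\cap\cV_{\gamma_2}|\geq 2$ forces $\gamma_2$ through a second prescribed vertex of $\cV_{\gamma_1}$, a cycle through two fixed vertices having $\approx N^{|\gamma_2|-2}$ shapes and thus yielding $N^{-2}$; multiplying gives $\bE(q^{(6)}_{\leq k_N^4,k_N^2,ij})^2\leq N^{-3+\epsilon}$, with the $N^\epsilon$ absorbing the polynomial-in-$k_N$ losses from the pre-image count and combinatorial prefactors. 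The delicate point is verifying that Lemmas~\ref{lm:peta} and~\ref{lm:Tetambd} still deliver this $N^{-2}$-gain with the endpoint and intersection constraints in place.

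For the second bound in \eqref{eq:qijbnds3}, the inner cycle sum in $q^{(6)}_{>k_N^4,k_N^2,ij}$ has no intersection constraint, so $q^{(6)}_{>k_N^4,k_N^2,ij}=-2\,S_1\,S_2$ with $S_1=\sum_{\gamma_1\in\Gp^{ij}:\,|\gamma_1|>k_N^4}\prod_{e\in\gamma_1}\beta g_e$ and $S_2=\sum_{\gamma_2\in\Gl:\,|\gamma_2|\leq k_N^2,\,j\in\cV_{\gamma_2}}\prod_{e'\in\gamma_2}\beta g_{e'}$. By orthogonality $\bE S_1^2\leq CN^{-1}\beta^{2k_N^4}$ is super-polynomially small, so Markov plus a union bound over the $N^2$ index pairs shows $\max_{i,j}|S_1|$ is super-polynomially small with probability tending to one; on the event $\{\max_e|g_e|\leq N^{-1/2}\log N\}$, of probability tending to one, the crude bound $\max_j|S_2|\leq N^{k_N^2+1}$ holds, and the product of the two estimates is $\leq N^{-\log N}$ for $N$ large because $k_N^4=(\log N)^6$ outgrows $(\log N)^3\geq\log_N(N^{k_N^2+1})$. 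This parallels the treatment of $R^{(3)}_{ij}$ in Lemma~\ref{lm:R134567}.
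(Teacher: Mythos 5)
Your proposal is correct and, for most of the six bounds, follows the same route as the paper: second--moment computations exploiting the $L^2(\Omega)$--orthogonality of products of distinct Gaussians together with cycle/path counting, and a reduction of $q^{(6)}_{\leq k_N^4,k_N^2,ij}$ to the $R^{(2)}_{ij}$--machinery of \textbf{Steps 1--3}, which is exactly what the paper does (it bounds the second moment by the analogous expression with the constraints $j\in\cV_{\gamma_2}$, $\gamma_2\in\Gl$ relaxed and then quotes Corollary~\ref{cor:R2}). Two of your steps genuinely simplify the paper's argument. First, for the $L^4$ bound on $q^{(2)}_{\leq k,ii}$ the paper carries out an explicit Wick/multigraph fourth--moment computation over equivalence classes of labelled multigraphs; your observation that $q^{(2)}_{\leq k,ii}$ is a finite sum of products of at most $k$ distinct Gaussians, hence lies in the Wiener chaoses of order at most $k$, so that hypercontractivity gives $\|\cdot\|_{L^4}\leq C'_k\|\cdot\|_{L^2}$ and reduces the claim to the easy $L^2$ bound, is shorter and equally rigorous. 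Second, for $q^{(6)}_{>k_N^4,k_N^2,ij}$ you note that the absence of an intersection constraint makes the term factor exactly as $-2S_1S_2$, so Markov applied to $S_1$ alone plus a crude deterministic bound on $S_2$ suffices; the paper instead runs the slightly heavier $R^{(3)}_{ij}$--style argument with a union bound over cycles. Your injectivity check for $(\gamma_1,\gamma_2)\mapsto\gamma_1\circ\gamma_2$ in the $q^{(1)}$ estimate (unique degree--one vertex $i$, unique degree--three vertex $j$) is the correct justification for the orthogonality the paper uses implicitly, and your splitting $p_{ij}=\beta g_{ij}+p'_{ij}$ for $q^{(5)}$ reproduces the paper's computation. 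The only cosmetic point: your rate for $q^{(1)}_{>k_1,k_2,ij}$ comes out as $Ce^{-\min(k_1,k_2)\log\beta^{-1}}N^{-1}$ rather than with the stated exponent $\log\beta^{-2}$, but the paper's own computation yields the same rate and only exponential decay in $\min(k_1,k_2)$ is used downstream, so nothing is lost.
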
 
\begin{proof}
By symmetry, it is enough to prove \eqref{eq:qijbnds1}, \eqref{eq:qijbnds2} and \eqref{eq:qijbnds3} for fixed $i,j\in[N]$. 

Let us start with \eqref{eq:qijbnds1}. Neglecting the prefactor $\beta^2$ in $ q_{ij}^{(3)}$, we compute 
		\[\begin{split}
		\bE \bigg(\sum_{k:k\neq i} g^2_{kj} \sum_{ \substack{   \gamma \in \Gl : \\ \{i,j\}\in \gamma, \, k\in \cV_{\gamma}    }}  \prod_{\substack { e\in \gamma:e\neq \{i,j\}   }} \beta g_e\bigg)^2& \leq 2 \bE \bigg(\sum_{k:k\neq i} \beta g^3_{kj} \!\!\sum_{ \substack{   \gamma \in \Gl :  \\ \{i,j\}, \{j,k\}\in \gamma   }} \!\!\! \prod_{\substack { e\in \gamma:e\neq \{i,j\},\{j,k\}   }} \beta g_e\bigg)^2\\
		&\hspace{0.5cm} + 2\bE \bigg(\sum_{k:k\neq i} g^2_{kj} \!\!\!\sum_{ \substack{   \gamma \in \Gl : k\in \cV_{\gamma}, \\ \{i,j\}\in \gamma, \{j,k\}\not\in\gamma     }}  \prod_{\substack { e\in \gamma:e\neq \{i,j\}   }} \beta g_e\bigg)^2.
		\end{split}\]
Now, using that the odd moments of each weight $g_{kj} $ vanish and that the summation runs over $k\in [N]$ with $k\neq i$, such that a graph that contains $ \{i,j\}, \{k,j\}\in \gamma\in \Gl$ can not contain another edge $\{k', j\}$ for $k'\not \in \{ i,k\}$, we obtain with 
		$$ \bE g_{st}^{2p} = (2p-1)!! \big(\bE g_{st}^2\big)^p = (2p-1)!! N^{-p}$$
for $p\in\bN$ that
		\[\begin{split}
		\bE \bigg(\sum_{k:k\neq i} g^3_{kj} \!\!\sum_{ \substack{   \gamma \in \Gl :  \\ \{i,j\}, \{j,k\}\in \gamma   }}  \prod_{\substack { e\in \gamma:\\e\neq \{i,j\},\{j,k\}   }} \beta g_e\bigg)^2&= 15N^{-3}\sum_{k:k\neq i}  \bE \bigg(\!\!\sum_{ \substack{   \gamma \in \Gl :  \\ \{i,j\}, \{j,k\}\in \gamma   }} \, \prod_{\substack { e\in \gamma:\\e\neq \{i,j\},\{j,k\}   }} \beta g_e\bigg)^2\\
		&= 15N^{-3}\sum_{k:k\neq i} \!\!\sum_{ \substack{   \gamma \in \Gl :  \\ \{i,j\}, \{j,k\}\in \gamma   }}  (\beta^2)^{|\gamma|-2}N^{-|\gamma|+2}   \\
		&\leq 15N^{-3}\sum_{k:k\neq i} \sum_{l\geq 3}(\beta^2)^{l-2} N^{l-3} N^{-l+2}\leq CN^{-3}, 
		\end{split}\]	
where we used that there are $ (N- 3)(N-4)\ldots (N - l+1)\leq N^{l-3}$ cycles of length $l \geq 4$ with edges $\{i,j\}, \{j,k\}$, for fixed $i,j, k\in [N]$ (if $l=3$, there is a unique such cycle). Similarly, we can use that $ g_{kj}$ and the weight of any self-avoiding path from vertex $i$ to vertex $j$ that does not contain the edge $\{k,j\}$ are independent, and that the weights of any two different self-avoiding paths are orthogonal in $L^2(\Omega)$. This yields 
		\[\begin{split}
		&\bE \bigg(\sum_{k:k\neq i,j} g^2_{kj} \!\!\!\sum_{ \substack{   \gamma \in \Gl : k\in \cV_{\gamma}, \\ \{i,j\}\in \gamma, \{j,k\}\not\in\gamma     }}  \prod_{\substack { e\in \gamma:e\neq \{i,j\}   }} \beta g_e\bigg)^2\\
		& =   \sum_{k:k\neq i,j} \big( \bE\,g^4_{kj}\big) \,  \bE\, \bigg(\sum_{ \substack{   \gamma \in \Gl : k\in \cV_{\gamma}, \\ \{i,j\}\in \gamma, \{j,k\}\not\in\gamma     }}  \prod_{\substack { e\in \gamma:e\neq \{i,j\}   }} \beta g_e\bigg)^2   \\
		&\hspace{0.5cm} + \bE\!\!\! \sum_{\substack{ k,k':k\neq i,j;\\ k'\neq i,j; k\neq k' }} g^2_{kj} g^2_{k'j} \ \bigg(\sum_{ \substack{   \gamma \in \Gl : k\in \cV_{\gamma}, \\ \{i,j\}\in \gamma, \{j,k\}\not\in\gamma     }}  \prod_{\substack { e\in \gamma:e\neq \{i,j\}   }} \beta g_e\bigg) \bigg(\sum_{ \substack{   \gamma \in \Gl : k'\in \cV_{\gamma'}, \\ \{i,j\}\in \gamma', \{j,k'\}\not\in\gamma'     }}  \prod_{\substack { e\in \gamma':e\neq \{i,j\}   }} \beta g_e\bigg)\\
		& =  3N^{-2} \sum_{k:k\neq i,j}  \sum_{ \substack{   \gamma \in \Gl : k\in \cV_{\gamma}, \\ \{i,j\}\in \gamma, \{j,k\}\not\in\gamma     }}  (\beta^2)^{|\gamma|-1}N^{-|\gamma|+1}   \\
		&\hspace{0.5cm} + \bE\!\!\! \sum_{\substack{ k,k':k\neq i,j;\\ k'\neq i,j; k\neq k' }} g^2_{kj} g^2_{k'j} \ \bigg(\sum_{ \substack{   \gamma \in \Gl : k\in \cV_{\gamma}, \\ \{i,j\}\in \gamma, \{j,k\}\not\in\gamma     }}  \prod_{\substack { e\in \gamma:e\neq \{i,j\}   }} \beta g_e\bigg) \bigg(\sum_{ \substack{   \gamma \in \Gl : k'\in \cV_{\gamma'}, \\ \{i,j\}\in \gamma', \{j,k'\}\not\in\gamma'     }}  \prod_{\substack { e\in \gamma':e\neq \{i,j\}   }} \beta g_e\bigg). 
		\end{split}\]
 Using once again that the odd moments of the edge weights vanish, one can also factorize the expectation in the last term on the \rhs of the previous equation to get
 		\[\begin{split}
		&\bE \bigg(\sum_{k:k\neq i} g^2_{kj} \!\!\!\sum_{ \substack{   \gamma \in \Gl : k\in \cV_{\gamma}, \\ \{i,j\}\in \gamma, \{j,k\}\not\in\gamma     }}  \prod_{\substack { e\in \gamma:e\neq \{i,j\}   }} \beta g_e\bigg)^2\\
		& = 3N^{-2} \!\!\sum_{k:k\neq i,j}  \sum_{ \substack{   \gamma \in \Gl : k\in \cV_{\gamma}, \\ \{i,j\}\in \gamma, \{j,k\}\not\in\gamma     }} \!\!\!\!  (\beta^2)^{|\gamma|-1}N^{-|\gamma|+1}+N^{-2}\!\! \!\!\!\sum_{\substack{ k,k':k\neq i,j;\\ k'\neq i,j; k\neq k' }}   \sum_{ \substack{   \gamma \in \Gl : k, k'\in \cV_{\gamma}, \\ \{i,j\}\in \gamma, \{j,k\}, \{j,k'\}\not\in\gamma     }}  \hspace{-0.6cm} (\beta^2)^{|\gamma|-1}N^{-|\gamma|+1}\\
		&\leq C N^{-2} \sum_{k:k\neq i,j}\sum_{l\geq 3} l (\beta^2)^{l-1} N^{l-3} N^{-l+1} + N^{-2} \!\!\!\sum_{\substack{ k,k':k\neq i,j;\\ k'\neq i,j; k\neq k' }} l^2 (\beta^2)^{l-1} N^{l-4} N^{-l+1} \leq C N^{-3}.
		\end{split}\]
Here, we used that there are not more than $ l N^{l-3} $ cycles of length $l$ with edge $\{i,j\}$ and some fixed vertex $k\neq i,j$, and not more than $l^2 N^{l-4}$ cycles with edge $\{i,j\}$ and two different vertices $k,k'\neq i,j$. In summary, this proves $ \| q_{ij}^{(3)}\|_{L^2(\Omega)} \leq CN^{-3/2}$. 
		 
Next, using once again the orthogonality of the weights of two different self-avoiding paths in $L^2(\Omega)$, in particular the orthogonality of $g_{ij}$ to every self-avoiding path from vertex $i$ to vertex $j$ of length greater than two, the bound on $q_{ij}^{(5)} $ is obtained as
		\[\begin{split}
		\bE \, g_{ij}^4 p_{ij}^2 &= 15\beta^2 N^{-3}  +  3 N^{-2} \sum_{ \substack{   \{i,j\}\in \gamma\in \Gl   }}  (\beta^2)^{|\gamma|-1}     N^{-|\gamma|+1}  \\
		& \leq 15\beta^2 N^{-3} +3 N^{-2}\sum_{l\geq 3} (\beta^2)^{l-1}  (N-2)(N-3)\ldots (N-l+1)     N^{-l+1} \leq C N^{-3}.
		\end{split}\]		

Let us now switch to \eqref{eq:qijbnds2}, fixing $k,k_1,k_2 \geq 3$. For $q_{>k_1,k_2,ij}^{(1)} $, we use that a self-avoiding path $\gamma_1$ from vertex $i$ to vertex $j$ which has length greater than two can be identified uniquely with a cycle that contains $\{i,j\}$ and which has length $ |\gamma_1|+1$ (the loop is given by $\{i,j\}\circ\gamma_1 $). The large graph contributions are then controlled by using the $L^2(\Omega)$ orthogonality of different cycles. With the representation
		\[\begin{split}
		q_{>k_1,k_2,ij}^{(1)} &= - 2 \sum_{ \substack{    \gamma_1  \in \Gl : \\ \{i,j\}\in\gamma_1, |\gamma_1|>k_1   }}  \Big(\prod_{\substack { e\in \gamma_1:e\neq \{i,j\}  }} \beta g_e\Big) \sum_{ \substack{    \gamma_2 \in \Gl : \\ \cV_{\gamma_1}\cap \cV_{\gamma_2}=\{j\} }}   \prod_{\substack { e'\in \gamma_2   }} \beta g_{e'}\\
		&\hspace{0.5cm} - 2 \sum_{ \substack{    \gamma_1  \in \Gl : \\\{i,j\}\in\gamma_1,|\gamma_1|\leq k_1     }}  \Big(\prod_{\substack { e\in \gamma_1:e\neq \{i,j\}  }} \beta g_e\Big) \sum_{ \substack{    \gamma_2 \in \Gl : |\gamma_2|>k_2 , \\ \cV_{\gamma_1}\cap \cV_{\gamma_2}=\{j\} }}   \prod_{\substack { e'\in \gamma_2   }} \beta g_{e'},
		\end{split}\]
we find that
		\[\begin{split}
		\big\| q_{>k_1,k_2,ij}^{(1)} \big\|^2_{L^2(\Omega)}& \leq  C \sum_{ \substack{    \gamma_1  \in \Gl : \\ \{i,j\}\in\gamma_1, |\gamma_1|>k_1   }}(\beta^2)^{|\gamma_1|-1} N^{-|\gamma_1| +1} \sum_{ \substack{    \gamma_2 \in \Gl : \\ \cV_{\gamma_1}\cap \cV_{\gamma_2}=\{j\} }} (\beta^2)^{|\gamma_2|} N^{-|\gamma_2|}\\
		& \hspace{0.5cm} +  C \sum_{ \substack{    \gamma_1  \in \Gl : \\ \{i,j\}\in\gamma_1, |\gamma_1|\leq k_1   }}(\beta^2)^{|\gamma_1|-1} N^{-|\gamma_1| +1} \sum_{ \substack{    \gamma_2 \in \Gl : |\gamma_2| > k_2 \\ \cV_{\gamma_1}\cap \cV_{\gamma_2}=\{j\} }} (\beta^2)^{|\gamma_2|} N^{-|\gamma_2|} \\
		&\leq C \sum_{ \substack{     \gamma_1    \in \Gl :\\\{i,j\}\in\gamma_1, |\gamma_1|>k_1 }}(\beta^2)^{|\gamma_1| -1} N^{-|\gamma_1|+1} \sum_{ \substack{     \gamma_2   \in \Gl: j\in \cV_{\gamma_2} }}(\beta^2)^{|\gamma_2| } N^{-|\gamma_2|} \\
		&\hspace{0.5cm} + C   \sum_{ \substack{     \gamma_1    \in \Gl :\\\{i,j\}\in\gamma_1  }}(\beta^2)^{|\gamma_1| -1} N^{-|\gamma_1|+1}  \sum_{ \substack{     \gamma_2   \in \Gl:|\gamma_2|>k_2,  j\in \cV_{\gamma_2} }}(\beta^2)^{|\gamma_2| } N^{-|\gamma_2|}  \\
		&\leq C \sum_{l > k_1} (\beta^2)^{ l -1} N^{l-2}N^{-l +1}   \sum_{l \geq 3} (\beta^2)^{ l } N^{l-1}N^{-l } \\
		&\hspace{0.5cm} +  C \sum_{l \geq 3} (\beta^2)^{ l -1} N^{l-2}N^{-l +1} \sum_{l > k_2} (\beta^2)^{ l } N^{l-1}N^{-l }\leq CN^{-2} (\beta^2)^{-\min( k_1,k_2) }.
		\end{split}\]  
Analogously, we obtain
		\[\begin{split}
		\big\| q_{>k,ii}^{(2)} \big\|^2_{L^2(\Omega)}  \leq C \sum_{ \substack{     \gamma\in \Gl:\\ | \gamma| > k , i\in \cV_\gamma  }}(\beta^2)^{|\gamma_1| } N^{-|\gamma_1|}&\leq C \sum_{l > k } (\beta^2)^{l } N^{l-1} N^{-l} \leq CN^{-1} e^{-k \log \beta^{-2}}. 
		\end{split}\]  
Consider now the $ L^4(\Omega)$ estimate on $q_{>k,ii}^{(2)}$. Neglecting constant prefactors, we have that
		\be \label{eq:4thmom1}\begin{split}
		\bE \big(q_{\leq k,ii}^{(2)}\big)^4 & \leq \sum_{ \substack{    \gamma_1, \gamma_2,\gamma_3, \gamma_4 \in \Gl:\\ i\in \cV_{\gamma_t}, | \gamma_t|\leq k \,\forall \,t   }} \bE \prod_{\substack { e\in \gamma_1, \dots, \gamma_4   }} \beta  g_e.
		\end{split}\ee
To estimate the right hand side further, we proceed similarly as in \cite[Lemma 3.1]{ALR} and interpret the sum to range over multigraphs $ \gamma = \gamma_1\circ \gamma_2\circ \gamma_3\circ\gamma_4$ with $\gamma_t\in \Gl$, $i\in \cV_{\gamma_t}$ and $|\gamma_t|\leq k$, for each $t\in \{1,2,3,4\}$. By Wick's rule, the expectation of the weight
		\[  \bE \prod_{\substack { e\in \gamma= \gamma_1\circ \gamma_2\circ\gamma_3\circ \gamma_4   }} \beta  g_e\]
does only contribute to the fourth moment if the multiplicity $n_{uv}(\gamma) $ of each edge is even, and since $\gamma$ consists of four cycles, we have in this case in fact $n_{uv}(\gamma)\in \{0,2,4\} $ for each $u,v\in [N]$ with $u<v$. Moreover, we can bound the contribution of such a $\gamma$ trivially by 
		\[ \bE \prod_{\substack { e\in \gamma   }} \beta  g_e = (\beta^2)^{|\gamma|/2}   \prod_{1\leq u<v \leq N}\bE \, g_{uv}^{n_{uv}(\gamma)} \leq  (3\beta^2)^{2k} N^{-|\gamma|/2}. \]
Now, notice that the contribution of each multi-graph $\gamma = \gamma_1\circ \gamma_2\circ \gamma_3\circ\gamma_4$ to the sum on the \rhs in \eqref{eq:4thmom1} depends only on the occupation numbers of the edges $n_{uv}(\gamma) $. In other words, if there exists a permutation $ \pi :[N]\to [N] $ such that $ n_{uv}(\gamma) = n_{\pi(u)\pi(v)}(\gamma')$ for all $u<v$ (i.e., $\gamma$ and $\gamma'$ are isomorphic multi-graphs), then 
		\[\bE \prod_{\substack { e\in \gamma   }} \beta  g_e = \bE \prod_{\substack { e\in \gamma'   }} \beta  g_e. \]
Defining $ \gamma\sim \gamma'$ to be equivalent if and only if there exists a permutation $ \pi :[N]\to [N] $ such that $ n_{uv}(\gamma) = n_{\pi(u)\pi(v)}(\gamma')$ for all $u<v$, and denoting by $[\gamma]$ the corresponding equivalence classes that partition the set 
		\[S_4 :=\big \{\gamma=\gamma_1\circ\gamma_2\circ\gamma_3\circ\gamma_4: i\in \cV_{\gamma_t}, | \gamma_t|\leq k,  \;\forall \,t =1,2,3,4\big \}, \] 
into a disjoint union of equivalence classes, we obtain  
		\[ \bE \big(q_{\leq k,ii}^{(2)}\big)^4\leq (3\beta^2)^{2k} \sum_{ [\gamma] } | [\gamma]| N^{-|[\gamma]|/2} =  (3\beta^2)^{2k}   \sum_{ [\gamma] }  |[\gamma]| e^{-\frac14\sum_{i=1}^N n_i([\gamma])  }.   \]
Now, the number of elements in an equivalence class is bounded by $ N^{ |\cV_{[\gamma] } |-1 }$, because all multi-graphs have vertices in $[N]$ and because each graph contains $i\in[N]$, by assumption. This assumption also implies that $ n_i([\gamma]) = 8$ for each $[\gamma]$. Together with the fact that the degree $n_i([\gamma]) \geq 4$ for each vertex in $[\gamma]$ (because $\gamma$ is the product of cycles and every edge must occur at least twice to give a non-zero contribution under $\bE$), we find
		\[\begin{split}\bE \big(q_{\leq k,ii}^{(2)}\big)^4 \leq    (3\beta^2)^{2k}N^{-1}  \sum_{ [\gamma] }  N^{|\cV_{[\gamma] } |-\frac14\sum_{i=1}^N n_i([\gamma])  } \leq (3\beta^2)^{2k} N^{-2}  \sum_{ [\gamma] }   \leq (3\beta^2)^{2k} C'_k N^{-2}. \end{split}\]
Here, we used that the number of multi-graphs with $l \leq 4k$ edges is bounded by some $C'_k>0$ determined uniquely by $k$ (in particular independent of $N$). This proves \eqref{eq:qijbnds2}.

Finally, let us explain the bounds in \eqref{eq:qijbnds3}. Let us note first the upper bound
\begin{align*}
	\bE\big( (q_{\leq k_N^4,k_N^2,ij}^{(6)})^2\big)&\leq 4 \,\bE  \bigg(\sum_{ \substack{    \gamma_1  \in \Gl : \\\{i,j\}\in\gamma_1, |\gamma_1|\leq k_N^2  }} \!\! \Big(\prod_{\substack { e\in \gamma_1:e\neq \{i,j\}  }} \beta g_e\Big) \sum_{ \substack{    \gamma_2 \in \Gsc :|\gamma_2|\leq k_N^4, \\ |\cV_{\gamma_1} \cap \cV_{\gamma_2}|\geq 2 }} \!\!  \prod_{\substack { e'\in \gamma_2   }} \beta g_{e'},  \bigg)^2,\end{align*}
where we removed the restrictions $j\in \cV_{\gamma_2}$, $\gamma_2 \in \Gl$ and where we relaxed the large graph constraint to $ | \gamma_2|\leq k_N^4$, as all cross terms give positive contributions under taking the expectation, by Wick's rule. Now, it is clear that the \rhs in the last bound can be treated in the same way as $\bE (R^{(2)}_{ij})^2 $ in Corollary \ref{cor:R2} (the differences to $R^{(2)}_{ij}$ are only the graph size constraints and the fact that the edge weights $\tanh(\beta g_e)$ are replaced by $\beta g_e$; both differences do not affect the arguments in the proof of Corollary \ref{cor:R2}). Thus 
		\[\bE\big( \big(q_{\leq k_N^4,k_N^2,ij}^{(6)}\big)^2\big) \leq CN^{-3+\epsilon}\]
for every $\epsilon>0$ small enough and $N$ large enough.

Similarly, the contribution $q_{> k_N^4,k_N^2,ij}^{(6)}$ can be controlled like the term $ R^{(3)}_{ij}$, defined in \eqref{eq:defR2}, as in the proof of Lemma \ref{lm:R134567}. This yields the probability estimate in \eqref{eq:qijbnds3}.
\end{proof}
 
We are now ready to conclude Theorem \ref{thm:main}.
\begin{proof}[Proof of Theorem \ref{thm:main}]
By Corollary \ref{cor:Mmain}, it is enough to show that 
		\[ \| \tbf{P} - (1+\beta^2- \beta \tbf G)^{-1}\|_{\text{op}}\to 0\]
as $N\to\infty$, in the sense of probability. Using $ \|\beta \tbf G\|_{\text {op}} < 2\beta + (1-\beta^2) \epsilon/2$, we first bound
		\[\begin{split}
		\| \tbf P - (1+\beta^2- \beta \tbf G)^{-1}\|_{\text{op}} &\leq  \|  (1+\beta^2- \beta \tbf G)^{-1}\|_{\text{op}} \| \tbf Q \|_{\text{op}}\leq (1-\beta)^{-2}(1+ \epsilon) \| \tbf Q \|_{\text{op}} 
		\end{split}\]
so that, by \eqref{eq:Qdec2} and $ \|\tbf A \tbf B\|_{\text{op}}\leq \|\tbf A\|_{\text{op}}\| \tbf B\|_{\text{op}}$ for all $ \tbf A, \tbf B\in\bR^{N\times N}$, we obtain that
		\be\label{eq:pfmbnd1}\begin{split}
		&\frac{ \| \tbf P - (1+\beta^2- \beta \tbf G)^{-1}\|_{\text{op}}}{ (1-\beta)^{-2}(1+ \epsilon)} \\
		& \leq   \big(  \| \tbf P - (1+\beta^2- \beta \tbf G)^{-1}\|_{\text{op}}  +C \big) \big( \beta^2 N^{-1}   + \| \tbf{Q}^{(2)}\|_{\text{op}} + \|\tbf{Q}_{> k_2}^{(2)}\|_{\text{op}}+ \| \tbf{Q}^{(4)}\|_{\text{op}}\big)  \\
		&\hspace{0.5cm}  + \|\tbf{Q}^{(3)} \|_{\text{op}}+\| \tbf{Q}^{(5)}\|_{\text{op}}  +  \|\tbf{Q}_{>k_1,k_2}^{(1)}\|_{\text{op}} + \|\tbf{Q}_{> k_2}^{(2)}\|_{\text{op}} + \|\tbf{Q}_{>k_1,k_2}^{(6)}\|_{\text{op}} +\|\tbf{Q}_{\leq k_1,k_2}^{(6)}\|_{\text{op}} \\
		\end{split}\ee
for $C=(1-\beta)^{-2}(1+ \epsilon) $ and for fixed $k_1,k_2\geq 3$. Choosing $k_1=k_N^4, k_2=k_N^2 $, defined in \eqref{eq:defkN}, \st $k_N\to \infty$ as $N\to\infty$, we make sure that the $k_1$- and $k_2$-dependent matrices on the \rhs converge to zero, in the sense of probability. Indeed, by \eqref{eq:qijbnds3}, we get 
		\[\begin{split}
		 \bP\Big( \|\tbf{Q}_{>k_N^4,k_N^2}^{(6)}\|_{\text{op}}> \delta  \Big) \leq \bP\Big( \|\tbf{Q}_{>k_N^4,k_N^2}^{(6)}\|_{\text{F}}> \delta  \Big) &\leq \bP\Big( \max_{i,j \in [N]:i\neq j}| q_{>k_N^4,k_N^2,ij}^{(6)}| > \delta N^{-1}  \Big) \to 0
		 \end{split}\]
as $N\to \infty$, for every $\delta >0$. Similarly, we obtain from \eqref{eq:qijbnds2} and \eqref{eq:qijbnds3} that
		\[\begin{split}
		&\bP\Big(  \|\tbf{Q}_{>k_N^4,k_N^2}^{(1)}\|_{\text{op}} > \delta  \Big) +\bP\Big(  \|\tbf{Q}_{>k_N^2}^{(2)}\|_{\text{op}} > \delta  \Big) +\bP\Big(  \|\tbf{Q}_{\leq k_N^4,k_N^2}^{(6)}\|_{\text{op}} > \delta  \Big)\\
		& \leq \bP\Big(  \|\tbf{Q}_{>k_N^4,k_N^2}^{(1)}\|_{\text{F}} > \delta  \Big) +\bP\Big(  \|\tbf{Q}_{>k_N^2}^{(2)}\|_{\text{F}} > \delta  \Big) +\bP\Big(  \|\tbf{Q}_{\leq k_N^4,k_N^2}^{(6)}\|_{\text{F}} > \delta  \Big)\\
		&\leq  \delta^{-2} N^2  \max_{i,j\in[N]: i\neq j}\Big(  \| q_{>k_N^4,k_N^2, ij}^{(1)}\|^2_{L^2(\Omega)}+\| q_{\leq k_N^4,k_N^2, ij}^{(6)}\|^2_{L^2(\Omega)}\Big) + \delta^{-2} N  \max_{i\in[N] } \| q_{>k_N^2, ii}^{(2)}\|^2_{L^2(\Omega)} \\
		&\leq \delta^{-2} N^{-1+\epsilon}
		\end{split}\]
for fixed $\epsilon >0$ small enough, so that 
		\[ \lim_{N\to\infty} \|\tbf{Q}_{>k_N^4,k_N^2}^{(1)}\|_{\text{op}} =  \lim_{N\to\infty} \|\tbf{Q}_{>k_N^2}^{(2)}\|_{\text{op}}=\lim_{N\to\infty} \|\tbf{Q}_{\leq k_N^4,k_N^2}^{(6)}\|_{\text{op}}=\lim_{N\to\infty} \|\tbf{Q}_{>k_N^4,k_N^2}^{(6)}\|_{\text{op}}=0. \]	
Inserting this into \eqref{eq:pfmbnd1}, we conclude that
		\be\label{eq:pfmbnd2}\begin{split}
		&\frac{ \| \tbf P - (1+\beta^2- \beta \tbf G)^{-1}\|_{\text{op}}}{ (1-\beta)^{-2}(1+ \epsilon)} \\
		& \leq   \big(  \| \tbf P - (1+\beta^2- \beta \tbf G)^{-1}\|_{\text{op}}  +C \big) \big(   \| \tbf{Q}^{(2)}\|_{\text{op}}  + \| \tbf{Q}^{(4)}\|_{\text{op}} +o_1(1)\big)  \\
		&\hspace{0.5cm}  + \|\tbf{Q}^{(3)} \|_{\text{op}}+\| \tbf{Q}^{(5)}\|_{\text{op}}  +  o_2(1) \\
		\end{split}\ee
for two errors $o_1(1)$ and $o_2(1)$ that converge $\lim_{N\to\infty} o_1(1)=\lim_{N\to\infty} o_2(1)=0 $ in probability. Now, consider the remaining matrices $\tbf{Q}^{(2)}$, $\tbf{Q}^{(3)}$, $\tbf{Q}^{(4)}$ and $\tbf{Q}^{(5)}$. Using that $\| \tbf Q^{(3)} \|_{\text{op}}\leq \| \tbf Q^{(3)} \|_{\text{F}}$, we obtain with the bounds \eqref{eq:qijbnds1} for every $\delta >0$ that
		\[\begin{split} 
		 \bP\big(  \| \tbf Q^{(3)} \|_{\text{op}}  > \delta  \big) & \leq \bP\big(  \| \tbf Q^{(3)} \|_{\text{F}}  > \delta  \big)\leq \delta^{-2} N^2 \max_{i,j\in [N]: i\neq j} \| q_{ij}^{(3)}\|_{L^2(\Omega)}^2 \leq C \delta^{-2}  N^{-1}
		\end{split}\] 
so that $ \lim_{N\to\infty} \| \tbf Q^{(3)} \|_{\text{op}}=0$. The same argument implies $\lim_{N\to\infty} \| \tbf Q^{(5)} \|_{\text{op}}  =0$ and for $\tbf{Q}^{(4)}$, we notice that, by a standard concentration and union bound, we have 
		\[   \bP\Big(   \max_{i\in [N]}  \Big|  \sum_{u}\big(g_{iu}^2- \bE\,g_{iu}^2 \big) \Big| >  \delta      \Big) \leq  C e^{-c N\delta^{2} } \]
for suitable $ C, c >0$ that are independent of $N\in \bN$ and $\delta >0$. Recalling the definition of $q_{ii}^{(4)}$ in \eqref{eq:defQ1to5}, this trivially implies that in the sense of probability we have
		\[ \lim_{N\to\infty} \| \tbf{Q}^{(4)} \|_{\text{op}} = \beta^2\lim_{N\to\infty}\max_{i\in[N]}  \Big|  \sum_{u}\big(g_{iu}^2- \bE\,g_{iu}^2 \big) \Big| = 0. \]
Finally, consider $ \tbf{Q}^{(2)}$. For $\delta>0$ and fixed $ k  \geq 3$, we use \eqref{eq:qijbnds2} to estimate
		\[\begin{split}
		\bP\big(  \| \tbf Q^{(2)} \|_{\text{op}}  > \delta  \big) & \leq \bP\big(  \| \tbf Q_{> k}^{(2)} \|_{\text{op}}  > \delta/2  \big) + \bP\big(  \| \tbf Q_{\leq k}^{(2)} \|_{\text{op}}  > \delta/2  \big)\\
		&\leq \delta^{-2} N \max_{i\in [N]}\| q_{>k,ii}^{(2)}\|_{L^2(\Omega)}^2 + 2^4\delta^{-4} N \max_{i\in [N]}\| q_{>k,ii}^{(2)}\|_{L^4(\Omega)}^4\\
		&\leq  C \delta^{-2} e^{-k \log \beta^{-2}} + C'_k \delta^{-4} N^{-1}.
		\end{split}\]
 Sending first $N\to\infty$ and then $k\to\infty$, we conclude that $\lim_{N\to\infty}  \| \tbf Q^{(2)} \|_{\text{op}}=0$. 
 
Collecting the bounds from above and inserting them into \eqref{eq:pfmbnd2}, we conclude that 
		\[\begin{split}
		 \| \tbf P - (1+\beta^2- \beta \tbf G)^{-1}\|_{\text{op}}   &\leq \frac{o_1'(1)}{(1-\beta)^{2}(1+ \epsilon)^{-1}- o_2'(1)}
		\end{split}\]
for two errors $o_1'(1)$ and $ o_2'(1)$ that satisfy $\lim_{N\to\infty} o_1'(1) = \lim_{N\to\infty} o_2'(1) =0$. In particular, we conclude that 
$\lim_{N\to\infty } \| \tbf P - (1+\beta^2- \beta \tbf G)^{-1}\|_{\text{op}} =0 $ in probability. 
\end{proof}

\vspace{0.25cm}
\noindent\textbf{Acknowledgements.} We thank two anonymous referees for providing several suggestions to improve our manuscript and to simplify some of our arguments. C. B. thanks G. Genovese for helpful discussions and acknowledges support by the Deutsche Forschungsgemeinschaft (DFG, German Research Foundation) under Germany’s Excellence Strategy – GZ 2047/1, Project-ID 390685813. A. S. acknowledges support by the Deutsche Forschungsgemeinschaft (DFG, German Research Foundation) under Germany’s Excellence Strategy – GZ 2047/1, Project-ID 390685813 and Project-ID 211504053 - SFB1060. The work of C. X. is partially funded by a Simons Investigator award. The work of H.-T. Y. is partially supported by the NSF grant DMS-1855509 and DMS-2153335 and a Simons Investigator award.


\appendix
\section{Proof of the Identity \eqref{eq:heur2}}\label{appx}

In this appendix, we outline the key steps of the computation for
		\be \label{eq:app1} \begin{split}
		\bE   \| \tbf P( 1+\beta^2 - \beta \tbf G) - \text{id}_{\bR^N}  \|_{\text F}^2 = O(1).
		\end{split} \ee
Denoting by $ \langle \tbf A ,\tbf B \rangle_{\text F} = \text{tr}\, \tbf A^T \tbf B =  \sum_{i,j=1}^N a_{ji} b_{ji} $ the standard Hilbert-Schmidt inner product for matrices $\tbf A, \tbf B \in \bR^{N\times N}$, we have that 
		\be\label{eq:form1}\begin{split}
		\bE   \| \tbf P( 1+\beta^2 - \beta \tbf G) - \text{id}_{\bR^N}  \|_{\text F}^2 
		& =  ( 1+\beta^2)^2\, \bE\,   \| \tbf P  \|_{\text F}^2 + \beta^2 \bE   \| \tbf P \tbf G  \|_{\text F}^2 + N - 2  ( 1+\beta^2)\, \bE \,\text{tr}\, \tbf P  \\
		& \hspace{0.5cm} - 2 ( 1+\beta^2)\beta \, \bE\,\langle\tbf P , \tbf P \tbf G \rangle_{\text F}+2 \beta \, \bE \,\text{tr}\, \tbf P \tbf G. 
		\end{split}\ee
We compute each expectation on the \rhs of the previous equation separately. 

First, using that $p_{ii} =1$ and $\bE \, p_{ij}=0$ for all $i\neq j\in [N]$, we trivially have $\bE \,\text{tr}\, \tbf P  =  N. $
Using that the weights of distinct paths are orthogonal in $L^2(\Omega)$, we find next that
		\[ \bE \,\text{tr}\, \tbf P \tbf G= \sum_{i,j: i\neq j} \bE \,p_{ij} g_{ij} = \beta (N-1).  \]
Similarly, we obtain that
		\[\begin{split}
		\bE\,   \| \tbf P  \|_{\text F}^2 = N + \sum_{i,j: i\neq j} \sum_{\gamma\in \Gamma_p^{ij}} \bE \bigg(\prod_{e\in\gamma} \beta g_e\bigg)^2& = N+  \sum_{i,j: i\neq j} \sum_{\gamma\in \Gamma_p^{ij}} \frac{\beta^{2|\gamma|} }{N^{|\gamma|}} = \frac{N}{(1-\beta^2) } + O(1).
		\end{split}\]
 Note that, in the last step, we used that
 		$  | \{ \gamma\in  \Gamma_p^{ij}: |\gamma| = k  \}|  = \frac{(N-2)!}{(N-k-1)!} $
 for every $k\geq 1$. Computing the remaining two expectations that appear on the \rhs of \eqref{eq:form1} is slightly more involved. Using Gaussian integration by parts and the fact that $\bE g_{uv}^{2l+1} =0$ for all $\l\in\bN_0$, we find that
		\[\begin{split}
		\beta\, \bE\,\langle\tbf P , \tbf P \tbf G \rangle_{\text F} & = \frac{\beta^2}N\sum_{i,j,k:i\neq k} \,  \bE \bigg(p_{ij} \sum_{\substack{ \gamma\in \Gp^{jk}:\\ \{i,k\}\in \gamma }} \, \prod_{ \substack{ e \in \gamma, e\not= \{i,k\}  } } \!\! \beta  g_{e}+ \,p_{jk} \sum_{\substack{ \gamma\in \Gp^{ij}:\\ \{i,k\}\in \gamma }} \, \prod_{ \substack{ e \in \gamma, e\not= \{i,k\}  } } \!\! \beta  g_{e} \bigg)\\
		& = 2\beta^2 (N-1) + \frac{2\beta^2}N\sum_{i,j,k: i\neq j, j\neq k,k\neq i}  \bE \sum_{\substack{ \gamma\in \Gp^{ij}:\\ \{i,k\}\in \gamma }} \, \sum_{\substack{ \gamma'\in \Gp^{jk}}}  \prod_{ \substack{ e \in \gamma\setminus \{i,k\},  e'\in\gamma' } } \!\! \beta^2  g_{e}g_{e'}\\
		& = 2\beta^2 N +  2 \sum_{\substack{ i,j,k: i\neq j,\\  j\neq k,k\neq i }} \sum_{\substack{ \gamma\in \Gp^{ij}: \\ \{i,k\}\in \gamma }}  \frac{\beta^{2|\gamma|}}{N^{|\gamma|}}+ O(1)\\
		& = 2\beta^2 N + \frac{2\beta^4}{(1-\beta^2)}N+ O(1). 
		\end{split}\]					
Similarly, we find that $ \beta^2\, \bE   \| \tbf P \tbf G  \|_{\text F}^2 = \sum_{u=1}^5\Delta_u$, where
		\[\begin{split} 
		\Delta_1 & = \beta^2 \big(1-N^{-1}\big) \bE \|\tbf P\|_{\text{F}}^2 = \frac{\beta^2}{(1-\beta^2)}N + O(1), \\
		\Delta_2 & = \frac{\beta^4}{N^2} \sum_{\substack{ i,j,k,l:\\  i\neq j, i\neq l,\\ j\neq k, k\neq l   }}  \sum_{ \substack{ \gamma\in \Gp^{jk}: \\ \{i,j\}\in\gamma }} \sum_{ \substack{ \gamma'\in \Gp^{kl}: \\\{i,l\}\in\gamma' }}  \bE \prod_{\substack{ e\in \gamma \setminus \{i,j\}, \\ e'\in \gamma'\setminus \{i,l \} } } \beta^2 g_e g_{e'}, \\
		\Delta_3 & = \frac{\beta^4}{N^2} \sum_{\substack{ i,j,k,l:\\  i\neq j, i\neq l,\\ j\neq k, k\neq l   }}  \sum_{ \substack{ \gamma\in \Gp^{kl}: \\ \{i,j\}\in\gamma }} \sum_{ \substack{ \gamma'\in \Gp^{jk}: \\\{i,l\}\in\gamma' }}  \bE \prod_{\substack{ e\in \gamma \setminus \{i,j\}, \\ e'\in \gamma'\setminus \{i,l \} } } \beta^2 g_e g_{e'}, \\
		\Delta_4 & = \frac{2\beta^4}{N^2} \sum_{\substack{ i,j,k: \\ i\neq j, j\neq k,\\i\neq k    }}  \sum_{ \substack{ \gamma\in \Gp^{jk}: \\ \{i,j\}, \{i,k\}\in\gamma }}   \bE \prod_{\substack{ e\in \gamma \setminus \{i,k\}\circ \{i,j\}} } \beta g_{e} =  2\beta^4 N  + O(1) , \\
		\Delta_5 & =\frac{2\beta^4}{N^2} \sum_{\substack{ i,j,k,l:\\  i\neq j, i\neq l,i\neq k\\ j\neq k, k\neq l, l\neq j   }}  \sum_{ \substack{ \gamma\in \Gp^{kl} }} \sum_{ \substack{ \gamma'\in \Gp^{jk}: \\ \{i,j\}, \{i,l\}\in\gamma' }}  \bE \prod_{\substack{ e\in \gamma  , \\ e'\in \gamma'\setminus \{i,j\}\circ\{i,l \} } } \beta^2 g_e g_{e'}.\\
		\end{split}\]
Arguing as in the previous steps, we obtain that
		\[\begin{split}
		\Delta_2 & = \beta^4 N  + \frac{\beta^4}{N^2} \sum_{\substack{ i,j,k:\\  i\neq j, j\neq k,\\ i\neq k   }} \sum_{\gamma\in\Gp^{ik}: j\not \in \cV_{\gamma}} \frac{\beta^{2|\gamma|}}{N^{|\gamma|}} + \frac{\beta^4}{N^2} \sum_{\substack{ i,j,k,l:\\  i\neq j, i\neq l, j\neq l \\ j\neq k, k\neq l, i\neq k   }} \sum_{\substack{ \gamma\in\Gp^{ik}:\\ j, l \not \in \cV_{\gamma}}} \frac{\beta^{2|\gamma|}}{N^{|\gamma|}}+ O(1)\\
		& = \beta^4 N+ \frac{\beta^6}{(1-\beta^2)} N+ O(1)
		\end{split}\]
and that
		\[\begin{split} 
		\Delta_3 & =   \frac{\beta^4}{N^2} \sum_{\substack{ i,j ,l:\\  i\neq j, i\neq l, j\neq l    }}  \sum_{ \substack{ \gamma\in \Gp^{il}: \\ \{i,j\}\in\gamma }} \sum_{ \substack{ \gamma'\in \Gp^{ij}: \\\{i,l\}\in\gamma' }}  \bE \prod_{\substack{ e\in \gamma \setminus \{i,j\}, \\ e'\in \gamma'\setminus \{i,l \} } } \beta^2 g_e g_{e'}  \\
		&\hspace{0.5cm} +  \frac{\beta^4}{N^2} \sum_{\substack{ i,j,k:\\  i\neq j, i\neq k, j\neq k    }}  \sum_{ \substack{ \gamma\in \Gp^{jk}: \\ \{i,j\}\in\gamma }} \sum_{ \substack{ \gamma'\in \Gp^{jk}: \\\{i,j\}\in\gamma' }}  \bE \prod_{\substack{ e\in \gamma \setminus \{i,j\}, \\ e'\in \gamma'\setminus \{i,j \} } } \beta^2 g_e g_{e'} \\
		&\hspace{0.5cm} +  \frac{\beta^4}{N^2} \sum_{\substack{ i,j,k,l:\\  i\neq j, i\neq k, i\neq l,\\ j\neq k, j\neq l, k\neq l   }}  \sum_{ \substack{ \gamma\in \Gp^{kl}: \\ \{i,j\}\in\gamma }} \sum_{ \substack{ \gamma'\in \Gp^{jk}: \\\{i,l\}\in\gamma' }}  \bE \prod_{\substack{ e\in \gamma \setminus \{i,j\}, \\ e'\in \gamma'\setminus \{i,l \} } } \beta^2 g_e g_{e'} + O(1)\\
		& =    \frac{\beta^4}{N^2} \sum_{\substack{ i,j,k,l:\\  i\neq j, i\neq k, i\neq l,\\ j\neq k, j\neq l, k\neq l   }}  \sum_{ \substack{ \gamma\in \Gp^{kl}: \\ \{i,j\}\in\gamma }} \sum_{ \substack{ \gamma'\in \Gp^{jk}: \\\{i,l\}\in\gamma' }}  \bE \prod_{\substack{ e\in \gamma \setminus \{i,j\}, \\ e'\in \gamma'\setminus \{i,l \} } } \beta^2 g_e g_{e'} + O(1). 
		\end{split}\] 
For the third term on the \rhs of the previous equation, we notice that 
		\[\begin{split} 
		&\frac{\beta^4}{N^2} \sum_{\substack{ i,j,k,l:\\  i\neq j, i\neq k, i\neq l,\\ j\neq k, j\neq l, k\neq l   }}  \sum_{ \substack{ \gamma\in \Gp^{kl}: \\ \{i,j\}\in\gamma }} \sum_{ \substack{ \gamma'\in \Gp^{jk}: \\\{i,l\}\in\gamma' }}  \bE \prod_{\substack{ e\in \gamma \setminus \{i,j\}, \\ e'\in \gamma'\setminus \{i,l \} } } \beta^2 g_e g_{e'} \\
		& =\frac{\beta^4}{N^2} \sum_{\substack{ i,j,k,l:\\  i\neq j, i\neq k, i\neq l,\\ j\neq k, j\neq l, k\neq l   }}  \bE \bigg(  \sum_{ \substack{ \gamma_1\in \Gp^{ik}, \gamma_2 \in \Gp^{jl}: \\ \cV_{\gamma_1}\cap \cV_{\gamma_2}=\emptyset }} \prod_{\substack{ e_1\in \gamma_1,  e_2\in \gamma_2
		 } } \beta^2 g_{e_1} g_{e_2}  +  \sum_{ \substack{ \gamma_1\in \Gp^{jk}, \gamma_2 \in \Gp^{il}: \\ \cV_{\gamma_1}\cap \cV_{\gamma_2}=\emptyset }} \prod_{\substack{ e_1\in \gamma_1,  e_2\in \gamma_2
		 } } \beta^2 g_{e_1} g_{e_2}\bigg) \\
		&\hspace{2.8cm}\times\bigg(  \sum_{ \substack{ \gamma'_1\in \Gp^{ij}, \gamma'_2 \in \Gp^{kl}: \\\cV_{\gamma'_1}\cap \cV_{\gamma'_2}=\emptyset }}\prod_{\substack{ e'_1\in \gamma'_1,  e'_2\in \gamma'_2
		 } } \beta^2 g_{e'_1} g_{e'_2} +  \sum_{ \substack{ \gamma'_1\in \Gp^{jl}, \gamma'_2 \in \Gp^{ik}: \\ \cV_{\gamma'_1}\cap \cV_{\gamma'_2}=\emptyset }} \prod_{\substack{ e'_1\in \gamma'_1,  e'_2\in \gamma'_2
		 } } \beta^2 g_{e'_1} g_{e'_2} \bigg) \\
		 & = \frac{\beta^4}{N^2}   \sum_{\substack{ i,j,k,l:\\  i\neq j, i\neq k, i\neq l,\\ j\neq k, j\neq l, k\neq l   }}  \sum_{ \substack{ \gamma\in \Gp^{ik}: j,l\not \in \cV_\gamma }} \frac{ \beta^{2|\gamma|} }{N^{|\gamma|}}   \sum_{ \substack{ \gamma'\in \Gp^{jl}: \cV_{\gamma}\cap \cV_{\gamma'}=\emptyset  }} \frac{ \beta^{2|\gamma'|} }{N^{|\gamma'|}} \\
		 & =O(1),
		 \end{split}  \] 	
so that altogether $ \Delta_3 = O(1)$. Finally, similar computations as before yield that
		\[\begin{split}
		\Delta_5 & = \frac{2\beta^4}{N^2} \sum_{\substack{ i,j,k,l:\\  i\neq j, i\neq l,i\neq k\\ j\neq k, k\neq l, l\neq j   }} \sum_{\gamma\in\Gp^{kl}: i,j\not \in \cV_{\gamma}} \frac{\beta^{2|\gamma|}} {N^{|\gamma|}} = \frac{2\beta^6}{(1-\beta^2)}N  + O(1). 
		\end{split}\]
		
Collecting the above identities and plugging them into \eqref{eq:form1}, we conclude that
		\[  \bE   \| \tbf P( 1+\beta^2 - \beta \tbf G) - \text{id}_{\bR^N}  \|_{\text F}^2 = c_\beta N + O(1) \] 
for a constant $c_\beta$ that in fact turns out to vanish: we find that
		\[\begin{split}
		c_\beta & = (1-\beta^2)^{-1}(1+\beta^2)^2   +  (1-\beta^2)^{-1}\beta^2  + 3\beta^4 +  3(1-\beta^2)^{-1}\beta^6 + 1   \\
		& \hspace{0.5cm} -2(1+\beta^2) -2(1+\beta^2) \big(2\beta^2+2(1-\beta^2)^{-1}\beta^4 \big) + 2\beta^2\\
		& = (1-\beta^2)^{-1} \big ( (1+1-2) + ( 2\beta^2+\beta^2-\beta^2-4\beta^2+2\beta^2 ) \\
		&\hspace{2.5cm}  + (\beta^4+3\beta^4+2\beta^4-4\beta^4-2\beta^4)   \big)  \\
		&= 0.
		\end{split} \]   

It is clear from the above identities that with some additional computional effort one can deduce the $O(1)$ contribution to $  \bE   \| \tbf P( 1+\beta^2 - \beta \tbf G) - \text{id}_{\bR^N}  \|_{\text F}^2$ (up to errors of size $O(N^{-1})$) . Since such an asymptotics has already been derived explicitly in \cite{AG} (for the quantity $\bE   \| \tbf M( 1+\beta^2 - \beta \tbf G) - \text{id}_{\bR^N}  \|_{\text F}^2$) and since this is of no relevance for the proof of Theorem \ref{thm:main}, we omit the details.




\begin{thebibliography}{55}

\bibitem{ABSY} A. Adhikari, C. Brennecke, P. von Soosten, H.-T. Yau. Dynamical Approach to the TAP Equations for the Sherrington-Kirkpatrick Model. \emph{J. Stat. Phys.} \textbf{183}, 35 (2021). 

\bibitem{ALR} M. Aizenman, J. L. Lebowitz, D. Ruelle. Some rigorous results on the Sherrington-Kirkpatrick spin glass model. \emph{Comm. Math. Phys.} \textbf{11}, pp. 3-20 (1987).

\bibitem{An} G. E. Andrews. The Theory of Partitions. \emph{Cambridge University Press} (1984).

\bibitem{BY} Z.D. Bai, Y.Q. Yin. Necessary and sufficient conditions for almost sure convergence of the largest eigenvalue of a Wigner matrix. Ann. Probab. \textbf{16}, no. 4, 1729-1741 (1988).

\bibitem{BMCRT} X. Bardina, D. M\'arquez-Carreras, C. Rovira, S.Tindel. Higher order expansions for the overlap of the SK model. \emph{Seminar on Stochastic Analysis, Random Fields and Applications IV}, Springer, pp. 21-43 (2004).

\bibitem{BB} R. Bauerschmidt, T. Bodineau. A very simple proof of the {LSI} for high temperature spin systems. \emph{J. Funct. Anal.} \textbf{276}, no. 8, pp. 2582-2588 (2019). 

\bibitem{BM} A. Bondy, U. R. S. Murty. Graph Theory.  \emph{Graduate Texts in Mathematics}, vol. \textbf{244}, Springer London, (2008).

\bibitem{AG} A. El Alaoui, J. Gaitonde. Bounds on the covariance matrix of the Sherrington-Kirkpatrick model. Preprint: arXiv:2212.02445.

\bibitem{Par1} G. Parisi. Infinite number of order parameters for spin-glasses. \emph{Phys. Rev. Lett.} \textbf{43}, pp. 1754-1756 (1979).

\bibitem{Par2} G. Parisi. A sequence of approximate solutions to the S-K model for spin glasses. \emph{J. Phys. A} \textbf{13}, pp. L-115 (1980).


\bibitem{SK} D. Sherrington, S. Kirkpatrick. Solvable model of a spin glass. \emph{Phys. Rev. Lett.} \textbf{35}, pp. 1792-1796 (1975).

\bibitem{Tal2} M. Talagrand. Mean Field Models for Spin Glasses. Volume II: Advanced Replica-Symmetry and Low Temperature. A Series of Modern Surveys in Mathematics, \textbf{Vol. 54}, \emph{Springer Verlag Berlin-Heidelberg} (2011).

\bibitem{TAP} D. J. Thouless, P. W. Anderson, R. G. Palmer. Solution of 'solvable model in spin glasses'. \emph{Philos. Magazin} \textbf{35}, pp. 593-601 (1977).

\end{thebibliography}
\end{document}